\setlist{listparindent=\parindent,
  parsep=0pt}
\newcommand{\floor}[1]{\lfloor{#1}\rfloor}
\newcommand{\ceil}[1]{\lceil{#1}\rceil}
\newcommand{\eq}[1]{(\ref{eq:#1})}
\newcommand{\thm}[1]{\hyperref[thm:#1]{Theorem~\ref*{thm:#1}}}
\newcommand{\defn}[1]{\hyperref[defn:#1]{Definition~\ref*{defn:#1}}}
\newcommand{\lem}[1]{\hyperref[lem:#1]{Lemma~\ref*{lem:#1}}}
\newcommand{\prop}[1]{\hyperref[prop:#1]{Proposition~\ref*{prop:#1}}}
\newcommand{\fig}[1]{\hyperref[fig:#1]{Figure~\ref*{fig:#1}}}
\newcommand{\tab}[1]{\hyperref[tab:#1]{Table~\ref*{tab:#1}}}
\renewcommand{\sec}[1]{\hyperref[sec:#1]{Section~\ref*{sec:#1}}}
\newcommand{\append}[1]{\hyperref[append:#1]{Appendix~\ref*{append:#1}}}
\newcommand{\cor}[1]{\hyperref[cor:#1]{Corollary~\ref*{cor:#1}}}
\newcommand{\N}{\mathbb{N}}
\newcommand{\comment}[1]{}
\newcommand{\ketbra}[2]{\ket{#1}\!\bra{#2}}
\newenvironment{proof-sketch}{%
	\proof}{\endproof}
\newtheorem{theorem}{Theorem}
\newtheorem{lemma}[theorem]{Lemma}
\newtheorem{proposition}[theorem]{Proposition}
\newtheorem{corollary}[theorem]{Corollary}
\newcommand{\qw}[1][-1]{\ar @{-} [0,#1]}
\newcommand{\gate}[1]{*+<.6em>{#1} \POS ="i","i"+UR;"i"+UL **\dir{-};"i"+DL **\dir{-};"i"+DR **\dir{-};"i"+UR **\dir{-},"i" \qw}
\newcommand{\Qcircuit}{\xymatrix @*=<0em>}
\begin{document}

\title{Quantum Algorithms for Simulating the Lattice Schwinger Model}

\author[1,5]{Alexander F. Shaw}
\author[1]{Pavel Lougovski}
\author[2]{Jesse R. Stryker }
\author[3,4]{Nathan Wiebe}

\affil[1]{Quantum Information Science Group, Computational Sciences and Engineering Division, Oak Ridge National Laboratory, Oak Ridge, TN 37831, U.S.A.}
\affil[2]{Institute for Nuclear Theory, University of Washington, Seattle, WA 98195-1550, U.S.A.}
\affil[3]{Department of Physics, University of Washington, Seattle, WA 98195, U.S.A.}
\affil[4]{Pacific Northwest National Laboratory, Richland, WA 99354, U.S.A.}
\affil[5]{Department of Physics, University of Maryland, College Park, Maryland 20742, U.S.A.}

\date{August 5, 2020}

\begin{abstract}
\onecolumn
The Schwinger model (quantum electrodynamics in 1+1 dimensions) is a testbed for the study of quantum gauge field theories.
We give scalable, explicit digital quantum algorithms to simulate the lattice Schwinger model in both NISQ and fault-tolerant settings.
In particular, we perform a tight analysis of low-order Trotter formula simulations of the Schwinger model, using recently derived commutator bounds, and give upper bounds on the resources needed for simulations in both scenarios.
In lattice units, we find a Schwinger model on $N/2$ physical sites with coupling constant $x^{-1/2}$ and electric field cutoff $x^{-1/2}\Lambda$ can be simulated on a quantum computer for time $2xT$ using a number of $T$-gates or CNOTs in $\widetilde{O}( N^{3/2} T^{3/2} \sqrt{x} \Lambda )$ for fixed operator error.
This scaling with the truncation $\Lambda$ is better than that expected from algorithms such as qubitization or QDRIFT.
Furthermore, we give scalable measurement schemes and algorithms to estimate observables which we cost in both the NISQ and fault-tolerant settings by assuming a simple target observable---the mean pair density.
Finally, we bound the root-mean-square error in estimating this observable via simulation as a function of the diamond distance between the ideal and actual CNOT channels.
This work provides a rigorous analysis of simulating the Schwinger model, while also providing benchmarks against which subsequent simulation algorithms can be tested.
\end{abstract}

\section{Introduction}
The 20th century saw tremendous success in discovering and explaining the properties of Nature at the smallest accessible length scales, with just one of the crowning achievements being the formulation of the Standard Model.
Though incomplete, the Standard Model explains most familiar and observable phenomena as arising from the interactions of fundamental particles---leptons, quarks, and gauge bosons.
Effective field theories similarly explain emergent phenomena that involve composite particles, such as the binding of nucleons via meson exchange.
Many impressive long-term accelerator experiments and theoretical programs have supported the notion that these quantum field theories are adequate descriptions of how Nature operates at subatomic scales.

Given an elementary or high-energy quantum field theory, the best demonstration of its efficacy is showing that it correctly accounts for physics observed at longer length scales.
Progress on extracting long-wavelength properties has indeed followed. For example, in quantum chromodynamics (QCD), the proton is a complex hadronic bound state that is three orders of magnitude heavier than any of its constituent quarks.
However, methods have been found to derive its static properties directly from the theory of quarks and gluons;
using lattice QCD, calculations of the static properties of hadrons and nuclei with controlled uncertainties at precisions relevant to current experimental programs have been achieved~\cite{Yamazaki:2009ua,Beane:2012ey,Beane:2012vq,Yamazaki:2012hi,Inoue:2014ipa,Yamazaki:2015asa,Kirscher:2015yda,Contessi:2017rww,Iritani:2018zbt}.
It is even possible with lattice QCD to study reactions of nucleons~\cite{nplqcdcollaboration.beane.eaInitioCalculation15,nplqcdcollaboration.savage.eaProtonProtonFusion17}.
While scalable solutions for studies like these have been found, they are still massive undertakings that require powerful supercomputers to carry them out.

Generally, however, scalable calculational methods are not known for predicting non-perturbative emergent or dynamical properties of field theories.
Particle physics, like other disciplines of physics, generically suffers from a quantum many-body problem.
The Hilbert space grows exponentially with the system size to accommodate the entanglement structure of states and the vast amount of information that can be stored in them. This easily overwhelms any realistic classical-computing architecture.
Those problems that are being successfully studied with classical computers tend to benefit from a path integral formulation of the problem;
this formulation trades exponentially-large Hilbert spaces for exponentially-large configuration spaces, but in these particular situations it suffices to coarsely importance-sample from the high-dimensional space with Monte Carlo methods~\cite{PhysRevD.10.2445,creutz1980monte}.
Analytic continuation to imaginary time (Euclidean space) is key to this formulation.
Real-time processes are intrinsically ill-suited for importance-sampling in Euclidean space, and they (along with finite-chemical-potential systems) are usually plagued by exponentially hard ``sign problems.''

Quantum computing offers an alternative approach to simulation that should avoid such sign problems by returning to a Hamiltonian point of view and leveraging controllable quantum systems to simulate the degrees of freedom we are interested in~\cite{feynman1982, lloyd1996universal,lanyon2010towards,Jordan_Lee_Preskill_2012}.
The idea of simulation is to map the time-evolution operator of the target quantum system into a sequence of unitary operations that can be executed on the controllable quantum device---the quantum computer.
The quantum computer additionally gives a natural platform for coherently simulating the entanglement of states that is obscured by sampling classical field configurations.
We will specifically consider digital quantum computers, i.e., the desired operations are to be decomposed into a set of fundamental quantum gates that can be realized in the machine.
Inevitable errors in the simulation can be controlled by increasing the computing time used in the simulation (often by constructing a better approximation of the target dynamics on the quantum computer) or by increasing the memory used by the simulation (e.g. increasing the number of qubits to better approximate the continuous wave function of the field theory with a fixed qubit register size, through latticization and field digitization).

Beyond (and correlated with) the qubit representation of the field, it is important to understand the unitary dynamics simulation cost in terms of the number of digital quantum gates required to reproduce the dynamics to specified accuracy on a fault tolerant quantum computer.
Furthermore, since we are currently in the so-called Noisy Intermediate Scale Quantum (NISQ) era, it is also important to understand the accuracy that non-fault tolerant hardware can achieve as a function of gate error rates and the available number of qubits.
For many locally-interacting quantum systems, the simulation gate count is polynomial in the number of qubits~\cite{lloyd1996universal,berry2007efficient}, but for field theories, even research into the scaling of these costs is still in its infancy.

While strides are being made in developing the theory for quantum simulation of gauge field theories \cite{%
byrnes.yamamotoSimulatingLattice06,
banerjee.bogli.eaAtomicQuantum13,
zohar.cirac.eaColdAtomQuantum13,
tagliacozzo.celi.eaSimulationNonAbelian13,
wieseUltracoldQuantum13,
wieseQuantumSimulating14,
rico.pichler.eaTensorNetworks14,
mezzacapo.rico.eaNonAbelianSU15,
martinez.muschik.eaRealtimeDynamics16,
muschik.heyl.eaWilsonLattice17,
zohar.farace.eaDigitalQuantum17,
Klco:2018kyo,
davoudi.hafezi.eaAnalogQuantum19,
magnifico2019z,
nuqscollaboration.lamm.eaGeneralMethods19,
alexandru.bedaque.eaGluonField19,
klco.stryker.eaSUNonAbelian19,
raychowdhury.strykerLoopString19,
banuls.blatt.eaSimulatingLattice19,
schweizer.grusdt.eaFloquetApproach19,
avkhadiev.shanahan.eaAcceleratingLattice19,
harmalkar.lamm.eaQuantumSimulation20},
much remains to be explored in terms of developing explicit gate implementations that are scalable and can ultimately surpass what is possible with classical machines.
This means that there is much left to do before we will even know whether or not quantum computers suffice to simulate broad classes of field theories.

In this work, we report explicit algorithms that could be used for time evolution in the (massive) lattice Schwinger model, a gauge theory in one spatial dimension that shares qualitative features with QCD~\cite{schwinger1962model}---making it a testbed for the gauge theories that are so central to the Standard Model. 
Due to the need to work with hardware constraints that will change over time, we specifically provide algorithms that could be realized in the NISQ era, followed by efficient algorithms suitable for fault-tolerant computers where the most costly operation has transitioned from entangling gates (CNOT gates), to non-Clifford operations ($T$-gates).
We prove upper bounds on the resources necessary for quantum simulation of time evolution under the Schwinger model, finding sufficient conditions on the number and/or accuracy of gates needed to guarantee a given error between the output state and ideal time-evolved state.
Additionally, we give measurement schemes with respect to an example observable (mean pair density) and quantify the cost of simulation, including measurement, within a given rms error of the expectation value of the observable.
{We also provide numerical analyses of these costs, which can be used when determining the quantum-computational costs involved with accessing parameter regimes beyond the reach of classical computers.}

{Outside the scope of this article, but necessary for a comprehensive analysis of QFT simulation, are initial state preparation, accounting for parameter variation in extrapolating to the continuum, and rigorous comparison to competing classical algorithms;
we leave these tasks to subsequent work.}


The following is an outline of our analysis.
In \sec{schwinger}, we describe the Schwinger Model Hamiltonian, determine how to represent our wavefunction in qubits (\sec{hamrep}), then describe how to implement time evolution using product formulas (\sec{trottime}, \sec{trotschwinger}) and why we choose this simulation method (\sec{compare}).
Additionally, we define the computational models we use to cost our algorithms in both the near- and far-term settings (\sec{modeldef}).
In \sec{nisqimplement} and \sec{circuiterr} we give our quantum circuit implementations of time evolution under the Schwinger model for the near- and far-term settings, respectively.
Then, in \sec{faulttolerant}, we determine a sufficient amount of resources to simulate time evolution as a function of simulation parameters in the far-term settings and discuss how our results would apply in near-term analyses.
With our implementations of time evolution understood, we then turn to the problem of measurement in \sec{measurement}, where we analyze two methods: straightforward sampling, and a proposed measurement scheme that uses amplitude estimation.
We apply our schemes to a specific observable -- the mean pair density -- and cost out full simulations (time evolution and measurement) in \sec{fullsim}.
We report numerical analysis of the costing in the far-term setting in \append{costnumeric}.
We then discuss how to take advantage of the geometric locality of the lattice Schwinger model Hamiltonian may be used in simulating time evolution or in estimating expectation values of observables in \sec{locality} before concluding.

\section{Schwinger Model Hamiltonian}\label{sec:schwinger}
Quantum electrodynamics in one spatial dimension, the Schwinger model \cite{schwinger1962model}, is perhaps the simplest gauge theory that  shares key features of QCD such as confinement and spontaneous breaking of chiral symmetry.
The continuum Lagrangian density of the Schwinger model is given by
\begin{eqnarray}
{\cal L} & = & 
- {1\over {4g^2}}\sum_{\mu,\nu =0}^{1} F_{\mu\nu}F^{\mu\nu} + \sum_{\alpha,\beta=1}^{2} \overline{\psi}_\alpha \left(i \sum_{\mu} \gamma^\mu D_\mu - m\right)_{\alpha \beta} \psi_{\beta}
\ \ \ .
\end{eqnarray}
Here, $F_{\mu\nu}=\partial_\mu A_\nu - \partial_\nu A_\mu$ is the field strength tensor, $\psi$ is a two-component Dirac field with $\bar{\psi}=\psi^\dagger \gamma^0 $ its relativistic adjoint, and $D_\mu = \partial_\mu -i A_\mu $ is the covariant derivative acting on $\psi$.

In their seminal paper~\cite{KS1975} Kogut and Susskind introduced a Hamiltonian formulation of Wilson's action-based lattice gauge theory~\cite{PhysRevD.10.2445} in the context of SU(2) lattice gauge theory.
The analogous Hamiltonian for compact U(1) electrodynamics can be written as~\cite{banksKS1976strong}
\begin{equation}\label{eq:ham}
    H = H_E + H_I + H_M \ ,
\end{equation}
with
\begin{align}
    H_E &=  \sum_{r} E_{r}^2 \label{eq:SchwingerHE} \\
    H_I &= x \sum_r \left[ U_r \psi^\dagger_r \psi_{r+1} -  U_r^\dagger  \psi_r \psi_{r+1}^\dagger \right] \label{eq:SchwingerHI} \\
    H_M &= \mu \sum_r (-)^r \psi^\dagger_r \psi_r, \label{eq:SchwingerHM}
\end{align} 
where $\mu=2m/(ag^2)$ and $x=1/(ag)^2$, with $a$ the lattice spacing, $m$ the fermion mass and $g$ the coupling constant, as described in \cite{banksKS1976strong}.
($H$ has been non-dimensionalized by rescaling with a factor $2 a^{-1} g^{-2}$.)
The electric energy $H_E$ is given in terms of $E_{r}$, the dimensionless integer electric fields living on the links $r$ of the lattice.
For convenience, the most important definitions of symbols used in this paper are tabulated in \sec{notation}.
The interaction or ``hopping'' Hamiltonian $H_I$ corresponds to minimal coupling of the Dirac field to the gauge field.
On the lattice, the (temporal-gauge) gauge field $A_{\mu}=(0,A_1)$ is encoded in the unitary ``link operators'' $U_{r}$, which are complex exponentials of the vector potential,
\begin{align}
    U_{r} &= e^{i a A_{r}}.
\end{align}
These are defined to run ``from site $r$ to site $r+1$.''
The continuum commutation relations of $E$ and $A$ translate to the lattice commutation relations
\begin{align}\label{eq:linkcom}
    [E_{r}, U_{s}] &= +U_{r} \delta_{rs} \\
    [E_{r}, U_{s}^\dagger] &= -U_{r}^\dagger \delta_{rs} .
\end{align}
The fermionic operators $\psi_r,\psi_r^\dagger$ coupled to the link operators live on lattice vertices $r$ and satisfy the anticommutation relations
\begin{align}
  \{ \psi_r , \psi_s \} = \{ \psi_r^\dagger , \psi_s^\dagger \} &= 0 \ ,\\
    \{\psi_r , \psi_s^\dagger \} & = \delta_{rs}.
\end{align}
Finally, $H_M$ is the mass energy of the fermions;
the alternating sign $(-)^r$ corresponds to the use of staggered fermions~\cite{KS1975}.

The full Hamiltonian acts on a Hilbert space characterized by the on-link bosonic Hilbert spaces and the on-site fermionic excitations.
Moreover, the physical interpretation of the fermionic Hilbert spaces is
\begin{align*}
    \text{occupied even site } &\sim \text{ presence of a positron} \\
    \text{empty odd site } &\sim \text{ presence of an electron},
\end{align*}
so the operator that counts electric charge at site $r$ is given by
\begin{equation}
    \rho_r = \psi_r^\dagger \psi_r + \frac{(-)^r - 1}{2}.
\end{equation}
Gauge invariance leads to a local constraint, Gauss's law, relating the states of the links to the states of the sites:
\begin{align}
    \mathcal{G}_r &= E_r - E_{r-1} - \rho_r \\
    \mathcal{G}_r \ket{\text{physical state}} &= 0.
\end{align}
These $\mathcal{G}_r$ operators generate gauge transformations and must commute with the full Hamiltonian.
It is well-known that in $1$d one can gauge-fix all the links (except for one in a periodic volume), essentially removing the gauge degree of freedom.
Correspondingly, the constraint can be solved exactly to have purely fermionic dynamical variables at the cost of having long-range interactions.
This is not considered in this work since it is not representative of the situations encountered in higher dimensions.

\subsection{Qubit Representation of the Hamiltonian}\label{sec:hamrep}

A link Hilbert space is equivalent to the infinite-dimensional Hilbert space of a particle on a circle, and it is convenient to use a discrete momentum-like electric eigenbasis $\ket{\varepsilon}_r$, in which $E_r$ takes the form
\begin{equation}\label{Eq:E_r_squared}
    E_r = \sum_\varepsilon \varepsilon \ket{\varepsilon}_r \bra{\varepsilon}_r \quad \Rightarrow E_r^2 = \sum_\varepsilon \varepsilon^2 \ket{\varepsilon}_r \bra{\varepsilon}_r 
\end{equation}
In this basis, the link operators $U_r$ and $U_r^\dagger$ have simple raising and lowering actions,
\begin{equation}
    U_r = \sum_\varepsilon \ket{\varepsilon+1}_r \bra{\varepsilon}_r \qquad U_r^\dagger = \sum_\varepsilon \ket{\varepsilon-1}_r \bra{\varepsilon}_r.
\end{equation}
The microscopic degrees of freedom  are depicted in \fig{1dlattice}.

In order to represent the system on a quantum computer, we first truncate the Hilbert space.
On the link Hilbert spaces, we do this by periodically wrapping the electric field at a cutoff $\Lambda$:
\begin{equation}\label{eq:ecutoff}
  E_r = \sum_{\varepsilon = - \Lambda}^{\Lambda - 1} \varepsilon \ket{\varepsilon}_r \bra{\varepsilon}_r \ , \qquad U_r \ket{\Lambda-1}_r = \ket{-\Lambda}_r \ , \qquad U_r^\dagger \ket{-\Lambda}_r =\ket{\Lambda-1}_r.
\end{equation}
This modifies the on-link commutation relation \eq{linkcom} at the cutoff:
\begin{equation}
    [E_r, U_r] = U_r - 2\Lambda \ket{-\Lambda}_r \bra{\Lambda - 1}_r,
\end{equation}
\begin{equation}
    \left[ E_{r} , U_{r}^{\dagger} \right] = - U_{r}^{\dagger} + 2 \Lambda \ket{\Lambda - 1}_{r} \bra{- \Lambda}_{r}. 
\end{equation}
This altered theory (which is distinct from simulating a $Z(2\Lambda)$ theory due to $H_E$ remaining quadratic in the electric field) allows the $U_r$ operator to be easily simulated using a cyclic incrementer quantum circuit.
However, the periodic identification requires one to minimize simulating states that approach the cutoff, in order to avoid mapping $\ket{\Lambda - 1} \leftrightarrow \ket{-\Lambda}$ (which we expect to not be an issue in practice).
The asymmetry between the positive and negative bounds is introduced to make the link Hilbert space even-dimensional, so the electric field is more naturally encoded with a register of qubits.
The number of qubits on the link registers is given by 
\begin{equation}
    \eta = \log(2\Lambda),
\end{equation}
where we will implicitly assume that $\Lambda$ is a non-negative power of $2$ and all logarithms are base 2.

We map the electric field states into a computational basis with $\eta$ qubits, which is in a tensor product space of $\eta$ identical qubits.
Here, the Hilbert space of any single qubit is spanned by computational basis states $\ket{0}$ and $\ket{1}$ and the Pauli operators take the form
\begin{align*}
  X = \ket{0}\bra{1} + \ket{1}\bra{0} \ , \quad Y &= -i\ket{0}\bra{1} + i\ket{1}\bra{0} \ , \quad Z = \ket{0}\bra{0} - \ket{1}\bra{1} \ , \\
  \sigma^{\pm} &= \frac{1}{2} (X \pm i Y) \ .
\end{align*}
A non-negative integer $0\leq j < 2^\eta$ is represented on the binary $\eta$-qubit register as
\begin{equation}
    \ket{j} = \Ket{ \sum_{n=0}^{\eta-1} j_n 2^n } = \bigotimes_{n=0}^{\eta-1} \ket{j_n} \ .
\end{equation}
Using this unsigned binary computational basis, the electric eigenbasis $\ket{\varepsilon}$ is encoded via
\begin{equation}\label{eq:emap}
\varepsilon = j - \Lambda.
\end{equation}
The last ingredient to make the theory representable by a quantum computer is a volume cutoff; the lattice is taken to have an even number of sites $N$, corresponding to $N/2$ physical sites.

In addition to the above truncations, we will take ``open'' boundary conditions (the system does not source electric flux beyond its boundary sites) and a vanishing background field.
Note that the Gauss law constraint limits the maximum electric flux saturation possible in this setup:
If $\Lambda > \lceil N/4 \rceil $ then the quantum computer can actually represent the entire (truncated) physical Hilbert space.

\begin{figure}
    \centering
    \includegraphics[]{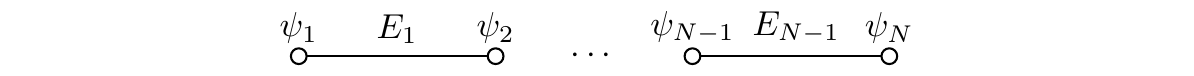}
    \caption{Labeling of the 1d lattice degrees of freedom on a finite, non-periodic lattice.}
    \label{fig:1dlattice}
\end{figure}

The Jordan-Wigner transformation is a simple way to represent fermionic creation/annihilation operators with qubits.
In the transformation, the state $\ket{0}$ encodes the vacuum state for a fermionic mode and $\ket{1}$ represents an occupied state.
For the creation operator $\psi_r^\dagger$, we have
\begin{equation}
    \psi_r^\dagger = \left(X_r - i Y_r \right)/2 \prod_{j=1}^{r-1} Z_j
\end{equation}
(We associate $\psi^\dagger$ with $\sigma^-$ because in the computational Pauli-$Z$ basis $\sigma^-=\ket{1}\bra{0}$.)

Therefore the interaction Hamiltonian \eq{SchwingerHI} can be expressed as
\begin{equation}
H_I= x\sum_{r=1}^{N-1}\left[ U_r \sigma_r^- \sigma_{r+1}^+  +  U_r^\dagger \sigma_r^+ \sigma_{r+1}^- \right]
\end{equation}
\cite{banksKS1976strong} with the Hamiltonian now acting on a tensor product space of link Hilbert spaces and on-site qubit Hilbert spaces.
Tensor product notation for operators will often be supressed;
at times a superscript ${}^{b}$ (${}^{f}$) may be added to an operator $\mathcal{O}_r$ to emphasize that $\mathcal{O}_r$ acts on the $r^{\mathrm{th}}$ bosonic (spin) degree of freedom.
In terms of the Pauli matrices, $H_I$ can be reexpressed as
\begin{align}
H_I &=  \frac{1}{4} x \sum_{r} \left[ \left(U_r + U_r^\dagger  \right)\left(X_r X_{r+1} +Y_r Y_{r+1} \right) + i \left(U_r - U_r^\dagger  \right)\left(X_r Y_{r+1} - Y_r X_{r+1} \right) \right] .
  \label{eq:JWHam}
\end{align}
$H_I$ in this one-dimensional model is seen to involve only nearest-neighbor fermionic interactions after Jordan-Wigner transformation, but in higher dimensions one will generally have to deal with non-local Pauli-$Z$ strings.
\noindent
Note that in~\eqref{eq:JWHam} the Hamiltonian manifestly expands to a sum of unitary operators.
This is true in spite of truncation due to the periodic wrapping of $U_r$ that preserves its unitarity.
This observation will be crucial to our development of simulation circuits.

\subsection{Trotterized Time Evolution}\label{sec:trottime}
Perhaps the central task in simulating dynamics on quantum computers is to compile the evolution generated by the Hamiltonian into a sequence of implementable quantum gates.
Trotter-Suzuki decompositions were the first methods that were proposed to efficiently simulate quantum dynamics on a quantum computer~\cite{lloyd1996universal,zalka1998simulating,boghosian1998simulating}.
The idea behind these methods is that one first begins by decomposing the Hamiltonian into a sum of the form $H=\sum_j H_j$ where each $H_j$ is a simple Hamiltonian for which one can design a brute force circuit to simulate its time evolution.
Examples of elementary $H_j$ that can be directly implemented include Pauli-operators~\cite{lloyd1996universal}, discretized position and momentum operators~\cite{zalka1998simulating,kivlichan2017bounding} and one-sparse Hamiltonians~\cite{aharonov2003adiabatic,berry2007efficient,childs2010simulating}.

The simplest Trotter-Suzuki formula used for simulating quantum dynamics is
\begin{align}\label{eq:trot}
    U_1(t)&:=\prod_{j=1}^m e^{-iH_jt}
\end{align}
Here we implicitly take all products to be lexicographically ordered in increasing order of the index $j$ of $H_j$ from right to left in the product.
Similarly we use $\prod_{j=m}^1$ to represent the opposite ordering of the product.
The error in the approximation can be bounded above by~\cite{huyghebaert1990product,Yuan}
\begin{align}
    \left\|e^{-iHt}-U_1(t)\right\| &\leq \frac{1}{2}\sum_{x>y}\|[H_x,H_y] \|t^2.
    \label{eq:u1bound}
\end{align}
where $\|\cdot\|$ is the spectral norm.
This shows that not only is the error in the approximation at second order in $t$, but it is also zero when the operators in question commute.
In general, the error in a sequence of such approximations can be bounded using Box 4.1 of~\cite{nielsen2002quantum} which states that for any unitary $U$ and $V$ 
\begin{equation}
\label{eq:nielsenChuang4.1}
    \| U^s - V^s\| \le s \|U-V\|.
\end{equation}
This implies that if we break a total simulation interval $T$ into $s$ time steps, then the error can be bounded by
\begin{equation}
    \left\|e^{-iHT}-U_1^s(T/s)\right\|  \leq\frac{1}{2}\sum_{x>y}\|[H_x,H_y] \|\frac{T^2}{s}.
\end{equation} 
Therefore if the propagators for each of the terms in the Hamiltonian can be individually simulated, then $e^{-i HT}$ can be also simulated within arbitrarily small error by choosing a value of $s$ that is sufficiently large.

The simple product formula in~\eqref{eq:trot} is seldom optimal for dynamical simulation because the value of $s$ needed to perform the simulation within $\epsilon$ spectral-norm error grows as $O(T^2/\epsilon)$.
This can be improved by switching to the symmetric Trotter-Suzuki formula, also known as the Strang splitting, which is found by symmetrizing the first-order Trotter-Suzuki formula to cancel out the even-order errors.
This formula reads (see~\cite{wecker2014gate} and~\cite{Yuan})
\begin{equation}
 \label{eq:2ndordertrota}
     U_2(t) :=\prod_{j=1}^m e^{-iH_jt/2}\prod_{j=m}^1 e^{-iH_jt/2} 
\end{equation}
\begin{equation}
 \label{eq:2ndordertrot}
    \left\|e^{-iHt}-U_2(t)\right\|\le \frac{1}{12}\sum_{x,y>x} \|[[H_x,H_y],H_x]\|t^3 + \frac{1}{24}\sum_{x,y>x,z>x} \|[[H_x,H_y],H_z]\|  t^3.
\end{equation}
This expression can be seen to be tight (up to a single application of the triangle inequality) in the limit of $t\ll 1$ from the Baker-Campbell-Hausdorff formula~\cite{wecker2015solving, Yuan}.
The improved error scaling similarly leads to a number of time-steps that scales as $O(T^{3/2}/\sqrt{\epsilon})$.
Higher-order variants can be constructed recursively \cite{doi:10.1063/1.529425} to achieve still better scaling~\cite{wiebe2010higher}, via
\begin{align}
    U_{2k+2}(t) &= U_{2k}^2(s_kt)U_{2k}([1-4s_k]t)U_{2k}^2(s_kt)\nonumber\\
    \|e^{-iHt} - U_{2k+2}(t) \| &\le 2\left(2(k+1)m(5/3)^{k}\max_x \|H_x\|t\right)^{2k+3}.
    \label{eq:trotorder}
\end{align}
\noindent
where
\begin{equation}
    s_k = (4-4^{1/(2k+1)})^{-1}.
\end{equation}
Until very recently, the scaling of the error in these formulas as a function of the nested commutators of the Hamiltonian terms was not known beyond the fourth-order product formula.
Recent work, however, has shown the following tighter bound for Trotter-Suzuki formulas:

\begin{align}
    \|e^{-iHt} - U_{2k+2}(t) \| &\le 4 \frac{5^{k}\cdot (2\cdot 5^k)^{2k+2} t^{2k+3}}{(2k+3)} \sum_{\gamma_{2k+3}=1}^M\cdots \sum_{\gamma_{2}=1}^M \|[H_{\gamma_{2k+3}},\ldots, [H_{\gamma_2},H_{\gamma}]]\|\nonumber\\
    &= \frac{5^{2k^2+3k}4^{k+2} t^{2k+3}}{(2k+3)} \sum_{\gamma_{2k+3}=1}^M\cdots \sum_{\gamma_{2}=1}^M \|[H_{\gamma_{2k+3}},\ldots, [H_{\gamma_2},H_{\gamma}]]\|.
\end{align}
This specifically follows from Eqns. (189-191) of~\cite{Yuan}.

A major challenge that arises in making maximal use of this expression is due to the fact that there are many layers of nested commutators.
This means that, although this expression closely predicts the error in Trotter-Suzuki expansions~\cite{Yuan}, precise knowledge of the commutators is needed to best leverage this formula.
As the number of commutators in the series grows as $M^{2k+3}$ this makes explicit calculation of the commutators costly for all but the smallest values of $M$ and $k$~\cite{babbush2015chemical,kivlichan2018quantum}.
This can be ameliorated to some extent by estimating the sum via a Monte-Carlo average~\cite{reiher2017elucidating}, but millions of norm evaluations are often still needed to reduce the variance of the estimate to tolerable levels.
Such sampling further prevents the results from being used as an upper bound without invoking probabilistic arguments and loose-tail probability bounds such as the Markov inequality.
Even once these bounds are evaluated, it should be noted that the prefactors in the analysis are unlikely to be tight~\cite{Yuan} which means that in order to understand the true performance of high-order Trotter formulas for lattice discretizations of the Schwinger model, it may be necessary to attempt to extrapolate the empirical performance from small-scale numerical simulations of the error.

These higher-order Trotter-Suzuki formulas  also are seldom preferable for simulating quantum dynamics for modest-sized quantum systems.
Indeed, many studies have verified that low-order formulas (such as the second-order Trotter formula) actually can cost fewer computational resources than their high-order brethren~\cite{childs2018toward,wecker2014gate,reiher2017elucidating} for realistic simulations.
Furthermore, the second-order formula can also outperform asymptotically-superior simulation methods such as qubitization~\cite{childs2018toward,low2019hamiltonian} and linear-combinations of unitaries (LCU) simulation methods~\cite{childs2012hamiltonian,berry2015simulating}.
This improved performance is anticipated to be a consequence of the fact that the Trotter-Suzuki errors depend on the commutators, rather than the magnitude of the Hamiltonian terms' coefficients as per~\cite{low2019hamiltonian,childs2012hamiltonian,berry2015simulating}.

For the above reasons (as well as the fact that tight and easily-evaluatable error bounds exist for the second-order formula), we choose to use the second-order Trotter formula in the following discussion and explicitly evaluate the commutator bound for the error given in~\eqref{eq:2ndordertrot}, which is the tightest known bound for the second-order Trotter formula.

\subsection{Comparison to Qubitization and Linear Combination of Unitaries Methods}\label{sec:compare}
One aspect in which the second-order Trotter formulas that we use perform well relative to popular methods, such as qubitization~\cite{low2019hamiltonian}, linear combinations of unitaries~\cite{childs2012hamiltonian,low2019hamiltonian,berry2015simulating} or their classically-controlled analogue QDRIFT~\cite{campbell2019random,berry2019time}, is that the complexity of the Hamiltonian simulation scales better with the size of the electric cutoff, $\Lambda$.
The complexities of qubitization, LCU and QDRIFT all scale linearly with the sum of the Hamiltonian terms' coefficients when expressed as a sum of unitary matrices.
This leads to a scaling  of the Trotter step number with $\Lambda$ of $\widetilde{O}(\Lambda^2)$.
Instead, it is straightforward to note that the norm of the commutators $[E_r^2,[E_r^2,H_I]]$ scale as $O(\Lambda^2)$ and that all other terms scale at most as $O(\Lambda^2)$ (see~\lem{combound} for more details).
This leads to a number of Trotter steps needed for the second-order Trotter-Suzuki formula that is in $O(\Lambda)$ from~\eqref{eq:2ndordertrot}.
Thus, accounting for the logarithmic-sized circuits that we will use to implement these terms, the total cost for the simulation is in $\widetilde{O}(\Lambda)$, which is (up to polylogarithmic factors) quadratically better than LCU or qubitization methods and without the additional spatial overheads they require~\cite{childs2018toward}.

The fact that Trotter-Suzuki methods can be the preferred method for simulating discretizations of unbounded Hamiltonians is also noted in~\cite{somma2016trotter} where commutator bounds are used to show that quartic oscillators can be simulated more efficiently with Trotter-series decompositions than would be expected from a standard implementation of qubitization or LCU methods.
This is why we focus on Trotter methods for this study here.
It is left for subsequent work to examine the performance of post-Trotter methods discussed above.

\subsection{Trotter-Suzuki Decomposition for the Schwinger Model}\label{sec:trotschwinger}

In order to form a Trotter-Suzuki formula, we decompose the Schwinger model Hamiltonian into a sum of simulatable terms.
Ideally, each term would also be efficiently simulatable.
Examples of such terms are Pauli operators or operators diagonal in the computational basis (with efficiently computable diagonal elements)~\cite{nielsen2002quantum,berry2007efficient}.

First we define $T_r$ to be the interaction term coupling sites $r$ and $r+1$ and define $D_r$ to be the (diagonalized) mass plus electric energy associated to site $r$:
\begin{align}\label{eq:terms}
    T_r &:= x\left(  \frac{1}{4} (U_r + U_r^\dagger)(X_r X_{r+1} + Y_r Y_{r+1}) + \frac{i}{4} (U_r - U_r^\dagger)(X_r Y_{r+1} - Y_r X_{r+1})\right),  \nonumber \\
    D_r^{(M)} &:= -\frac{\mu}{2}(-1)^r Z_r \quad \text{and} \quad D_r^{(E)} := E_r^2 (1- \delta_{r,N}),
\end{align}
where we further define
\begin{equation}\label{eq:Sumterms}
    H = \sum_{r=1}^{N} \ ( T_r + D_r) \ , \quad \text{with} \quad D_r := D_r^{(M)} + D_r^{(E)}.
\end{equation}
The $D_r$ are each of a sum of two terms that commute so that
\begin{equation}
    e^{-iD_r t} = e^{-iD_r^{(M)} t}e^{-iD_r^{(E)}t}.
\end{equation}
The $T_r$ are ``hopping'' terms, which we further decompose in \sec{kineticimplementRC} into four non-commuting Hermitian terms $T_r^{(i)}$:
\begin{equation}
    T_r = T_r^{(1)} + T_r^{(2)} + T_r^{(3)} + T_r^{(4)}.
\end{equation}
Using these terms, we determine the corresponding resource scaling of second-order simulation \eq{2ndordertrota}.
Explicitly, we choose to approximate the time-evolution operator with
\begin{equation}
\label{eq:timestep}
    V(t) := \prod_{r=1}^{N-1} \left ( \prod_{k\in \{M,E\}} e^{-i D_r^{(k)} t/2}\prod_{j=1}^4e^{-i T_r^{(j)} t/2}  \right )e^{-i D_N^{(M)} t} \prod_{r=N-1}^1  \left ( \prod_{j=4}^1 e^{-i T_r^{(j)} t/2}\prod_{k\in \{E,M\}} e^{-i D_r^{(k)} t/2}\right ).
\end{equation}
The following sections discuss the computational models we will use to analyze the cost of implementing time evolution and measurement, using the above approximation.

\subsection{Definitions of Noisy Entangling Gate and Fault-Tolerant Models}\label{sec:modeldef}

Throughout the rest of the paper, our analysis will usually operate under one of the two computational models discussed here.

\subsubsection{Noisy Entangling Gate Model}\label{sec:NEG}
With the possible exception of topological qubits, physical realizations of quantum computers will usually be able to perform arbitrary single-qubit rotations accurately and inexpensively.
Instead, two-qubit interactions will often be more time consuming and/or less accurate.
Thus, counting the number of two-qubit gates needed to implement a protocol is a significant metric of the cost of implementing an algorithm in most NISQ platforms.
We call this computational model the NEG (Noisy Entangling Gate) model for brevity.

In particular, here we take a more specific model than the one-qubit model.
We assume that the user has single qubit gates and measurements that are computationally free and also error free.
They also have access to CNOT channels, $\widetilde{C}_x$, that act between any two qubits (meaning that the qubits are on the complete graph) and if we define $C_x$ to be the ideal CNOT channel then $\|C_x - \widetilde{C}_x\|_{\diamond}\le \delta_g$.
Here the diamond norm is defined to be the maximum trace distance between the outputs of the channels over all input states~\cite{watrous}.
This error is assumed to not be directly controllable and thus places a limit on our ability to accurately simulate the quantum dynamics.

Additionally, because fault tolerance is not assumed to hold, we cannot make the simulation error arbitrarily small in the NEG model.
Therefore, the analysis needed in this model is qualitatively different than that needed in the case of fault tolerance.

\subsubsection{Fault-Tolerant Model}
In this model, we choose to minimize the number of non-Clifford operations, specifically the number of $T$-gates, because they are by far the most expensive gates to implement in most approaches to fault tolerance.
This is because the Eastin-Knill theorem prohibits a universal set of fault tolerant gates to be implemented transversally~\cite{eastin2009restrictions}.
Magic state distillation is commonly used to sidestep this restriction, but it is still by far the most costly aspect of this approach to fault tolerance~\cite{bravyi2012magic}.
For this reason, we will consider the cost of circuits to be the number of $T$-gates they require.

\section{Trotter Step Implementation for Noisy Entangling Gate Model}\label{sec:nisqimplement}
Here we lay out the design primitives that we use to simulate the Schwinger model using quantum computers.
The focus here will be on those elements that would be appropriate to use in a resource constrained setting, wherein the number of qubits available is minimal and two-qubit gates are not assumed to be perfect.
Later we will adapt some of the components to provide better asymptotic scaling, which is important for addressing the intrinsic complexity of simulating Schwinger model dynamics.
\subsection{Implementing (Off-Diagonal) Interaction Terms $T$}\label{sec:kineticimplementRC}
The hopping terms in the Hamiltonian govern the local creation and annihilation of electron-positron pairs.
Rewriting the associated Hamiltonian contribution that acts on each site $1\leq r \leq N-1$,
\begin{align}\label{eq:offdiag}
  T_r = x \left(  \frac{1}{4} (U_r + U_r^\dagger)(X_r X_{r+1} + Y_r Y_{r+1}) + \frac{i}{4} (U_r - U_{r+1}^\dagger)(X_r Y_{r+1} - Y_r X_{r+1})\right)\ \ \ .
\end{align}
\noindent
To fully translate this into qubit operators, we use the linear mapping of electric fields \eq{emap} onto a binary register of $\eta$ qubits: $\ket{-\Lambda} \to \ket{00\cdots0}$, $\ket{-\Lambda+1} \to \ket{00\cdots 1}$, and so on up to $\ket{\Lambda-1} \to \ket{11\cdots 1}$.

As discussed before, if the electric field cutoff is chosen such that the dynamical population of states at $\pm \Lambda$ becomes negligible, then the digitized link operator may be wrapped periodically, simplifying the following discussion\footnote{If the truncation is severe, two $C^{\eta}(X)$ gates per Trotter step can be introduced to remove the interaction between $\pm \Lambda$ field values}.
The combinations of link raising and lowering operators appearing in the Hamiltonian then take the form
\begin{equation} \label{eq:gaugeOps}
    U + U^\dagger = \begin{bmatrix} \ddots& &  &  &  & 1 \\  & 0 &1 &0&0& \\ &1&0&1&0& \\ &0&1&0&1& \\ &0&0&1&0&  \\ 1& & & & & \ddots \end{bmatrix}  \quad \text{and} \quad U - U^\dagger = \begin{bmatrix} \ddots& &  &  &  & 1 \\  & 0 &-1 &0&0& \\ &1&0&-1&0& \\ &0&1&0&-1& \\ &0&0&1&0&  \\ -1& & & & & \ddots \end{bmatrix} \ \ \ .
\end{equation}

As written, the operators $U \pm U^\dagger $ couple all possible nearest-neighbor pairs in the ordered binary basis.
Matters are vastly simplified by splitting these operators into terms that are block-diagonal, with each block only mixing a two-dimensional subspace.
This is possible by the decompositions
\begin{align}
   U + U^\dagger &= A+\tilde{A} \\
   A &:= I\otimes \cdots \otimes I\otimes X \text{ or simply } X_0^b \\
   \tilde{A} &:= \mathcal{S}_E^\dagger(I \otimes \cdots \otimes I \otimes X)\mathcal{S}_E = \mathcal{S}_E^\dagger A \mathcal{S}_E
\end{align}
and
\begin{align}
    i(U - U^\dagger) &= B + \tilde{B} \\
    B &:= I \otimes \cdots \otimes I \otimes Y \text{ or simply } Y_0^b \\
    \tilde{B} &:= \mathcal{S}_E^\dagger(I \otimes \cdots \otimes I \otimes Y)\mathcal{S}_E = \mathcal{S}_E^\dagger B \mathcal{S}_E
\end{align}
Above, $\mathcal{S}_E = \sum_{j=0}^{2^\eta -1} \ket{j+1 \text{ (mod $2^\eta$)}} \bra{j} $ is a unit shift operator on the electric basis, numerically identical to the periodically-wrapped $U$, which transforms $\tilde{A}$ ($\tilde{B}$) into $A$ ($B$).
 The decompositions above therefore show how to map the action of the link operator combinations into incrementers, decrementers and actions on an individual qubit.
 
 To implement the incrementer of the link basis $\mathcal{S}_E$, one can use a quantum Fourier transform and single-qubit rotations in Fourier space, as shown in \fig{shiftcircuit}.
An asymptotically advantageous alternative to this implementation using $\eta-1$ ancillary qubits is presented in \sec{kineticimplementFT}.
\begin{figure}
\centering
\begin{equation}
\includegraphics[]{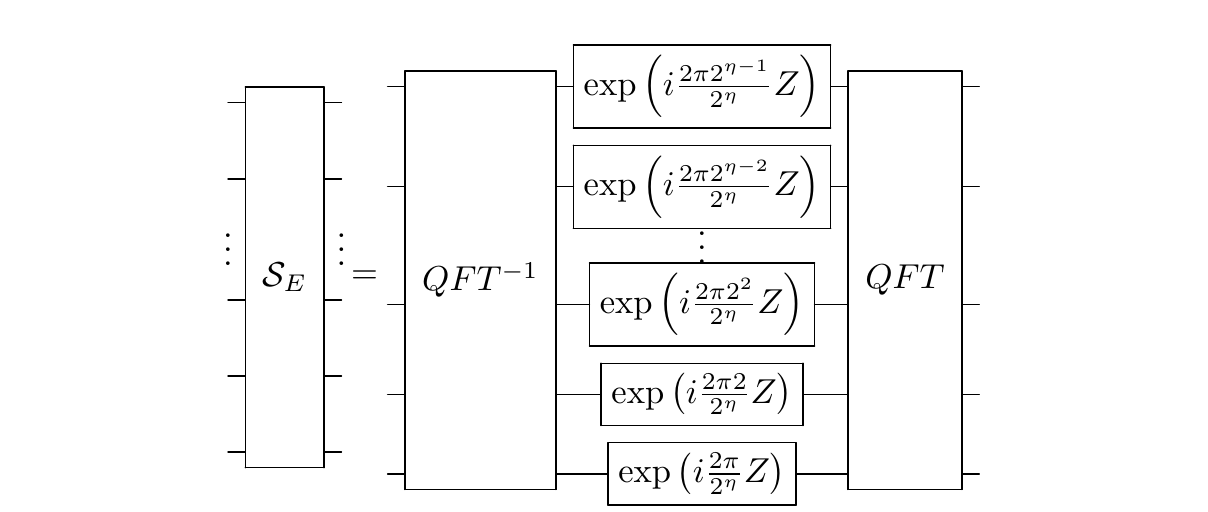}
\end{equation}
\caption{Circuit implementation of the binary increment operator through single-qubit operators in Fourier space.}
\label{fig:shiftcircuit}
\end{figure}

Turning to the factors from the fermionic fields, the pertinent operators are $X_r X_{r+1} + Y_r Y_{r+1}$ and $X_r Y_{r+1} - Y_r X_{r+1}$.
These operators are related by 
\begin{equation}
    X \otimes Y - Y \otimes X = - (S \otimes I) (X\otimes X + Y\otimes Y) (S^\dagger \otimes I),
\end{equation}
with $S$ the ``phase gate,'' $\ket{0}\bra{0}+i\ket{1}\bra{1}$.
To reduce clutter, these composite operators are denoted by
\begin{equation}
  G_r:= X_r X_{r+1} + Y_r Y_{r+1} \quad \text{and} \quad \tilde{G}_r := X_r Y_{r+1} - Y_r X_{r+1}.
\end{equation}

To simulate a hopping term in the Trotter step $V(t)$, we will employ the approximation
\begin{equation}
\label{eq:kinetictrot}
     e^{-i\frac{xt}{8}((A+\tilde{A})\otimes G + (B+\tilde{B})\otimes \tilde{G})} \approx e^{-it T^{(4)}/2}e^{-itT^{(3)}/2}e^{-itT^{(2)}/2}e^{-itT^{(1)}/2},
\end{equation}
where
\begin{align}\label{eq:hopdefs}
    T^{(1)} &:= x(A\otimes G)/4,\\
    T^{(2)} &:= x(\tilde{A}\otimes G)/4, \\
    T^{(3)} &:= x(\tilde{B}\otimes \tilde{G})/4, \\
    T^{(4)} &:= x(B\otimes \tilde{G})/4. \label{eq:hopdeflast}
\end{align}

\begin{figure}
\includegraphics[]{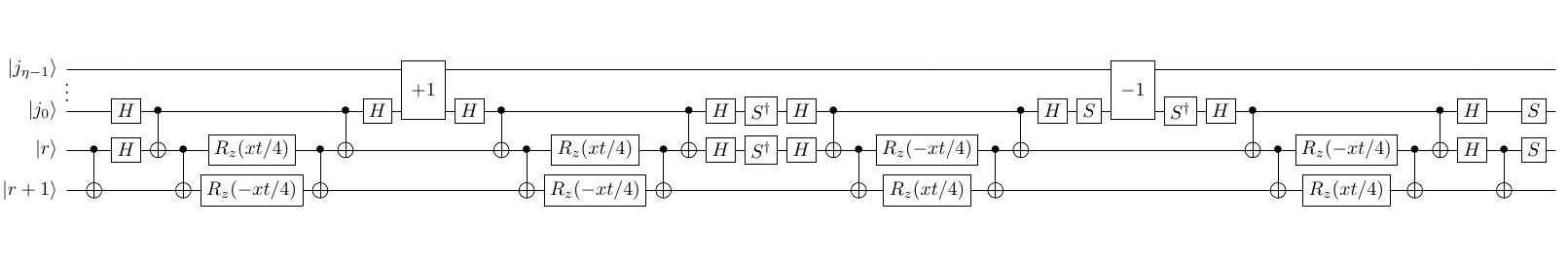}
\caption{A circuit to simulate the Schwinger model hopping terms, $\prod_{j=4}^1 e^{-i T^{(j)} t/2}$, in the order corresponding to \eq{kinetictrot}.
The locality of the presented operator will be expanded to include $\eta$-distance CNOTs between qubits representing fermionic degrees of freedom in quantum registers with one-dimensional connectivity.
The gates labeled $+1$ and $-1$ are the incrementer and decrementer circuits.}
\label{fig:kinetic}
\end{figure}
A circuit representation of the right-hand side of \eq{kinetictrot} is given in \fig{kinetic}.
This routine can be understood in a simple way by first noting the similarity of the four $T^{(i)}$ operators:
\begin{align}
  T_r^{(2)} &= \mathcal{S}_{E,r}^\dagger T_r^{(1)} \mathcal{S}_{E,r} \label{eq:TBG}\\
  T_r^{(3)} &= \mathcal{S}_{E,r}^\dagger ( S^b_{0,r} \ S^f_r ) \left( - T_r^{(1)} \right) ( S^b_{0,r} \ S^f_r )^\dagger \mathcal{S}_{E,r} \label{eq:TBprimeGprime} \\
  T_r^{(4)} &= ( S^b_{0,r} \ S^f_r ) \left(-T^{(1)} \right) ( S^b_{0,r} \ S^f_r )^\dagger  \label{eq:TAprimeGprime}
\end{align}
Consequently, the whole circuit is essentially just four applications of $e^{- i tT^{(1)}/2}$ along with appropriately inserted basis transformations and rotation angle negations.
The specific ordering of the $T^{(i)}$ chosen yields cancellations that reduce the number of internal basis transformations that must be individually executed.
A few single- and two-qubit gates are also spared by additional cancellations.
The remainder of this section addresses the implementation of $e^{\mp it T^{(1)}/2}$.

To effect an application of $e^{- it T^{(1)}/2}$, one can first transform to a basis in which $X \otimes G$ is diagonal.
(Recall $A$ is just $X_0$ -- a bit flip on the last bit of the bosonic register.)
$G$ is diagonalized by the so-called Bell states,
\begin{align}
    \ket{\beta_{ab}} &= \frac{\ket{0\ b} + (-1)^a \ket{1\ \bar b}}{\sqrt{2}} \\
    G \ket{\beta_{ab}} &= 2 b(-1)^a  \ket{\beta_{ab}}
\end{align}
with $\bar b$ indicating the binary negation of $b$, while $X$ is diagonalized by $\ket{\pm} = (\ket{0} \pm \ket{1})/\sqrt{2}$.
From this, we have that
\begin{align}
    e^{-\frac{ixt}{8} X \otimes G} \ket{\pm}\ket{\beta_{00}} & = \ket{\pm}\ket{\beta_{00}} \\
    e^{-\frac{ixt}{8} X \otimes G} \ket{\pm}\ket{\beta_{01}} & = e^{\mp \frac{ixt}{4}}\ket{\pm}\ket{\beta_{01}} \\
    e^{-\frac{ixt}{8} X \otimes G} \ket{\pm}\ket{\beta_{10}} & = \ket{\pm}\ket{\beta_{10}} \\
    e^{-\frac{ixt}{8} X \otimes G} \ket{\pm}\ket{\beta_{11}} & = e^{\pm \frac{ixt}{4}}\ket{\pm}\ket{\beta_{11}}.
\end{align}
Thus, in the Bell basis, we implement rotations conditioned on $a$ and $b$.

The first two time slices of the circuit serve to change to the $X \otimes G$ eigenbasis.
The subsequent parallel $R_z$ rotations flanked by \textsc{CNOT}s implement the controlled rotations in the computational basis, taking
\begin{align}
    \ket{z}\ket{00} & \to \ket{z}\ket{00} ,\\
    \ket{z}\ket{01} & \to e^{(-1)^{\bar z} \frac{ixt}{4}} \ket{z}\ket{01}, \\
    \ket{z}\ket{10} & \to \ket{z}\ket{10}, \\
    \ket{z}\ket{11} & \to e^{(-1)^{z} \frac{ix t}{4}} \ket{z}\ket{11};
\end{align}
this is equivalent to acting with $e^{-\frac{ixt}{4} Z\otimes Z}$.
After undoing the basis transformation, we will have effected $e^{-\frac{ixt}{8}A\otimes G}$.
Three similar operations are executed in the remainder of the circuit;
an incrementer $S_E$ (denoted by ``+1''), the phase gates, and the overall minus sign on the rotations in the latter half of the circuit all stem directly from the relations given in (\ref{eq:TBG},\ref{eq:TBprimeGprime},\ref{eq:TAprimeGprime}).

\begin{figure}
\centering
\begin{equation}
\includegraphics[]{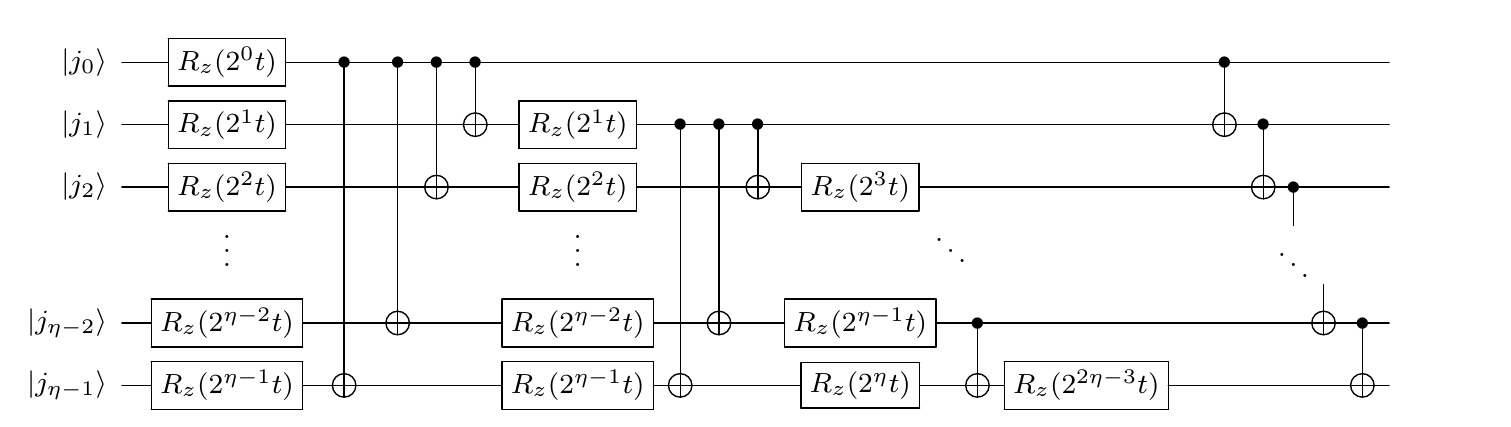}
\end{equation}
\caption{Simplified circuit for simulating $e^{-iE^{2}_{r} t}$ in qubit limited setting.
The circuit is shown acting on the product state $\otimes_{k=0}^{\eta-1} \ket{j_k}$ to clearly mark which qubit each gate is intended to act upon although the circuit is valid for arbitrary inputs.}\label{fig:limittedSquare2}
\end{figure}

The above discussion is summarized below as a lemma for convenience.
\begin{lemma}\label{lem:DrCNOT}
  For any (evolution time) $t\in \mathbb{R}$ the operation
$$e^{-it T^{(4)}/2}e^{-itT^{(3)}/2}e^{-itT^{(2)}/2}e^{-itT^{(1)}/2}$$ can be performed using at most $8+2\eta$ single-qubit rotations, $4$ $\eta$-qubit quantum Fourier transform circuits, $18$ CNOT  gates and no ancillary qubits.
\end{lemma}

\subsection{Implementing (Diagonalized) Mass and Electric Energy Terms ($D$)}\label{sec:diagtermsRC}

\begin{lemma}\label{lem:eneg}
The circuit provided in~\fig{limittedSquare2} implements $e^{-iE^2t}$ on $\eta$ qubits exactly, up to an (efficiently computable) global phase, using $\frac{(\eta+2)(\eta-1)}{2}$ CNOT operations and $\frac{\eta(\eta+1)}{2}$ single-qubit rotations.
\end{lemma}
\begin{proof}
The time evolution associated with the electric energy can be exactly implemented utilizing the structure of the operator.
As defined in (\ref{Eq:E_r_squared}), $E^2={\rm diag}[\Lambda^2,(\Lambda-1)^2,\cdots, 1,0,1,\cdots, (\Lambda-1)^2]$, where $\Lambda$ is the electric field cutoff.
Note that the diagonal elements are not distributed symmetrically---the first diagonal entry is $\Lambda^2$ while the last entry is $(\Lambda-1)^2$.
This lack of symmetry is required to incorporate the gauge configuration with zero electric field.
However, symmetry can be leveraged 
by using the following operator identity:
\begin{equation}\label{eq:klcoIdentity}
    {E}^2 = \left(E+\frac{1}{2}I\right)^2 - \left(E  +\frac{I}{2}\right) + \frac{I}{4}\end{equation}
The operator $E+\frac{1}{2}I = \frac{1}{2}{\rm diag}[-2\Lambda+1,\cdots,-1,1,\cdots,2\Lambda-1]$ is skew persymmetric---containing positive-negative pairs along the diagonal.
We then have from~\eqref{eq:klcoIdentity} and since $[E_r, E_r^2]=0$ that
\begin{align}
    e^{-iE^2 t} &= e^{-i\left(E+\frac{1}{2}I\right)^2 t} e^{i\left(E+\frac{1}{2}I\right) t} e^{-it/4}.
\label{eq:expEr2}
\end{align}
Since unitaries are equivalent in quantum mechanics up to a global phase, we can ignore the last phase in the computation (even if we didn't want to ignore it, it can be efficiently computed as $t$ is a known quantity).

Assuming that each lattice link is given by an $\eta$-qubit register to represent the electric fields ($\Lambda=2^{\eta-1}$), the goal is to find a compact representation of $E+\frac{1}{2}I$  as sums of Pauli operators.
This can be done by performing a decomposition of positive submatrices using the binary representation of integers.
Alternatively, symmetric averages across matrix sub-blocks with dimensions scaling by powers of two may be used to identify 
\begin{equation}
    E+\frac{1}{2}I = -\frac{1}{2}\sum\limits_{j = 0}^{\eta-1} 2^{j}Z_j ,
\end{equation}
where the $(\eta-1)^{\text{th}}$ qubit in the register represents the most significant digit in the binary representation.
Intuitively, the least significant qubit in the binary representation is seen to have a coefficient of $\frac{1}{2}$, necessary to split the unit integer spacing in the value of the electric field of neighboring quantum states.

The leftmost column of $R_z$ operations in~\fig{limittedSquare2}, where
\begin{equation}
  e^{-i \theta Z/2}\equiv \Qcircuit @C=0.5em @R=.5em { &\gate{R_z(\theta )}&\qw  } ,
\end{equation}
can then be seen to implement the linear exponential in \ref{eq:expEr2},
\begin{equation}
    e^{i(E +I/2)t} = \prod_{j=0}^{\eta-1} e^{-i2^{j-1}t Z_j}.
\end{equation}
This requires $\eta$ single-qubit rotations and no CNOT gates.

The quadratic contribution in \ref{eq:expEr2} is given by
\begin{equation}
\left(E + \frac{1}{2} I \right)^2 = \sum\limits_{j = 0}^{\eta-1} \sum\limits_{k=0}^{\eta-1} 2^{j+k-2} Z_{j}Z_{k},
\end{equation}
a sum of entangling operations involving all qubit pairs in the register.
We can manipulate this equation so it is easier to realize its exponential as a quantum circuit.
Pulling out summands proportional to the identity, we see that
\begin{align}\label{eq:esquarecirc}
\left(E + \frac{1}{2} I \right)^2 &= \sum_{j=k}\sum_{k=0}^{\eta-1} 2^{j+k-2} I + \sum\limits_{j = 0}^{\eta-1} \sum\limits_{k\neq j}^{\eta-1} 2^{j+k-2} Z_{j}Z_{k} \nonumber\\
&= \frac{1}{12} (4^{\eta}-1) + 2\sum\limits_{j = 0}^{\eta-2} \sum\limits_{k > j}^{\eta-1} 2^{j+k-2} Z_{j}Z_{k}  \nonumber\\
 &= \frac{1}{12} (4^{\eta}-1) + \sum\limits_{j = 0}^{\eta-2} \sum\limits_{k>j}^{\eta-1} 2^{j+k-1} Z_{j}Z_{k},
\end{align}
where exponentiating the constant term results in a global phase that we can drop.
We give the corresponding quantum circuit in~\fig{limittedSquare}.

Note that the network of $\mathrm{CNOT}_{i,j}$ operations (with control qubit $i$ and target qubit $j$) used in implementing~\eq{esquarecirc} may be described by 
\begin{equation}
    F_{j} := {\rm CNOT}_{j,j+1}{\rm CNOT}_{j,j+2}\cdots {\rm CNOT}_{j,\eta-1}.
\end{equation}
\noindent
We can reduce the number of CNOTs in~\fig{limittedSquare} by using the following identity:
\begin{equation}\label{eq:nisqident}
    F_{j+1} \ F_{j} = {\rm CNOT}_{j,j+1} \ F_{j+1} \ .
\end{equation}
This can be seen by comparing the action of the left- and right-hand side of this equation on the substring $\ket{x_j , x_{j+1} , \ldots, x_{\eta-1}}$ of a computational basis vector $\ket{x}$.
Acting with the left-hand side:
\begin{align}
    F_{j+1} \ F_{j} \ket{x} &= F_{j+1} \ \ket{x_{j}, (x_j \oplus x_{j+1}),(x_{j} \oplus x_{j+2}), \ldots, (x_{j} \oplus x_{\eta-1})} \nonumber \\
    &= \ket{x_{j}, (x_j \oplus x_{j+1}), (x_j \oplus x_{j+1}) \oplus (x_{j} \oplus x_{j+2}), \ldots, (x_j \oplus x_{j+1}) \oplus (x_{j} \oplus x_{\eta-1})} \nonumber \\
    &= \ket{x_{j}, (x_j \oplus x_{j+1}), (x_{j+1} \oplus x_{j+2}), \ldots, (x_{j+1} \oplus x_{\eta-1})} \nonumber \\
    &= {\rm CNOT}_{j,j+1} \ \ket{x_{j}, x_{j+1}, (x_{j+1} \oplus x_{j+2}), \ldots, (x_{j+1} \oplus x_{\eta-1})} \nonumber \\
    &= {\rm CNOT}_{j,j+1} \ F_{j+1} \ \ket{x}.
\end{align}

The right-hand side of \eq{nisqident} requires fewer CNOTs to implement than the left-hand side.
Reducing the circuit of \fig{limittedSquare} to \fig{limittedSquare2} employs an application of \eq{nisqident} $\eta-2$ times between the columns of $z$-rotations, along with a minor CNOT simplification at the end of the circuit.
Counting the CNOTs in \fig{limittedSquare2}, we get

\begin{equation}
    \left(\sum_{j=1}^{\eta-1} j\right) + \eta - 1 = \frac{\eta (\eta-1)}{2} + \eta - 1 = \frac{(\eta +2)(\eta-1)}{2}.
\end{equation}

Similarly, the number of single-qubit rotations needed for the circuit is
\begin{equation}
    \frac{\eta(\eta-1)}{2} + \eta = \frac{\eta(\eta+1)}{2}
\end{equation}

Combining this information, the electric time evolution operator, $e^{-i {E}^2t}$, can be implemented exactly (up to a $t$-dependent global phase) by $\frac{\eta(\eta+1)}{2}$ single-qubit $Z$ rotations and $\frac{(\eta+2)(\eta-1)}{2}$ CNOTs.
\end{proof}
\begin{figure}
\begin{equation}
\includegraphics[]{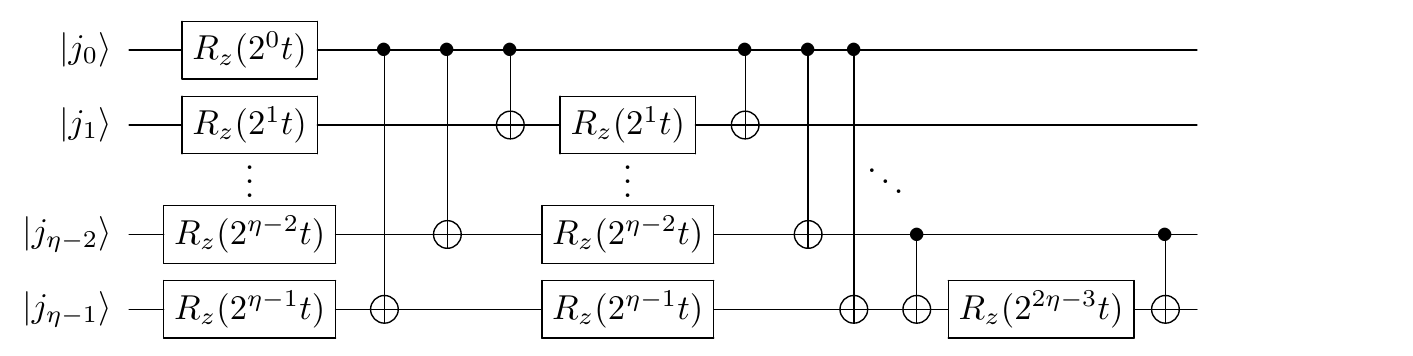}
\end{equation}
\caption{Circuit for simulating $e^{-iE^{2} t}$ in qubit limited setting.
The circuit is shown acting on the product state $\otimes_{k=0}^{\eta-1} \ket{j_k}$ to clearly mark which qubit each gate is intended to act upon although the circuit is valid for arbitrary inputs.}\label{fig:limittedSquare}
\end{figure}

While being conducive to implementation on qubit-limited hardware, this implementation strategy is attractive as a decomposition into mutually-commuting operators---contributing no additional systematic errors to the Trotterized time evolution operator.
It is interesting to note that this set of all two-qubit $Z$ operators has also been found sufficient to implement time evolution of the scalar field mass term~\cite{jordan2011quantum,Jordan_Lee_Preskill_2012,rol2015quantum,Macridin:2018gdw,Klco:2018zqz}.
Similar resource constrained results can be found using singular value transformations or phase arithmetic~\cite{gilyen2019quantum,wiebe2016quantum,low2019hamiltonian} but these approaches require at least one additional qubit.
For fault-tolerent implementation, a quadratically-improved method utilizing arithmetic with ancillary qubits is presented in \sec{diagtermsFT}.

\subsection{Cost to Implement Approximate Time Step in Noisy Entangling Gate Model}

In the following statement, we summarize the cost of implementing a single time step $V(t)$ \eq{timestep}.
Note that we may implement these exactly in the NEG computational model, where single-qubit operations are free.

\begin{lemma}[Schwinger Time Step Cost in NEG Model]
\label{lem:nisqtimestep}
Consider any $t \in \mathbb{R}$.
The unitary operation $V(t)$ as defined in \eq{timestep} may be implemented on a quantum computer in the Noisy Entangling Gate model using a number of CNOTs that is at most
$$(N-1)(9\eta^2 -7\eta +34).$$
\end{lemma}
\begin{proof}
Proof follows by considering the symmetric Trotter-Suzuki expansion of the time-evolution operator.
In particular, we have that $e^{-iHt} = V(t)+O(t^3)$, where $V$ is defined as in \eq{timestep} and restated here for convenience:
\begin{equation}
    V(t) = \prod_{r=1}^{N-1} \left ( \prod_{k\in \{M,E\}} e^{-i D_r^{(k)} t/2}\prod_{j=1}^4e^{-i T_r^{(j)} t/2}  \right )e^{-i D_N^{(M)} t} \prod_{r=N-1}^1  \left ( \prod_{j=4}^1 e^{-i T_r^{(j)} t/2}\prod_{k\in \{E,M\}} e^{-i D_r^{(k)} t/2}\right ). \nonumber
\end{equation}
The proof now proceeds by a counting of the number of gates needed to implement $V$.

The mass terms, given in $D_r^{(M)}$, are single-qubit operations and require no CNOTs to implement.
They are therefore free within the cost model considered here.

The hopping terms are implemented according to the discussion of \sec{kineticimplementRC} and their total cost is determined as follows: 18 explicit CNOTs appear in \fig{kinetic} and the rest are embedded in the two shifters.
Each shifter can be implemented as in \fig{shiftcircuit}, namely using two quantum Fourier transforms and single-qubit rotations.
A single quantum Fourier transform can be done using Kitaev's approach~\cite{nielsen2002quantum} in $\sum_{k=1}^{\eta-1} (\eta-k)=\eta(\eta-1)/2$ controlled $Z$-rotations, each implementable using two CNOTs.
Putting this all together gives the total $4\eta(\eta-1)+18$ CNOTs required to implement $\prod_{j=1}^4e^{-i T_r^{(j)} T/(2s)}$.
This term appears $2(N-1)$ times in the Trotter-suzuki expansion and so the total number of CNOTs due to $H_I$, per second-order Trotter step, is $(N-1)(8\eta(\eta-1)+36)$.

The cost of the electric terms, given in $D_r^{(E)}$, follows from the fact that the procedure given in \lem{eneg} can be implemented with no more than $(\eta+2)(\eta-1)/2$ CNOTS.
Since we use the second-order Trotter formula, each Trotter step consists of $2(N-1)$ such evolutions and the total cost is at most $(N-1)((\eta+2)(\eta-1))$ CNOT gates.
\end{proof}

\section{Trotter Step Implementation in Fault-Tolerant Model}\label{sec:circuiterr}

In this section, we adapt the above implementations for fault-tolerant architectures, recalling that we wish to minimize the number of $T$-gates.
Due to the fact that singe-qubit rotations are not free under the computational model of this section, we will have to implement them approximately in general.
We denote our circuit construction of a time step $V(t)$ by $\widetilde V(t)$, where we introduce the tilde to emphasize the fact that there are approximation errors that arise due to circuit synthesis.

\subsection{Implementing (Off-Diagonal) Interaction Terms $T$}\label{sec:kineticimplementFT}

First, we consider the hopping interactions.

In the scenario of fault-tolerance, the incrementers used in \fig{kinetic} would be an optimization of one that trades $T$-gates for more ancillas.
An example of such an incrementer, for $\eta=3$, is depicted in \fig{increment}.
Cascading the two-Toffoli construction between ancillas gives the generalization.
Using a computation and uncomputation trick for the logical ANDs \cite{gidney2018halving}, the total cost for each incrementer is $\eta-2$ Toffolis and $\eta-1$ ancilla.

\begin{figure}
\[
\includegraphics[]{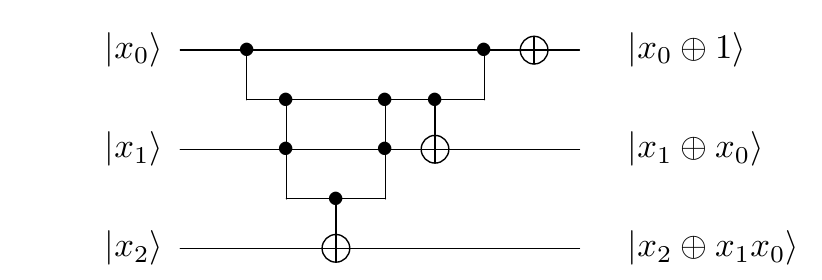}
\]
\caption{A circuit for incrementing the computational basis.
The wire corners denote logical ANDs with ancilla qubits, which are initialized at $\ket{0}$.}
\label{fig:increment}
\end{figure}

The $R_z$ rotations may be done using the Repeat-Until-Success method discussed in~\cite{bocharov2015efficient}, in which one repeatedly tries to implement the rotation with some fixed success probability until it has succeeded.
Using one ancilla and measurement, this method implements the one-qubit rotation to $\epsilon$ precision in the Frobenius norm (scaled by 1/2) such that for the approximation $R'$ of rotation $R$,
\begin{equation}
    \sqrt{1-\frac{|\mathrm{Tr}(R^\dagger R')|}{2}}= \frac{1}{2} ||R-R'||_F < \epsilon,
\end{equation}
and the expected number of $T$-gates is $1.15 \log(1/\epsilon)$.
Since $||R-R'||_S \leq ||R-R'||_F$, choosing a precision of $\delta = 2\epsilon$, the expected $T$ count is $1.15 \log(2/\delta)$ in terms of the spectral norm error.

Combining all this, we get the following result.

\begin{lemma}
\label{lem:kinetic}
Let $T^{(j)}$ be defined as in (\ref{eq:hopdefs}-\ref{eq:hopdeflast}).
For any $t\in \mathbb{R}$ (evolution time) and $\delta>0$ (circuit synthesis tolerance), there exists a unitary operation $\widetilde V_I(t)$ which can be implemented on a fault-tolerant quantum computer such that $\|\widetilde V_I(t) -\prod_{j=1}^4 e^{-itT^{(j)} }\|\le \delta$, where $\widetilde V_I(t)$ has an expected $T$-gate count of $8( \log \Lambda - 1 )+9.2 \log (16/\delta)$ and requires one ancilla.
\end{lemma}
\begin{proof}
\fig{kinetic} gives the existence, which is proven to implement $\prod_{j=1}^4 e^{-i t T^{(j)} }$ by the discussion of \sec{kineticimplementRC}.
As for its cost, note that if the construction in \fig{increment} is used then the two incrementers contribute $2(\eta-2)$ Toffolis with $\eta = \log( 2\Lambda)$.
Toffolis may be implemented using an additional ancilla qubit and four $T$-gates each~\cite{jones2013low}.
However, the single ancilla may be reused to implement all the Toffolis.
Thus the incrementers contribute $8(\eta-2)$ $T$-gates and $1$ ancilla barring any simplifications.

The only other contributions are the eight $z$-rotations, each performed to accuracy at least $\delta/8$, which from Box 4.1 of~\cite{nielsen2002quantum} guarantees a cumulative error of at most $\delta$ from synthesizing the rotations.
Thus their expected T-count is at most $9.2 \log(16/\delta)$.
Summing the two contributions in terms of $T$-gates gives the result.
\end{proof}

\subsection{Implementing (Diagonalized) Mass and Electric Energy Terms ($D$)}\label{sec:diagtermsFT}

Next, we turn to simulating the non-interaction terms in the Hamiltonian.

The easiest part of the second-order Trotterized time step $V(t)$ to implement  is the mass term propagators $D^{(M)}$, which are single-qubit $z$-rotations.

\begin{lemma}
\label{lem:zrots}
Let $D^{(M)}$ be defined as in \eq{terms}.
For any $t\in \mathbb{R}$ (evolution time) and $\delta>0$ (circuit synthesis tolerance), there exists a unitary operation $\widetilde V_{M}(t)$ which can be implemented on a quantum computer such that  $\|\widetilde V_{M}(t)-e^{-i D^{(M)} t}\|\le \delta$, which has an expected $T$-gate count of $1.15 \log (2/\delta)$ and requires one ancilla.
\end{lemma}
\begin{proof}
The exponential of $D^{(M)}$ is simply a $z$ rotation.
The cited complexity then follows from the repeat-until-success results given in \cite{bocharov2015efficient}.
\end{proof}

The most direct way to implement the electric propagators $e^{-i {D^{(E)}}t}=e^{-i{E}^2t}$ involves using a multiplication circuit to evaluate the value of $E^2$ on a given link and subsequently generating phase changes proportional to $E^2$.
Here we present an elementary implementation of a multiplication circuit which is based on the logarithmic-depth ripple-carry adder of~\cite{draper2004logarithmic}.
There are a host of multiplication algorithms that can be considered.
However, for simplicity, we will use ``grammar school'' multiplication here.
The circuit size for this method is expected to be in $O(\eta^2)$ for $\eta$ bits.
More advanced methods, based on Fourier-Transforms such as the Sch\"{o}nage-Strassen multiplication algorithm or its refinement in the form of F\"{u}rer's algorithm, can be used to solve the problem using a number of gates in $\widetilde{O}(\eta)$.
While these methods are asymptotically more efficient, they often prove to be less practical for modest-sized problems and, furthermore, have space complexity that is harder to bound than that of the comparably simpler grammar school multiplication algorithm.

We give below an implementation of a squaring algorithm based on this strategy and bound the number of non-Clifford operations as well as the number of ancillary qubits needed to compute the function out of place.
\begin{lemma}[Squaring Cost in Fault-Tolerant Model]
\label{lem:mult}
For any positive integer $\eta$ (link register size) there exists a positive integer $\alpha$ (number of ancillas) and unitary operation $U\in \mathbb{C}^{2^{\eta+\alpha} \times 2^{\eta+\alpha}}$ such that for all computational basis vectors $\ket{x} \in \mathbb{C}^{2^\eta}$, $U:\ket{x} \ket{0}^{\otimes \alpha} \mapsto \ket{x}\ket{x^2} \ket{0}^{\alpha-2\eta}$ where $\alpha \le 5\eta - \lfloor \log \eta \rfloor -1$, 
using a number of $T$-gates that is bounded above by
$$  4(\eta-1)(12\eta - 3\lfloor\log \eta \rfloor - 14).$$
\end{lemma}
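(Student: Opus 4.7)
The plan is to implement $U:\ket{x}\ket{0}^{\otimes a}\mapsto\ket{x}\ket{x^2}\ket{0}^{a-2\eta}$ via textbook long multiplication specialized to the squaring case. Writing $x=\sum_{i=0}^{\eta-1}2^i x_i$ and using $x_i^2=x_i$ for a bit, one has
\begin{equation}
    x^2 \;=\; \sum_{i=0}^{\eta-1} 2^{2i} x_i \;+\; 2\!\sum_{0\le i<j\le\eta-1}\! 2^{i+j} x_i x_j .
\end{equation}
The diagonal contribution can be written into the output register for free by $\eta$ CNOTs controlled on the input qubits, while the off-diagonal contribution is a doubled sum of $\eta-1$ nontrivial partial products, one per input bit $x_i$ with $i<\eta-1$.

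The off-diagonal terms are accumulated into the $2\eta$-qubit output register by $\eta-1$ sequential rounds. In round $i$ I (i)~compute the bits $\{x_i\wedge x_j:j>i\}$ of the $i$-th partial product out-of-place, using the single-Toffoli logical-AND construction of~\cite{gidney2018halving}; (ii)~add this partial product, appropriately shifted, into the accumulator using the logarithmic-depth carry-lookahead adder invoked in the text, whose $n$-bit Toffoli count has the standard form $c_1 n - 3\lfloor\log n\rfloor + c_0$; (iii)~uncompute the ANDs by measurement-based uncomputation, which contributes zero Toffolis. Collecting the AND count, the adder count at width bounded by $\eta+1$, and the single extra Toffoli for the high-bit carry that escapes the last addition, a careful book-keeping yields the claimed total $N_{\rm Toff}\le(\eta-1)(15\eta-3\lfloor\log(\eta+1)\rfloor-12)+1$.

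For the ancilla count, during the $i$-th round the qubits that must coexist are the $\eta$ input qubits (unchanged throughout), the $2\eta$ output qubits holding the running accumulator, the $(\eta-1)$-qubit scratch register used for the current partial product (reused across iterations), and the $\eta-1-\lfloor\log(2\eta-1)\rfloor$ carry-lookahead ancillas allocated inside the adder (returned to $\ket{0}$ at the end of each round). These sum to $4\eta - \lfloor\log(2\eta-1)\rfloor - 2$ qubits beyond the input $\ket{x}$, yielding the stated bound $a\le 4\eta - \lfloor\log(2\eta-1)\rfloor - 2$.

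The main obstacle is not the algorithmic structure, which is essentially forced by the textbook-multiplication template, but rather confirming that the exact constants in the per-round Toffoli count $15\eta - 3\lfloor\log(\eta+1)\rfloor - 12$ really match the chosen carry-lookahead adder once the partial-product ANDs, the addition itself, and any Clifford cleanup are combined. A secondary subtlety is showing that the CLA workspace and the partial-product register can share the same block of scratch ancillas across rounds without increasing the qubit footprint beyond the stated $4\eta-\lfloor\log(2\eta-1)\rfloor-2$ bound.
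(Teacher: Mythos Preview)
Your approach differs from the paper's in a substantive way. The paper does \emph{not} exploit the squaring identity $x_i^2=x_i$ or separate diagonal from off-diagonal terms. It simply runs generic textbook multiplication of $x$ by itself: the first round is a controlled copy $\ket{a}\ket{b}\ket{0}\mapsto\ket{a}\ket{b}\ket{a\cdot b_0}$ costing $\eta$ Toffolis, and the remaining $\eta-1$ rounds are in-place additions using the Draper et~al.\ carry-lookahead adder, whose cost at width $n$ is $10n-3\lfloor\log n\rfloor-13$ Toffolis with $2n-\lfloor\log n\rfloor-2$ ancillas. Summing the adder cost over widths $n=\eta+1,\dots,2\eta-1$ (bounding $\lfloor\log n\rfloor\ge\lfloor\log(\eta+1)\rfloor$ throughout) gives $(\eta-1)(15\eta-3\lfloor\log(\eta+1)\rfloor-13)$, and adding the $\eta$ copy Toffolis yields exactly $(\eta-1)(15\eta-3\lfloor\log(\eta+1)\rfloor-12)+1$. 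The constants are thus read off directly from the Draper adder, with no AND compute/uncompute and no CNOT-only diagonal.

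Your squaring-specific decomposition is a legitimate (and in principle tighter) route, but the per-round cost structure is different: you have a variable number of Gidney ANDs plus an addition whose effective width is not $\eta+p$ but depends on the shifted partial product, and there is no reason the bookkeeping should reproduce \emph{exactly} the paper's constant $15\eta-3\lfloor\log(\eta+1)\rfloor-12$. Your own closing caveat is on point: that constant is an artifact of the paper's generic-multiplication accounting, not a target your construction would naturally hit. If you pursue your route you should expect a different (likely smaller) leading coefficient, and you will need to supply the round-by-round width analysis rather than asserting ``width bounded by $\eta+1$,'' which is not correct for the later rounds where carries reach the high bits of the $2\eta$-qubit accumulator.
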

\begin{proof}
We choose to base our multiplier on a logarithmic-depth adder proposed by Draper et al.\, in~\cite{draper2004logarithmic}. We use an in-place version of the adder that adds two $\eta$-bit integers using at most $10\eta -3\lfloor \log \eta \rfloor -13$ Toffoli gates by performing
\begin{equation}
  \ket{a} \ket{b} \ket{0}^{\otimes \beta} \mapsto \ket{a+b} \ket{b}   \ket{0}^{\otimes (\beta-1)},
\end{equation}
where $\ket{a+b}$ is $\eta+1$ bits and $\ket{0}^{\otimes \beta}$ consists of at most $2\eta - \lfloor \log \eta \rfloor -2$ ancillary qubits.

To initialize the grammar school multiplication algorithm, we copy the input $\ket{x}$ to an additional $\eta$-sized register conditioned on the zeroth bit of $x$.  This requires at most $(\eta-1)$ Toffoli gates and acts as
\begin{equation}\label{eq:squaremap}
  \ket{x}\ket{0}^{\otimes \eta} \ket{0}^{\otimes \eta} \ket{0}^{\otimes (\alpha-2\eta)} \mapsto \ket{x}\ket{x\cdot x_0} \ket{0}^{\otimes \eta} \ket{0}^{\otimes (\alpha-2\eta)}
\end{equation}
Additionally, append a blank ancilla as the most significant bit of the second register:
\begin{equation}
  \ket{x}\ket{x\cdot x_0} \ket{0}^{\otimes \eta} \ket{0}^{\otimes(\alpha-2\eta)} \mapsto \ket{x}(\ket{0}\ket{x\cdot x_0}) \ket{0}^{\otimes \eta} \ket{0}^{\otimes(\alpha-2\eta-1)}
\end{equation}
Next, the grammar school multiplication algorithm is done in rounds proceeding initialization. On the $j$th round ($j=1,2,\ldots,\eta-1$), we perform the sequence of steps:
\begin{enumerate}
    \item Copy input $\ket{x}$ to the third register conditioned on the $j$th bit of $x$. Requires at most $\eta-1$ Toffoli gates.
    \item Apply in-place addition algorithm to add the value in the third register (as ordered above) to the $\eta$ most-significant bits of the second register (i.e., the bits from the $2^j$'s place through the $2^{\eta -1 +j}$'s place). Requires at most $10\eta -3\lfloor \log \eta \rfloor -13$ Toffoli gates.
    \item Reset third register to $\ket{0}^\eta$. Requires at most $\eta-1$ Toffoli gates.
\end{enumerate}
To illustrate the effect of a round, the $j$th round takes
\begin{equation}
  \ket{x}\Ket{\sum_{k=0}^{j-1} (x\cdot x_k) 2^k} \ket{0}^{\otimes \eta} \ket{0}^{\otimes(\alpha-2\eta -j)} \mapsto \ket{x}\Ket{\sum_{k=0}^{j} (x\cdot x_k) 2^k} \ket{0}^{\otimes \eta} \ket{0}^{\otimes(\alpha-2\eta-j-1)}.
\end{equation}
Upon completing $\eta-1$ rounds, we will finish with
\begin{equation}
  \ket{x}\Ket{\sum_{k=0}^{(\eta-1)} (x\cdot x_k) 2^k} \ket{0}^{\otimes \eta} \ket{0}^{\otimes(\alpha-3\eta)} = \ket{x}\ket{x^2} \ket{0}^{\otimes \eta} \ket{0}^{\otimes(\alpha-3\eta)}.
\end{equation}
Counting the cost for the entire algorithm including initialization, we require
\begin{equation}
    (\eta-1) + (\eta-1)(10\eta - 3\lfloor\log \eta \rfloor - 13 + 2(\eta - 1)) = (\eta-1)(12\eta - 3\lfloor\log \eta \rfloor - 14)
\end{equation}
Toffoli gates. 

As used prior, Toffolis may be implemented using an additional ancilla qubit and four $T$-gates each~\cite{jones2013low}, implying a factor of 4 when converting this bound to $T$-gates. The single additional ancilla may be reused for every implementation. 

To conclude, the total number of ancillary qubits needed for this algorithm implies that
\begin{equation}
    \alpha \leq 2\eta + \eta + (2\eta - \lfloor \log \eta \rfloor -2) + 1 = 5\eta - \lfloor \log \eta \rfloor -1.
\end{equation}
\end{proof}

This multiplier can also be implemented in polylogarithmic depth by noting that each adder can be implemented in $O( \log \eta)$ depth and, by using a fan-out network\footnote{A fan-out network is a sequence of CNOT gates that copies the computational basis inputs according to a binary tree which has $2^d$ registers but requires a CNOT depth of $O(d)$.}, the results can be combined in depth $O( \log \eta)$ resulting in a depth of $O( \log^2 \eta)$.

Using the above squaring algorithm, we may implement the complex exponential of the electric energy operator.

\begin{corollary}[Electric Propagator Cost]
  \label{cor:square}
Let $E$ be a Hermitian operator in $\mathbb{C}^{2^\eta\times 2^\eta}$ such that $E \ket{j} = (j-2^{\eta-1})\ket{j}$.
For all $t\in \mathbb{R}$ (propagation time) and $\delta>0$ (circuit synthesis tolerance), there exists a positive integer $\alpha$ and a unitary operation $\widetilde V_{E}(t) \in \mathbb{C}^{2^{ \eta+\alpha} \times 2^{\eta +\alpha}}$ such that $\|\widetilde V_{E}(t)-e^{-i t {E}^2}\| \le \delta$, which can be implemented on a quantum computer using on average at most $$4.45 \eta\log( 3\eta/\delta) + 8(\eta-1)(12\eta - 3\lfloor\log \eta \rfloor - 14)$$ $T$-gates and a number of ancillas $$\alpha \le 5\eta - \lfloor \log \eta \rfloor -1. $$
\end{corollary}
\begin{proof}
To begin note that
\begin{equation}
    E^2 \ket{j} = ( j^2 - 2^\eta j + 2^{2 \eta-2}) \ket{j} \ .
\end{equation}
Therefore, up to a ($t$-dependent) phase that can be ignored, we need to implement
\begin{equation}
\ket{j} \mapsto e^{-i j^2 t} e^{i2^\eta jt}\ket{j}\label{eq:elltrans}
\end{equation}
The phase linear in $j$ can be acquired (up to an unimportant global phase) by applying $e^{-i2^\eta 2^m t Z / 2}$ to each qubit $m=0,1,\ldots ,(\eta-1)$ of $\ket{j}$.

The phase quadratic in $j$ can be similarly obtained by first computing $j^2$ into an ancilla register as described in \lem{mult}, then applying $e^{i t 2^m Z/2}$ to each qubit $m=0,1,\ldots , (2\eta-1)$ of $\ket{j^2}$, and finally uncomputing $j^2$ from the ancilla register.

As there are $2 \eta$ qubits used to store the value of $j^2$ and only $\eta$ bits needed to store $j$,~\eqref{eq:elltrans} implies that $3\eta$ single-qubit rotations are needed to perform the unitary transformation exactly.

The final step is to consider the cost of performing these rotations using the $H,T$ gate library.
Using Box 4.1 from~\cite{nielsen2002quantum} we have that it suffices to implement each rotation within error $\delta/ 3 \eta$ to ensure that the overall error in the circuit is at most $\delta$.
It then follows from~\cite{bocharov2015efficient} and the fact that the number of synthesis attempts made by each invocation of repeat-until success synthesis are independent random variables that the mean $T$ count of implementing all the $3\eta$ rotations is at most $4.45 \eta \log(3\eta /\delta)$ since the mean number of gates needed to synthesize a single-qubit rotation within operator error $\epsilon>0$ is at most $1.15\log(1/\epsilon)$.
The total $T$ cost is the sum of this cost and the cost of applying~\lem{mult} twice. The ancilla count is given by that of the squaring circuit, since it is the circuit element that requires the most ancilla, and its ancillas can be reused to perform the z rotations.

\end{proof}

By this point, we have given asymptotically-advantageous implementations of every constituent of the second-order Trotter approximation $V(t)$ of \eq{timestep}.

\subsection{Cost to Implement Approximate Time Step in Fault-Tolerant Setting}\label{sec:timecost}

We combine all the relevant lemmas above to determine the resource scaling for a single approximate time step.
For convenience, we state the upper bound on the cost after factoring out a sum of terms that show the possible asymptotic behaviors.
We use this format here and in subsequent results to emphasize possible asymptotic scalings.
\begin{theorem}[Schwinger Time Step Cost in Fault-Tolerant Setting]
\label{thm:ucost}
For any $t \in \mathbb{R}$ (propagation time) and $\delta_{\rm circ}>0$ (synthesis error) there exists a unitary operation $\widetilde V(t)$ which can be implemented on a quantum computer such that  $\|\widetilde V(t)-V(t)\|\le \delta_{\rm circ}$, where $V(t)$ (the second-order Trotterized time step) is defined in \eq{timestep}. Furthermore, $\widetilde V(t)$ requires on average a number of $T$-gates that is at most

$$ \left( N\eta^2 + N \eta \ln\left(\frac{6N-5}{\delta_{\rm circ}} \right) \right) \lambda(\delta_{\rm circ}),$$
where
\begin{align*} 
\lambda(\delta_{\rm circ}) &= \biggl( 2(N-1) \bigl(96\eta^2 + 24(1-\eta) \lfloor \log \eta \rfloor + 4.45 \eta \log (3\eta) +(10.35+4.45\eta) \log \left( \frac{6N-5}{\delta} \right) \\
&\quad -200\eta +133.95 \bigr ) + 1.15\log \left ( \frac{2(6N-5)}{\delta} \right) \biggr) \bigg / \left( N\eta^2 + N \eta \ln\left(\frac{6N-5}{\delta_{\rm circ}} \right) \right)
\end{align*}
and a number of qubits $$ N(\eta + 1) +4\eta - \lfloor \log \eta \rfloor -1. $$

\end{theorem}
\begin{proof}
As $V(t)$ is defined, we may implement it using $2(N-1)$ copies of each $\widetilde V_I$, $\widetilde V_{D^{(M)}}$, and $\widetilde V_{D^{(E)}}$, along with one additional $\widetilde V_{D^{(M)}}$ for site $N$.
The costs of these are given by \lem{kinetic}, \lem{zrots}, and \cor{square}.
Furthermore, by choosing each term to contribute at most $\delta_{\rm circ}/(6N-5)$ error, the total error will be bounded by $\delta_{\rm circ}$.
The result follows by plugging the requisite error into each lemma, summing, multiplying by $2(N-1)$, and finally adding the contribution of the extra $\widetilde V_{D^{(M)}}$.
We state the result after factoring out the leading order terms.
The number of ancillas is given by \cor{square}, as it is the circuit that requires the largest number of ancilla.
Adding the number of qubits needed to represent the fermionic and gauge fields, $\eta(N-1)+N$, we arrive at the claimed upper bound on the total number of qubits required for simulation.
\end{proof}

We put these results to use in the following section.

\section{Cost of Trotterized Time Step ($e^{-iHT}$)}\label{sec:faulttolerant}

In this section, we use the circuits given in \sec{nisqimplement} and \sec{circuiterr} to analyze implementation of the time evolution operator ($e^{-iHT}$) in the Fault-Tolerant and NEG settings. We emphasize that because we provide loose upper bounds, the results contained here give sufficient (but not necessary) conditions for simulation.

For convenience, we touch on the main points of our analysis with the following asymptotic scaling, an immediate consequence of the upper bounds we establish later on.
\begin{corollary}[Trotterized Time Evolution Cost in Fault-Tolerant Model]
\label{cor:stepsAss}
Consider any $\delta>0$ (Trotter error), $T\in \mathbb{R}$ (total evolution time), and let $V(t)$ be the second-order Trotter-Suzuki decomposition of $e^{-iHt}$ as defined as in \eq{timestep}. Additionally, let $\mu$ be a constant and $x$ be lower bounded by a constant (i.e. $x\in \Omega(1)$). 

Under these assumptions, there exists an $s\in \N$ (Trotter steps) such that $\| V(T/s)^s - e^{-iHT} \| \leq \delta$, where $V(T/s)^s$ consists of $N_{\exp}$ matrix exponentials and
\begin{equation}
    \label{eq:sboundAss}
N_{\exp} \in O \left(  \frac{ N^{3/2}T^{3/2}  \Lambda x^{1/2}}{\delta^{1/2}}\right) .
\end{equation}
Furthermore, there exists an implementation of the Trotter decomposition within spectral-norm error $\delta$ of $e^{-iHT}$ that requires a number of $T$-gates in
\begin{equation}
    \widetilde{O} \left(  \frac{ N^{3/2}T^{3/2}  \Lambda x^{1/2}}{\delta^{1/2}}\right),
\end{equation}
where $\widetilde{O}(\cdot)$ is equivalent to $O(\cdot)$ but with all non-dominant sub-polynomial factors in the scaling suppressed.
\end{corollary}

Proof of this corollary follows after the next section.

\subsection{Costing $e^{-iHT}$ Simulation in the Fault-Tolerant Model}\label{sec:troterr}

With a second-order approximation, we exploit the fact that many of the Hamiltonian summands commute and calculate an error bound through \eq{2ndordertrot}, restated as,
\begin{align}\label{eq:seconderror}
    \delta_{\rm Trot} &= \bigl\| V(t) - e^{i H t} \bigr\| \leq \frac{1}{12}\sum_{x, \ y>x} \|[[H_x,H_y],H_x]\|t^3 + \frac{1}{24}\sum_{x, \ y>x,\ z>x} \|[[H_x,H_y],H_z]\|  t^3 .
\end{align}

\noindent
We order the summands of $H$ as follows, recalling that the decomposition of $D_r$ into $D_r^{(M)}$ and $D_r^{(E)}$ contributes no error:
\begin{equation}
\label{eq:sumorder}
\{H_x\}_{x=1}^{5N-4} = \{
    D_1, T_1^{(1)},T_1^{(2)},T_1^{(3)},T_1^{(4)}, D_2, \ldots,T_{N-1}^{(1)},T_{N-1}^{(2)},T_{N-1}^{(3)},T_{N-1}^{(4)}, D_N \}
\end{equation}
This ordering gives some immediate simplifications, since it is clear that $[D_i,D_j] = [D_i, T_j^{(k)}]=0$ when $i<j$ and $[T_i^{(k)}, T_j^{(l)}] = [T_i^{(k)}, D_j]=0$ when $i+1<j$.
In other words, operations with at least one lattice site between them commute.
We calculate the right-hand side of \eq{seconderror} in \append{commutators} according to these simplifications, giving the bound of \cor{errboundcalced}, restated here:
\begin{equation}\label{eq:trotsteperr}
    \delta_{\rm Trot} \leq Nt^3 \biggl( \frac{2}{3}x \Lambda^2 + \biggl(2x^2 + \frac{5}{6}x\mu + \frac{2}{3}x \biggr) \Lambda + \frac{39}{8}x^3 + \frac{25}{12}x^2 \mu + x^2 + \frac{1}{3}x \mu^2 + \frac{5}{12}x\mu + \frac{1}{6}x \biggr).
\end{equation}
We recall the definitions of the parameters of the simulation, that $N$ is the number of lattice sites, $\Lambda$ is the electric cutoff, $t$ is the time step length, $x = 1/(ag)^2$ (where $a$ is the lattice spacing and $g$ is the coupling constant), and $\mu = 2m/ag^2$ with $m$ the mass of the particles on the lattice.

Ultimately, we have the following lemma.
\begin{lemma}[Trotter Step Count with Second-Order Product Formula]
\label{lem:steps}
Consider any $\delta>0$ (total Trotter error) and  $T\in \mathbb{R}$ (total evolution time), and let $V(t)$ be the second-order Trotter-Suzuki approximation to $e^{-iHt}$ as defined as in \eq{timestep}.
Then we have that $\| V(T/s)^s - e^{-iHT} \| \leq \delta$ provided that
\begin{equation}
    \label{eq:sbound}
s = \left\lfloor  \frac{N^{1/2} T^{3/2} \Lambda x^{1/2} \rho(\delta)}{\delta^{1/2}}\right\rfloor\ ,
\end{equation}
where
\begin{align}
    \rho(\delta) &= \frac{\delta^{1/2}}{N^{1/2} T^{3/2} \Lambda x^{1/2}} \\
    &\quad + \frac{\sqrt{ \frac{2}{3}x \Lambda^2 + \biggl(2x^2 + \frac{5}{6}x\mu + \frac{2}{3}x \biggr) \Lambda + \frac{39}{8}x^3 + \frac{25}{12}x^2 \mu + x^2 + \frac{1}{3}x \mu^2 + \frac{5}{12}x\mu + \frac{1}{6}x }}{\Lambda x^{1/2}}.
\end{align}
\end{lemma}
\begin{proof}
From \eq{nielsenChuang4.1} we have $\| V(T/s)^s - e^{-iHT} \| \leq s \delta_{\rm Trot}$,  where $\delta_{\rm Trot} = \| V(T/s) - e^{-iHT/s} \| $, so we seek the minimum $s$ such that $s\delta_{\rm Trot}\leq\delta$.
\cor{errboundcalced} gives a bound on $\delta_{\rm Trot}$, which we may use to choose $s$.
For convenience, let
\begin{equation}
    \chi := N \biggl( \frac{2}{3}x \Lambda^2 + \biggl(2x^2 + \frac{5}{6}x\mu + \frac{2}{3}x \biggr) \Lambda + \frac{39}{8}x^3 + \frac{25}{12}x^2 \mu + x^2 + \frac{1}{3}x \mu^2 + \frac{5}{12}x\mu + \frac{1}{6}x \biggr),
\end{equation}
so that
\begin{align}
    s \delta_{\rm Trot} &\leq s \left( \frac{T}{s} \right)^3  \chi = \frac{T^3\chi}{s^2}
\end{align}
implying that
\begin{equation}
    s \geq \frac{ T^{3/2} \chi^{1/2} }{\delta^{1/2}}
\end{equation}
to guarantee $\| V^s(T/s) - e^{-iHT} \|\leq \delta$.
Since the number of time steps needs to be an integer, it suffices to choose
\begin{align}
    s &= \left\lceil\frac{ T^{3/2} \chi^{1/2} }{\delta^{1/2}}\right\rceil\nonumber\\
    &=\left \lfloor \frac{N^{1/2} T^{3/2} \Lambda x^{1/2}}{\delta^{1/2}}\left( \frac{\delta^{1/2}}{N^{1/2} T^{3/2} \Lambda x^{1/2}} +\frac{\chi^{1/2}}{ N^{1/2} \Lambda x^{1/2}}\right) \right \rfloor \nonumber \\
    &= \left\lfloor  \frac{N^{1/2} T^{3/2} \Lambda x^{1/2} \rho(\delta)}{\delta^{1/2}}\right\rfloor.
\end{align}
\end{proof}

Combining this result with the cost to implement a single Trotter step, we can determine the number of $T$-gates required to construct $e^{-iHT}$ to arbitrary precision $\delta$.
Note that below we evenly distribute the errors due to circuit synthesis and due to Trotterization.
However, when we later incorporate measurement we will allow for different allocations of the errors in order to numerically optimize the $T$-gate count with respect to the error distribution.

\begin{corollary}[Upper Bound on Cost of Trotterized Time Evolution in Fault-Tolerant Model]\label{cor:faulttimeevolve}
Consider any $\delta>0$ (synthesis and Trotter error) and $T\in \mathbb{R}$ (total evolution time).
There exists an operation $W(T)$ that may be implemented on a quantum computer such that $\|W(T)-e^{-iHT}\| \leq \delta$, using on average at most
\begin{align*}
&\quad \frac{N^{3/2}T^{3/2}\Lambda \eta x^{1/2}}{(\delta/2)^{1/2}}\ln \left(\frac{2^{3/2}(6N-5)N^{1/2} T^{3/2}  \Lambda x^{1/2} \rho(\delta/2)}{\delta^{3/2}}\right) \nonumber \\
    &\quad \quad \quad \times \gamma'(\delta) \rho(\delta/2)\lambda \left(\frac{\delta^{3/2}}{2^{3/2}N^{1/2} T^{3/2}  \Lambda x^{1/2} \rho(\delta/2)} \right),
 \end{align*}
$T$-gates, where $\lambda$ and $\rho$ are as given in \thm{ucost} and \lem{steps},
\begin{equation}
    \gamma'(\delta):= \frac{\left(\eta + \ln \left(\frac{2^{3/2}(6N-5)N^{1/2} T^{3/2}  \Lambda x^{1/2} \rho(\delta/2)}{\delta^{3/2}}\right)\right)}{ \ln \left(\frac{2^{3/2}(6N-5)N^{1/2} T^{3/2}  \Lambda x^{1/2} \rho(\delta/2)}{\delta^{3/2}}\right)} \ ,
\end{equation}
and using at most $ N(\eta + 1) +4\eta - \lfloor \log \eta \rfloor -1$ qubits.
\end{corollary}
\begin{proof}
  Choose $s$ according to \lem{steps} such that $\|V(T/s)^s - e^{-iHT}\| \leq \delta/2$, i.e.
$$s = \left\lfloor  \frac{N^{1/2} T^{3/2} \Lambda x^{1/2} \rho(\delta/2)}{(\delta/2)^{1/2}}\right\rfloor.$$

Choose $\widetilde{V}(t)$ according to \thm{ucost}, such that $\|\widetilde{V}(T/s) - V(T/s) \| \leq \delta/(2s)$.
Here, we are evenly distributing the circuit synthesis and Trotter errors.
Letting $W(T) = \widetilde{V}(T/s)^s$,
\begin{equation}
    \|W(T) - e^{-iHT} \| \leq \|\widetilde V(T/s)^{s} - V(T/s)^s \| + \|V(T/s)^s - e^{-iHT}\| \leq s\|\widetilde{V}(T/s) - V(T/s) \| + \delta/2 \leq \delta.
\end{equation}

The $T$-gate cost of implementing $W(T)$ is given by $s C_{\mathrm{Trot}}$, where $C_{\mathrm{Trot}}$ is the $T$-gate cost of implementing a single Trotter step.
This is
\begin{align}
    sC_{\mathrm{Trot}} &\leq \frac{N^{1/2} T^{3/2} \Lambda x^{1/2} }{(\delta/2)^{1/2}}\left(N\eta^2 + N\eta \ln \left(\frac{6N-5}{\delta/2s}\right)\right) \lambda(\delta/2s) \rho(\delta/2) \\
    &=  \frac{N^{3/2}T^{3/2}\Lambda \eta x^{1/2}}{(\delta/2)^{1/2}}\ln \left(\frac{2^{3/2}(6N-5)N^{1/2} T^{3/2}  \Lambda x^{1/2} \rho(\delta/2)}{\delta^{3/2}}\right) \nonumber \\
    &\quad \times \gamma'(\delta) \rho(\delta/2)\lambda \left(\frac{\delta^{3/2}}{2^{3/2}N^{1/2} T^{3/2}  \Lambda x^{1/2} \rho(\delta/2)} \right),
\end{align}
where
\begin{equation}
    \gamma'(\delta):= \frac{\left(\eta + \ln \left(\frac{2^{3/2}(6N-5)N^{1/2} T^{3/2}  \Lambda x^{1/2} \rho(\delta/2)}{\delta^{3/2}}\right)\right)}{\ln \left(\frac{2^{3/2}(6N-5)N^{1/2} T^{3/2}  \Lambda x^{1/2} \rho(\delta/2)}{\delta^{3/2}}\right)}.
\end{equation}
The number of qubits are given by \thm{ucost}.
\end{proof}

We can now easily see \cor{stepsAss}, the asymptotic scaling of implementing $e^{-iHT}$ in the fault-tolerant setting.

\begin{proof}[Proof of Corollary \ref{cor:stepsAss}]
Proof of the scaling of the number of operator exponential directly follows from~\lem{steps} and~\eqref{eq:sumorder}.
The scalings for the cost of fault-tolerant simulation is given in~\cor{faulttimeevolve}.
\end{proof}

\subsection{Costing $e^{-iHT}$ Simulation in NEG Model Setting}\label{sec:nisqanalysis}

Here, we briefly discuss the cost of implementing the unitary $e^{-iHT}$ within the NEG model (see~\sec{NEG}), since more rigorous analysis follows once we introduce measurement in \sec{negsim}. 

The analysis required in this model is different than that required in the Fault-Tolerant model. This is because, due to the assumptions of the NEG model, we may not approximate $e^{-iHT}$ to arbitrarily small error. However, we know that
\begin{align}
    \|e^{-iHT}-\widetilde{V}^s(T/s)\|  &\leq s \|e^{-iHT/s}-\widetilde{V}(T/s)\| \\
    &\leq s(\|e^{-iHT/s} - V(T/s)\| + \|V(T/s) - \widetilde{V}(T/s)\|) \\
    &\leq \frac{K}{s^2} + s \|\widetilde{V}(T/s) - V(T/s)\|, \label{eq:serrbound}
\end{align}
where $K$ (propagation time cubed, along with the parameters of the Schwinger model) is defined according to \eq{trotsteperr}. If we may evaluate $\|\widetilde{V}(T/s) - V(T/s)\|$, we can find the $s$ for which the bound above is minimized, which is near
\begin{equation}
    s_{\mathrm{err}} = \lceil( (2K/\|\widetilde{V}(t) - V(t)\|)^{1/3})\rceil.
\end{equation}
The number of CNOTs required to approximate $e^{-iHT}$ to nearly-optimal worst-case error is then $N_g s_{\mathrm{err}}$, with $N_g$ the bound on the number of CNOTs per timestep given in \lem{nisqtimestep}.

Another possible scenario is if we are given a fixed number $N_{x}$ of noisy CNOTs as a resource. In this case, we would be limited to $\lfloor N_x / N_g \rfloor$ timesteps, and the only remaining free parameters in simulation are in $K$, since $\|\widetilde{V}(t) - V(t)\|$ is independent of $t$ within the NEG Model. The effect of tuning these parameters on the error bound \eq{serrbound} is implied within \eq{trotsteperr}. 

In \sec{negsim} we will introduce measurement and translate the diamond-norm error due to the noisy CNOT resource to error in estimating a time-evolved observable (as opposed to time-evolution operator spectral-norm error). In that section, we give more complete analysis that may be easily applied to other near-term Schwinger model simulations of interest.

\section{Estimating Mean Pair Density}\label{sec:measurement}

An important objective of simulation is to compute observables rather than naively try to learn the amplitudes of the final state wavefunction.
Therefore, in order to fully analyze the cost of simulation, we must choose some illustrative observable to examine.
For the following, we adapt the above discussions to determine a sufficient number of calls to our time-evolution algorithm to estimate the expectation value of a simple example observable.
We will estimate the mean pair density by way of the mean positron density $\hat N_p/N$, where $\hat N_p$ counts the total number of positrons on the lattice, i.e. $\hat N_p = \sum_{i=1}^{N/2} \hat n_{2(i-1)}$; the mean positron density and mean pair density are synonymous in the physical sector.
However, we emphasize that much of this analysis would be similar for other operators of interest such as fermionic correlation functions of the form~\cite{wecker2015solving} $C_{j}(t)=\bra{\psi_0}e^{-iHt} a_j^\dagger e^{iHt} a_j\ket{\psi_0}$.
The cost of such simulations is asymptotically the same as those of the mean-pair operator after applying the Jordan-Wigner transformation and a ``Hadamard Test'' circuit to learn the expectation value.
However, we focus on the mean pair density because of its relative simplicity compared to other physically meaningful observables.

The following two subsections give distinct schemes of estimation.
\sec{measureerr} uses straightforward sampling.
\sec{ampest} employs amplitude estimation to achieve improved asymptotic scaling of calls to our time-evolution algorithm (by a factor of $1/\tilde \epsilon$).

\subsection{Estimating Mean Pair Density using Sampling}\label{sec:measureerr}

The following lemma determines a sufficient number of calls to our time-evolution algorithm (or ``shots'')~$N_{shots}$ to estimate the mean pair density, to rms error $\tilde \epsilon$ via straightforward sampling.

We give this result in terms of a parameter $\kappa$, which will be used later to optimize the resource scaling over the allotment of error to each contribution: the Trotter-Suzuki decomposition, the approximate circuit synthesis and the measurement error.
As in previous sections, we factor out the dominant terms for convenience in subsequent discussions.

\begin{lemma}
\label{lem:epsilonbound}
Let $0<\kappa<1$, and let $\ket{\psi},\ket{\phi} \in \mathbb{C}^{2^{\eta (N-1) + N}}$ be normalized states for the electric and fermionic fields on the Schwinger model lattice.
If $| \ket{\psi} - \ket{\phi}| \leq \tilde \epsilon\sqrt{ \kappa} $ and the number of samples in a stochastic sampling of the number of positrons in the state $\ket{\phi}$ is given by $$N_{shots} = \floor{\nu(\tilde \epsilon, \kappa) /4\tilde \epsilon^2(1-\kappa)},$$ where $$\nu(\tilde \epsilon, \kappa) = 4\tilde \epsilon^2(1-\kappa) + 1,$$ then the empirically observed sample mean $\mathcal{N}$ satisfies
$$ \mathbb{E}\left(\left(\frac{\mathcal{N}-\bra{\psi}\hat N_p \ket{\psi}}{N}\right)^2\right) \leq \tilde \epsilon^2. $$
\end{lemma}
\begin{proof}
First, since
\begin{equation}
    |\ket{\psi} - \ket{\phi}| \leq \tilde \epsilon\sqrt{\kappa},
    \label{eq:bound}
\end{equation}
we may thus bound
\begin{align}
    |\mathbb{E}_\psi(\hat N_p) - \mathbb{E}_\phi(\hat N_p) | &= |\bra{\psi} \hat N_p \ket{\psi} - \bra{\phi} \hat N_p \ket{\phi} | \\
    &\leq |\bra{\psi} \hat N_p \ket{\psi} - \bra{\psi} \hat N_p \ket{\phi} | + |\bra{\phi} \hat N_p \ket{\psi} - \bra{\phi} \hat N_p \ket{\phi}| \\
    &\leq 2 || \hat N_p || \tilde \epsilon\sqrt{\kappa} = N\tilde \epsilon\sqrt{\kappa}.
\end{align}
\noindent
The second line is given through the triangle inequality.
The third line is given by factoring and applying \eq{bound} twice, noting that both states are normalized.
Also, we used the fact that $||\hat N_p|| = N/2$.

We are assuming that $\mathcal{N}$ is an unbiased estimator of a stochastic sampling of the mean number of positrons in the approximate state, using $N_{shots}$ samples.
We have that $\mathbb{E}(\mathcal{N}) \equiv \bra{\phi} \hat{N}_p \ket{\phi}$.
The variance of this estimator is
\begin{equation}
    \mathbb{V}(\mathcal{N}) = \frac{\bra{\phi}\hat{N}_p^2 \ket{\phi}-(\bra{\phi}\hat{N}_p \ket{\phi})^2}{N_{shots}}\le \frac{N^2}{4N_{shots}}.
\end{equation}
We then have that the rms error in the process is
\begin{align}
    \mathbb{E}\left(\left(\frac{\mathcal{N} - \bra{\psi} \hat N_p \ket{\psi}}{N}\right)^2 \right)&= \frac{1}{N^2} \left ( \mathbb{E}(\mathcal{N}^2) -(\mathbb{E}(\mathcal{N}))^2+ (\bra{\psi} \hat N_p \ket{\psi}- \mathbb{E}(\mathcal{N}))^2 \right).\nonumber\\
    &= \frac{1}{N^2} \left (\mathbb{V}(\mathcal{N}) + |\mathbb{E}_\psi(\hat N_p) - \mathbb{E}_\phi(\hat N_p) |^2 \right )\nonumber\\
    &\le \frac{1}{4N_{shots}} + \tilde \epsilon^2 \kappa \leq \frac{1}{4} \cdot 4 \tilde \epsilon^2 (1-\kappa) + \tilde \epsilon^2 \kappa = \tilde \epsilon^2.
    \end{align}
\end{proof}

\subsection{Estimating Mean Pair Density using Amplitude Estimation}\label{sec:ampest}

We assumed in the previous discussion that the mean number of positrons is gleaned by repeatedly measuring the system.
A drawback of this approach is that the number of measurements scales as $O(1/\tilde \epsilon^2)$ where $\tilde \epsilon$ is the target rms error in the mean positron density.
In contrast, using ideas related to Grover's search, the mean value can be learned using quadratically fewer queries.
We present a simple method for estimating this mean using linear combinations of unitaries~\cite{childs2012hamiltonian,berry2015simulating}.

{The idea behind our approach is to express the total positron number as a sum of unitary operations.  We then try to apply this sum to a quantum state using a linear-combination of unitaries circuit.  The probability of success for this operation gives, in essence, the mean value of the total positron number.  We then use amplitude estimation (which is a combination of amplitude amplification and phase estimation) to learn the number of excitations.  The specific details for amplitude amplification can be found in Theorem 12} of~\cite{brassard2002quantum}.

In particular, the total number of positrons is
\begin{equation}\label{eq:numop}
    \hat{N}_p - N/4 = \sum_{\mathrm{even }\ j} \rho_j - N/4 = \sum_{\mathrm{even }\ j} n_j - N/4 = - \sum_{\mathrm{even }\ j} Z_j/2 \ .
\end{equation}
We will refer to $N_p$ and the above quantity $N_p - N/4$ interchangeably as the positron number.

\begin{figure}
\[
\includegraphics[]{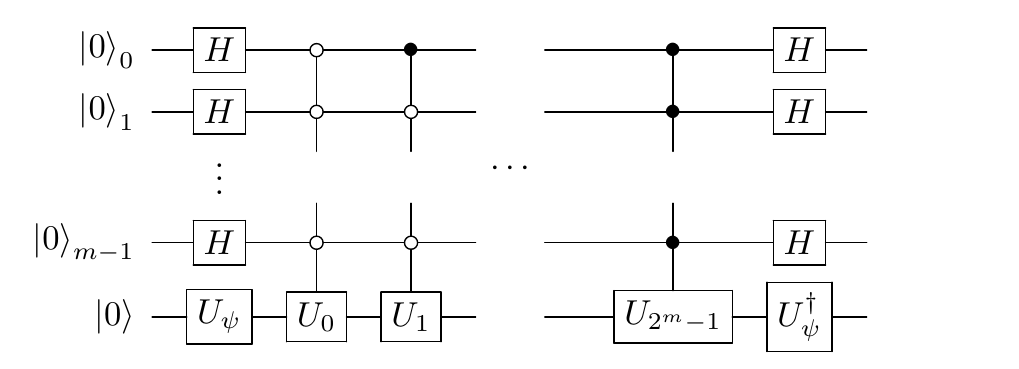}
\]

\caption{\label{fig:meancirc}Circuit for computing expectation value of a set of $2^m$ unitary operations, where each $U_j$ is a unitary operation.
The unitary operator $U_\psi$ is defined to prepare the state $\ket{\psi}$ i.e. $U_\psi: \ket{0} \mapsto \ket{\psi}$.
The expectation value is related to the probability of measuring all qubits to be in the zero state as shown in~\lem{genHadamard}.}
\end{figure}
\fig{meancirc} provides a way to estimate the mean positron number on a lattice with $2^{m}$ physical sites from the probability of measuring a control register to be $\ket{0}^{\otimes m}$.
Gauss' law equates the number of positrons with the number of electrons at all times in the physical sector dynamics, though the error budget allows for errors that cause the state to leak into the unphysical space.

\begin{lemma}[Generalized Hadamard Test]
\label{lem:genHadamard}
Let $U$ be the unitary transformation enacted in~\fig{meancirc},  then 
\begin{equation}
    {\rm Tr}(\left(U\ket{0}\!\bra{0} U^{\dagger}\right) \ketbra{0}{0})= \left|\sum_{j=0}^{2^m-1} \frac{\bra{\psi}U_{j}\ket{\psi}}{2^m} \right|^2. \nonumber
\end{equation}
Given access to controlled unitary implementations of $U_0,\ldots, U_{2^m-1}$ the circuit can also be implemented using $2m$ ancillary qubits within the Clifford+$T$ gate library.
\end{lemma}
\begin{proof}
The circuit implementation of $U$ performs
\begin{align}
    \ket{0} \ket{\psi} &\mapsto \frac{1}{\sqrt{2^m}} \sum_j \ket{j} \ket{\psi}\nonumber\\
    &\mapsto \frac{1}{\sqrt{2^m}} \sum_j \ket{j} U_j\ket{\psi}\nonumber\\
    &\mapsto \frac{1}{{2^m}} \sum_j \ket{0} U_j\ket{\psi} + C \ket{\phi}\nonumber\\
    &\mapsto \frac{1}{{2^m}} \sum_j \ket{0} U_{\psi}^\dagger U_j\ket{\psi} + C (I\otimes U_{\psi}^\dagger)\ket{\phi},
\end{align}
where $(\ketbra{0}{0}\otimes I)\ket{\phi}=0$ and $C$ is an irrelevant constant.
The probability of measuring all the qubits to be $0$ is therefore
\begin{equation}
    \left|(\bra{0}\bra{0})\frac{1}{{2^{m}}} \sum_j \ket{0} U_{\psi}^\dagger U_j\ket{\psi}\right|^2 = \left|\sum_{j=0}^{2^m-1} \frac{\bra{\psi}U_{j}\ket{\psi}}{2^m} \right|^2.
\end{equation}
The result then immediately follows.
\end{proof}

{We also have a slightly more general version of this result, which is appropriate in cases where a non-uniform sum of unitaries is desired (although such sums can be approximated closely using the prior result via the techniques shown} in~\cite{berry2015simulating}).
\begin{lemma}
\label{lem:genHadamard2}
Let $U_{\rm prep}: \ket{0} \mapsto \frac{1}{\sqrt{\|a\|_1}} \sum_k \sqrt{a_k} \ket{k} $ for $a_j>0$ and $\|a\|_1 = \sum_j |a_j|$ and let $U$ be the operation in \fig{meancirc}. Then
\begin{equation}
    {\rm Tr}(\left(U_{\rm prep} H^{\otimes m}U\ket{0}\!\bra{0} U^{\dagger} H^{\otimes m} U_{\rm prep}^\dagger\right) \ketbra{0}{0})= \left|\sum_{j=0}^{2^m-1} \frac{a_j\bra{\psi}U_{j}\ket{\psi}}{\|a\|_1} \right|^2. \nonumber
\end{equation}
Given access to controlled unitary implementations of $U_0,\ldots, U_{2^m-1}$ the circuit can also be implemented using $2m$ ancillary qubits within the Clifford+$T$ gate library.
\end{lemma}
\begin{proof}
Proof follows using the exact same reasoning as the above lemma.
\end{proof}

The above results may be applied to all sites of a Schwinger model lattice of $N = 2^{m+1}$ total fermionic sites.
For convenience, we shall restrict our attention to the reduced lattice of $2^m$ positronic sites when proving the following corollary.

\begin{corollary}[Expectation Value of Mean Pair Number]
\label{cor:fermmean}
Let $U$ be the unitary transformation enacted in~\fig{meancirc} on $m+2$ qubits where $$U_j=\begin{cases} -Z_j & j<2^{m}\\I & 2^{m+2}>j\ge 2^{m}\end{cases}$$ and $Z_j$ is the Pauli-$Z$ operator acting on the $j^{\rm th}$ qubit, and let $\Psi_j^\dagger$ be the creation operator acting on the positron mode $j\in[0,2^m-1]$.
Then $$\sum_{j=0}^{2^m-1} \bra{\psi}\Psi^\dagger_j \Psi_j\ket{\psi}=2^{m}\left( 2\sqrt{{\rm Tr}(\left(U\ket{0}\!\bra{0} U^{\dagger}\right) \ketbra{0}{0})}-1\right)$$
Further $U$ can be implemented using $2^m(m-2)$ Toffoli gates and $2(m+1)$ ancillas.
\end{corollary}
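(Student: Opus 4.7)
The plan is to obtain the corollary as a direct arithmetic consequence of \lem{genHadamard} combined with the Jordan--Wigner relation $\Psi_j^\dagger\Psi_j = (I-Z_j)/2$ already noted in \eq{numop}.

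First I would apply \lem{genHadamard} with $m+1$ control qubits, which is the relevant case because the piecewise family $\{U_j\}$ specified in the statement enumerates $2^{m+1}$ unitaries rather than $2^m$. This rewrites the return probability as
\begin{equation}
{\rm Tr}(U\ketbra{0}{0}U^\dagger\,\ketbra{0}{0}) = \left(\frac{1}{2^{m+1}}\sum_{j=0}^{2^{m+1}-1}\bra{\psi}U_j\ket{\psi}\right)^2.
\end{equation}

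Second, I would split the inner sum according to the two cases in the definition of $U_j$. The $2^m$ upper-half indices each contribute $\bra{\psi}I\ket{\psi}=1$, and the $2^m$ lower-half indices contribute $\bra{\psi}Z_j\ket{\psi} = 1 - 2\bra{\psi}\Psi_j^\dagger\Psi_j\ket{\psi}$ by the Jordan--Wigner identity. Adding these two contributions shows that the bracketed quantity above equals $1 - 2^{-m}\sum_{j=0}^{2^m-1}\bra{\psi}\Psi_j^\dagger\Psi_j\ket{\psi}$.

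Finally, I would take the positive square root of the return probability and solve for the sum of fermion occupations, which yields the corollary after a one-line rearrangement. The only subtlety worth flagging is the choice of branch of the square root: since the mean electron number is bounded between $0$ and the number of modes $2^m$, the expression $1 - 2^{-m}\sum_{j=0}^{2^m-1}\bra{\psi}\Psi_j^\dagger\Psi_j\ket{\psi}$ is non-negative, so the principal square root recovers it without sign ambiguity. Apart from this check the argument is purely mechanical, and there is no real obstacle beyond bookkeeping the prefactors of $2^m$ versus $2^{m+1}$.
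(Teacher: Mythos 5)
Your proposal is correct and follows essentially the same route as the paper: invoke the generalized Hadamard test with $m+1$ control qubits on the piecewise family $\{Z_j, I\}$, use the Jordan--Wigner identity $\Psi_j^\dagger\Psi_j=(I-Z_j)/2$, and solve for the occupation sum after taking the square root of the return probability. The only difference is cosmetic (you substitute the Jordan--Wigner relation directly into the sum rather than relating $\sum_j\langle Z_j\rangle$ to $\sqrt{P}$ first), and your explicit justification of the positive square-root branch is a small but welcome addition the paper leaves implicit.
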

\begin{proof}
The proof follows from the generalized Hadamard test given in~\lem{genHadamard} after applying the Jordan-Wigner transformation.
The Jordan-Wigner representation the fermion number operator acting on site $j$ is $\Psi_j^\dagger\Psi_j=(I-Z)/2$.
Thus the mean positron number is
\begin{equation}
    \sum_{j=0}^{2^m-1} \bra{\psi}\Psi^\dagger_j \Psi_j\ket{\psi}=2^{m-1}\left( 1-  \sum_{j=0}^{2^m-1} \frac{\bra{\psi} Z_j \ket{\psi}}{2^m}\right)
\end{equation}
Next it follows from the definition of $U_j$ that
\begin{align}
    \left(\sum_{j=0}^{2^{m+2}-1} \frac{\bra{\psi}U_{j}\ket{\psi}}{2^{m+2}} \right)^2&=\left(\frac{3}{4}-\sum_{j=0}^{2^{m}-1} \frac{\bra{\psi}Z_{j}\ket{\psi}}{2^{m+2}} \right)^2\nonumber\\
    &=\left(\frac{1}{2} \left( 1 + \frac{1}{2^{m}}\sum_{j=0}^{2^{m}-1} \bra{\psi} \Psi_j^\dagger \Psi_j \ket{\psi} \right) \right)^2
\end{align}
We therefore have from~\lem{genHadamard} that
\begin{equation}
    \sqrt{{\rm Tr}(\left(U\ket{0}\!\bra{0} U^{\dagger}\right) \ketbra{0}{0})}=\left(\frac{1}{2} \left( 1 + \frac{1}{2^{m}}\sum_{j=0}^{2^{m}-1} \bra{\psi} \Psi_j^\dagger \Psi_j \ket{\psi} \right) \right)
\end{equation}

The bound on the number of ancillary qubits comes from noting that if $U_j$ has $m$ controls then we can implement this using a $m-1$ controlled not gate and a single controlled $U_j$ gate using an additional ancilla to store the output.
Using~\cite{barenco1995elementary}, such a gate can be implemented using at most $m-2$ Toffoli gates each requires an ancillary qubit.
Therefore the number of ancillas needed in this protocol is $m-1$.
However, if each Toffoli gate is implemented using~\cite{jones2013low} then an additional ancilla is required to store the output.
Thus the number of qubits needed to implement the controlled operations is at most $m$.
Since $m+2$ ancillas are required in in the generalized Hadamard test protocol on $m+2$ qubits, this shows that the total number of ancillas used (including those used in the generalized Hadamard test circuit)  is at most $2(m+1)$.
\end{proof}

\begin{theorem}[Estimating Mean Pair Density]\label{thm:ae}
Let $U_\psi$ be an oracle such that $U_{\psi} \ket{0} = \ket{\psi}$ and assume the user only has access to an oracle $\widetilde{U}_\psi$ such that $\|\widetilde{U}_{\psi} - U_{\psi}\| \le \frac{\tilde \epsilon}{8}$.
There exists a quantum circuit that outputs an estimate $\mathcal{N}$ of the mean positron number per site such that the rms error in the estimate satisfies $\mathbb{E}(|\mathcal{N} - \frac{1}{2^m}\sum_{j=0}^{2^m-1} \bra{\psi}\Psi^\dagger_j \Psi_j\ket{\psi}|^2)\le \tilde \epsilon^2<1$ using a number of queries to $\widetilde{U}_{\psi}$ that is bounded above by
$$
    N_{\rm query}\le \frac{128\pi}{16-\pi^2}\left(\frac{\pi }{\tilde \epsilon}+{2} \right)\ln\left( \frac{5}{\tilde \epsilon^2}\right),
$$
and a number of auxillary $T$-gates that are bounded above by
$$
    N_{\rm T}\le \left(\frac{32\pi(2^{m+4}+8)(m-1)}{16-\pi^2}\left(\frac{\pi }{\tilde \epsilon}+{2} \right)+\frac{7}{3} \log^2\left(\frac{2\sqrt{2}\pi}{\tilde \epsilon}+4 \right)\log\left( \frac{16\log^2\left(\frac{2\sqrt{2}\pi}{\tilde \epsilon}+4 \right)}{\tilde \epsilon^2} \right)\right)\ln\left( \frac{5}{\tilde \epsilon^2}\right)
$$
\end{theorem}
\begin{proof}
The proof follows directly from prior results, the Amplitude Estimation Lemma (as given in Lemma 5 of~\cite{brassard2002quantum}) and the Chernoff bound.

For brevity, let us define $\hat{N}_p=\sum_{j=0}^{2^{m}-1} \bra{\psi} \Psi_j^\dagger\Psi_j \ket{\psi}$.
First note that assuming that $\mathcal{N}$ is an unbiased estimator of $\bra{\widetilde{\psi}}\hat{N}_p \ket{\widetilde{\psi}}$ (which amplitude estimation yields) we have that
\begin{align}
    \mathbb{E}\left(\left(\mathcal{N} - \bra{\psi} \hat{N}_p \ket{\psi} \right)^2\right)&=\mathbb{E}\left(\mathcal{N}^2 -2\bra{\widetilde{\psi}}\hat{N}_p \ket{\widetilde{\psi}}\bra{\psi} \hat{N}_p \ket{\psi}+ \bra{\psi} \hat{N}_p \ket{\psi}^2 \right)\nonumber\\
    &=\mathbb{E}\left(\mathcal{N}^2 -\bra{\widetilde{\psi}}\hat{N}_p \ket{\widetilde{\psi}}^2 + \left(\bra{\widetilde{\psi}}\hat{N}_p \ket{\widetilde{\psi}}- \bra{\psi} \hat{N}_p \ket{\psi}\right)^2 \right)\nonumber\\
    &= \mathbb{E}\left(\mathcal{N}^2 -\bra{\widetilde{\psi}}\hat{N}_p \ket{\widetilde{\psi}}^2\right) + \left\|\bra{\widetilde{\psi}}\hat{N}_p \ket{\widetilde{\psi}}- \bra{\psi} \hat{N}_p \ket{\psi}\right\|^2 
\end{align}
Thus the mean square error is at most the sum of the variance of the estimate from amplitude estimation using a faulty state and the maximum square error in the mean due to simulation and synthesis.

We have from~\cor{fermmean} that if the expectation value of the probability deviates by $\delta_L$ from the true expectation value $\left(\frac{1}{2}\left(1+\frac{1}{2^m}\sum_{j=0}^{2^m-1} \bra{\psi}\Psi^\dagger_j \Psi_j\ket{\psi}\right)\right)^2$ then the deviation in the inferred mean positron number is (provided $\delta_L \le 3{{\rm Tr}(\left(U\ket{0}\!\bra{0} U^{\dagger}\right) \ketbra{0}{0})}/4$) at most 
\begin{align}
    & \left|\sqrt{{\rm Tr}(\left(U\ket{0}\!\bra{0} U^{\dagger}\right) \ketbra{0}{0})} - \sqrt{{\rm Tr}(\left(U\ket{0}\!\bra{0} U^{\dagger}\right) \ketbra{0}{0})+\delta_L} \right|\nonumber\\
    &\qquad\le \max \left|\partial_{\delta_L}\sqrt{{\rm Tr}(\left(U\ket{0}\!\bra{0} U^{\dagger}\right) \ketbra{0}{0})+\delta_L}\right| |\delta_L|\nonumber\\
    &\qquad\le\frac{|\delta_L|}{\sqrt{{\rm Tr}(\left(U\ket{0}\!\bra{0} U^{\dagger}\right) \ketbra{0}{0})}}\nonumber\\
    &\qquad=\frac{2|\delta_L|}{1+\frac{1}{2^m}\sum_{j=0}^{2^m-1} \bra{\psi}\Psi^\dagger_j \Psi_j\ket{\psi}}.
\end{align}

Next we need to relate $\delta_L$ to the simulation error.
First note that from the triangle inequality we have that operator $O$ of norm $1$
\begin{equation}
    \|\bra{0} U_\psi^\dagger O U_{\psi} \ket{0} - \bra{0} \widetilde{U}_\psi^\dagger O \widetilde{U}_{\psi} \ket{0}\|\le \|\widetilde{U}_{\psi}^\dagger - U_{\psi}^\dagger\|\|OU_{\psi}^\dagger\| + \|\widetilde{U}_{\psi} - U_{\psi}\| \|OU_{\psi}\|\le 2\|\widetilde{U}_{\psi} - U_{\psi}\|.\label{eq:trianglebound}
\end{equation}

We then have that if $\| U_{\psi}- \widetilde{U}_\psi\| = \delta$ and $\hat{N}_p$ is Hermitian then it follows from~\eqref{eq:trianglebound} and the fact that $\|\hat{N}_p/2^{m}\|=1$ that
\begin{align}
    |\delta_L|\le 2^{-m}\left\|\bra{\widetilde{\psi}}\hat{N}_p \ket{\widetilde{\psi}}- \bra{\psi} \hat{N}_p \ket{\psi}\right\|\le 2 \delta
\end{align}

In turn we have that since $\hat{N}_p \succeq 0$
\begin{equation}
    2^{-m}\left\|\bra{\widetilde{\psi}}\hat{N}_p \ket{\widetilde{\psi}}- \bra{\psi} \hat{N}_p \ket{\psi}\right\|\le \frac{2|\delta_L|}{1+\frac{1}{2^m}\sum_{j=0}^{2^m-1} \bra{\psi}\Psi^\dagger_j \Psi_j\ket{\psi}} \le 4\|U_{\psi} - \widetilde{U}_{\psi}\|
\end{equation}
This implies that $2^{-2m}\left\|\bra{\widetilde{\psi}}\hat{N}_p \ket{\widetilde{\psi}}- \bra{\psi} \hat{N}_p \ket{\psi}\right\|^2 \le \tilde \epsilon^2/4$ if

\begin{equation}
\|U_{\psi}-\widetilde{U}_{\psi}\|\le \frac{\tilde \epsilon}{8}
\end{equation}
as claimed.

Now let us turn our attention to estimating the mean positron number using amplitude estimation.
The Amplitude Estimation Theorem (stated as Theorem 12 in~\cite{brassard2002quantum}) states that there exists a protocol that yields an estimate of the probability using $L$ iterations of the Grover operator that is defined by the reflections $I-2\ketbra{0}{0}$ and $I-2U\ketbra{0}{0}U^\dagger$.
The error in the estimate yielded in a bit-string by amplitude estimation can be made at most $\epsilon_L$ with probability at least $8/\pi^2$, by Lemma 5 of~\cite{brassard2002quantum}, by choosing $L$ to satisfy
\begin{equation}
    \frac{\sqrt{2}\pi}{L}+ \left(\frac{\pi}{L} \right)^2 = \epsilon_L.
\end{equation}
Solving the resultant quadratic equation for $L> 0$ yields 
\begin{align}\label{eq:Lbd}
    L&= \frac{ \pi}{\sqrt{2}\epsilon_L}\left(1+\sqrt{1+2\epsilon_L} \right)\le \left\lceil \frac{2 \pi}{\sqrt{2}\epsilon_L} +\frac{\pi}{\sqrt{2}} \right\rceil \le \frac{\sqrt{2}\pi}{\epsilon_L}+4.
\end{align}

On the other hand, if an error occurs in the amplitude estimation routine then the error in the inferred positron number is at most $2^m$, which means the error per site is at most $1$.
Thus if the success probability is at least $P$ and if we choose $\delta_L \le (1+\frac{1}{2^m}\sum_{j=0}^{2^m-1} \bra{\psi}\Psi^\dagger_j \Psi_j\ket{\psi})/4$ then the mean-square error in the procedure is at most
\begin{equation}
    P\left(\left(\frac{2|\delta_L|}{1+\frac{1}{2^m}\sum_{j=0}^{2^m-1} \bra{\psi}\Psi^\dagger_j \Psi_j\ket{\psi}}\right)^2+\epsilon_L^2\right)+ \left(1-P\right)\le P\left( \frac{\epsilon^2}{4}+\epsilon_L^2\right)+ \left(1-P\right).
\end{equation}
Next let us choose and $\epsilon_L={\epsilon}/{2}$, where $\epsilon^2$ is the target upper bound in the mean-square error for the estimate.
We then have that
\begin{equation}\label{eq:Peqthingy}
    \frac{P\epsilon^2}{2} +(1-P)=\epsilon^2
\end{equation}
In order to satisfy~\eqref{eq:Peqthingy} it suffices to aim for a success probability that satisfies
\begin{equation}\label{eq:epscond}
    P\ge \frac{2(1-\epsilon^2)}{2-\epsilon^2}.
\end{equation}
As the probability of success is at least $8/\pi^2$~\cite{brassard2002quantum}, we cannot directly guarantee that we can achieve this accuracy using a single run.

In fact the probability of success from amplitude estimation is in practice slightly lower than the $8/\pi^2$ lower bound given in~\cite{brassard2002quantum}.
This is because finite error is introduced in implementing the quantum Fourier transform using $H$, $T$ and CNOT gates.
For any projector $P$, and unitary operations $V$ and $\widetilde{V}$ we have that
\begin{align}
    \left|{\rm Tr}(P V\ketbra{0}{0}V^\dagger) - {\rm Tr}(P \widetilde{V}\ketbra{0}{0}\widetilde{V}^\dagger)\right| &\le \left|{\rm Tr}(P V\ketbra{0}{0}V^\dagger) - {\rm Tr}(P \widetilde{V}\ketbra{0}{0}{V}^\dagger)\right| \nonumber\\
    &\qquad+ \left|{\rm Tr}(P \widetilde{V}\ketbra{0}{0}V^\dagger) - {\rm Tr}(P \widetilde{V}\ketbra{0}{0}\widetilde{V}^\dagger)\right|\nonumber\\
    &\le 2\|V -\widetilde{V}\|.
\end{align}
Thus if the error in the Fourier transform is at most $\epsilon_{QFT}$ then from the union bound\footnote{For a countable set of events, the union bound (or Boole's inequality) states that the probability at least one event happens is at most the sum of the probabilities of each individual event.} we have that it suffices to take
\begin{equation}
P= P_0 - 2\epsilon_{QFT} \ge \frac{2(1-\epsilon^2)}{2-\epsilon^2}
\end{equation}

The success probability can, however, be amplified to this level at logarithmic cost using the Chernoff bound.
Specifically, if we take the median of $R$ different runs of the algorithm then the probability that the median constitutes a success is at least
\begin{equation}
    P_0\ge 1-e^{-\frac{R}{2}\left(\frac{\sqrt{8}}{\pi} - \frac{\pi}{2\sqrt{8}} \right)}
\end{equation}
This implies that~\eqref{eq:epscond} can be satisfied if we take $\epsilon \le 1$, $\epsilon_{QFT} \le \epsilon/8$ and
\begin{align}
    R&=\left\lceil \frac{8\sqrt{2}\pi \ln\left( \frac{2-\epsilon^2}{\epsilon^2+2\epsilon_{QFT}\epsilon^2 -4\epsilon_{QFT}}\right)}{16-\pi^2}\right\rceil \le \frac{8\sqrt{2}\pi}{16-\pi^2}\left(\ln\left( \frac{2-\epsilon}{\epsilon^2+2\epsilon_{QFT}\epsilon -4\epsilon_{QFT}}\right)+\frac{\sqrt{2}}{8} \right) \nonumber\\
    &\le  \frac{8\sqrt{2}\pi}{16-\pi^2}\left(\ln\left( \frac{4}{\epsilon^2}\right)+\frac{\sqrt{2}}{8} \right)\le\frac{8\sqrt{2}\pi}{16-\pi^2}\ln\left( \frac{5}{\epsilon^2}\right)\label{eq:Rbd}
\end{align}
Next let us count the total number of operations.
Every application of $U$ involves two queries to $U_\psi$.
Thus since two applications of $U$ are required in the Grover operator $I-2U\ketbra{0}{0}U^\dagger$ we then have that four queries of $U_\psi$ are required per Grover iteration.
Thus it follows from~\eqref{eq:Lbd} and \eqref{eq:Rbd} that the total number of queries needed obeys
\begin{equation}
    N_{\rm query} \le 4LR \le \frac{128\pi}{16-\pi^2}\left(\frac{\pi }{{\epsilon}}+{2} \right)\ln\left( \frac{5}{\epsilon^2}\right).
\end{equation}

The number of $T$-gates required in addition to the oracle queries is slightly harder to find.
First, note from~\cite{barenco1995elementary} and~\cite{jones2013low} that an $(m-1)$-controlled Toffoli gate requires $2(m-2)+1$ standard (two-controlled) Toffoli gates to implement, which in turn requires at most $4(2m-1)$ $T$-gates if measurement is used.
Further using the result of~\cite{gidney2018halving}, half of these Toffolis can be negated using measurement and post-selected Clifford operations, as used prior in \sec{kineticimplementFT}.
Thus the minimal cost of each $m$-controlled Toffoli gate is at most $4(m-1)$ $T$-gates.
There are at most $2^{m+1}$ such gates so the number of auxiliary operations is $2^{m+3}(m-1)$ to implement $U$ and thus $(2^{m+4}+4)(m-1)$ auxiliary $T$-gates suffice to implement $I-2U\ketbra{0}{0}U^\dagger$.
Similarly, at most $4(m-1)$ $T$-gates are needed to implement the controlled gate in $I-2\ketbra{0}{0}$.
Each Grover iteration therefore requires at most $(2^{m+4}+8)(m-1)$ auxiliary $T$-gates.
Therefore
\begin{align}
    N_T&\le ((2^{m+4}+8)(m-1)L+N_{T,QFT})R\nonumber\\
    &\le \left(\frac{32\pi(2^{m+4}+8)(m-1)}{16-\pi^2}\left(\frac{\pi }{{\epsilon}}+{2} \right)+N_{T,QFT}\right)\ln\left( \frac{5}{\epsilon^2}\right) \label{eq:NTAE}
\end{align}

Next we need to bound the complexity of performing the quantum Fourier transform as part of the phase estimation algorithm used above.
As discussed previously, we have that $\lceil \log(L)\rceil =\left\lceil \log\left(\frac{2\sqrt{2}\pi}{{\epsilon}}+4 \right) \right\rceil$ additional qubits are needed to perform the phase estimation.
Thus the number of single qubit rotations needed is at most
\begin{equation}
    \lceil \log(L)\rceil (\lceil \log(L)\rceil-1) \le \log(L)(\log(L)+1)\le 2\log^2\left(\frac{2\sqrt{2}\pi}{{\epsilon}}+4 \right),
\end{equation}
We need to ensure that the error is at most $\epsilon^2/8$ from these rotations.
From Box 4.1 of~\cite{nielsen2002quantum} we then have that the error per single qubit rotation can be as high as $\epsilon^2/(8\log(L)(\log(L)+1))$.
The average gate complexity for implementing these rotations using RUS synthesis~\cite{bocharov2015efficient} is at most
\begin{align}
    N_{T,QFT}&\le 2.28 \log^2\left(\frac{2\sqrt{2}\pi}{\epsilon}+4 \right)\log\left( \frac{16\log^2\left(\frac{2\sqrt{2}\pi}{{\epsilon}}+4 \right)}{\epsilon^2} \right)\nonumber\\
    &\le \frac{7}{3} \log^2\left(\frac{2\sqrt{2}\pi}{{\epsilon}}+4 \right)\log\left( \frac{16\log^2\left(\frac{2\sqrt{2}\pi}{{\epsilon}}+4 \right)}{\epsilon^2} \right).\label{eq:NTQFT}
\end{align}
The claim involving the number of auxillary $T$-gates then follows from~\eqref{eq:NTAE} and~\eqref{eq:NTQFT} after taking $\epsilon = \widetilde{\epsilon}$.
\end{proof}

\subsection{Importance Sampling}
{In the above discussion we assumed that we wish to use an unbiased estimator of the mean fermion number, which is} appropriate to use if we have no a priori knowledge of the underlying distributions of fermions at each site.  However, in practice this is unlikely to be true.
For example, the likelihood of measuring a result that is many excitations away from the initial state should be low.  This is because mean energy is a constant of motion and Markov's inequality states that the probability of observing a state with energy $a$ times the mean is at most $1/a$.  
In such cases, we can therefore reduce the variance of our estimate by appropriately re-weighting the sum using importance sampling.  The use of importance sampling, or alternatively prior information, can be used to dramatically reduce the number of classical measurements needed and in particular such knowledge can be used to outperform the na\"{i}ve quantum algorithm despite the quantum speedup.

Importance sampling works as follows.  If we are to compute an expectation value of a function $f$ given an underlying probability distribution $P(x)$ and a proposal distribution $Q(x)>0$ then we can write
\begin{equation}
    \mathbb{E}(f) = \sum_x f(x) P(x) = \sum_x \frac{f(x) P(x)}{Q(x)} Q(x),
\end{equation}
If we define $g(x) := f(x)P(x)/Q(x)$ then we can view this as an expectation value of $g(x)$ over the proposal distribution $Q(x)$.  While nothing has changed in the underlying mean by this substitution, the variance of the result can be lowered by an appropriate choice of $Q(x)$.  This means that fewer samples are needed if such a technique is employed.  While the scaling with $\epsilon$ will not approach that of Heisenberg-limited quantum approaches, if $H=\sum_j a_j U_j$ (for Hamiltonian coefficients $a_j$ and unitary operators $U_j$), the scaling with respect to the $a_j$ may be reduced below that of the quantum approaches if importance sampling lowers the variance to $o((\sum_j|a_j|)^2)$.\footnote{Here we use ``little-o'' notation to mean that for any functions $f$ and $g$, $f(n)\in o(g(n))$ if and only if $\lim_{n\rightarrow \infty} f(n)/g(n)=0$.} This begs the question of whether quantum techniques can also employ a similar strategy to reduce the query complexity given prior knowledge of the distribution.

The quantum approach that we propose to address this issue  is based on similar reasoning to importance sampling and can, in some cases, reduce the scaling with the number of terms.  We state the first result formally in the following proposition.

\begin{proposition}
Let $\ket{\psi} = \sum_j \alpha_j \ket{\psi_j}$ for $\alpha_j>0$ and $\ket{\psi_j} = P_j \ket{\psi}/ |P_j \ket{\psi}|$ 
where $P_j$ is the projector onto all states that have $j$ positronic modes occupied.  Further, let $ \beta_j \ge\beta_{\min}>0$ and $\alpha_j\le \beta_j$ for an efficiently computable sequence $\beta_j$.  
Then with high probability, for any $\tilde \epsilon \le (\frac{1}{2}) \sum_k |\alpha_k|^2/|\beta_k|^2$, the number of queries to $U_\psi$ needed to learn an estimate, $\mathcal{N}$, of the mean positron number per site such that the rms error in the estimate satisfies $\mathbb{E}(|\mathcal{N} - \frac{1}{2^m}\sum_{j=0}^{2^m-1} \bra{\psi}\Psi^\dagger_j \Psi_j\ket{\psi}|^2)\le \tilde \epsilon^2<1$ is in
$$
N_{\rm query}\in \widetilde{O}\left( \frac{\max(\sum_j |\beta_j|^2 j,1)}{2^{m/2}\beta_{\min} \widetilde{\epsilon}}\right).
$$
\end{proposition}
\begin{proof}
First, note that the mean positron density can be expressed as $\mathcal{M}:=\sum_{j=0}^{2^m-1} j P_j/2^m$.  Next note that under our assumptions $\bra{\psi}\mathcal{M}\ket{\psi} = \sum_j |\alpha_j|^2 j/2^m$.  It then follows that for any sequence $\beta_j>0$,
\begin{equation}
    \bra{\psi}\mathcal{M}\ket{\psi} = \sum_j \frac{|\alpha_j|^2}{|\beta_j|^2} \left(\frac{|\beta_j|^2 j}{2^m}\right).
\end{equation}
If $0\le \alpha_j / \beta_j \le 1$ then this can  be thought of as an expectation of the new observable
\begin{equation}
    \big(\sum_k |\alpha_k|^2/ |\beta_k|^2\big)\sum_j|\beta_j|^2 j P_j/2^m
\end{equation} 
with respect to $\sum_j \alpha_j/\beta_j \ket{\psi_j} /\sqrt{\sum_j |\alpha_j|^2/|\beta_j|^2}$. The strategy is to prepare this state and estimate both the overall factor, $\sum_k |\alpha_k|^2/ |\beta_k|^2$, and the new observable in order to reconstruct $\mathbb{E}(M)$.

Next, using a single query to $U_\psi$ we can prepare the state $\sum_j \alpha_j \ket{\psi_j}$.  Thus by computing the function $\beta_j$ on the input value of $j$ (which can efficiently be computed from the inputs using a poly-sized adder circuit), we can perform the following transformation using $O(1)$ queries, the calculation of an inverse trig function and a sequence of controlled rotations:
\begin{equation}
\ket{0} \mapsto \sum_j \alpha_j \ket{\psi_j} \ket{\beta_j} \ket{\sin^{-1}(\beta_{\min}/\beta_j)}\left(\frac{\beta_{\min}}{\beta_j} \ket{0} + \sqrt{1 - \frac{\beta_{\min}^2}{\beta_j^2}} \ket{1} \right).
\end{equation}
If this ancillary qubit is measured to be zero (and the output from the $\beta_j$ oracle query and the arcsine are inverted) then this allows us to prepare the state $\sum_j \alpha_j/\beta_j \ket{\psi_j} /\sqrt{\sum_j |\alpha_j|^2/|\beta_j|^2}$ using a form of quantum rejection sampling~\cite{ozols2013quantum} and is exactly the state that we need to compute our expectation value for this version of quantum importance sampling.
The probability of measuring the ancilla qubit to be zero, which we will hitherto consider to be ``success,'' is
\begin{align}
    \sum_j \beta_{\min}^2|\alpha_j|^2 / |\beta_j|^2 \le 2^{m} \beta_{\min}^2.
\end{align}
Thus if we use amplitude amplification to boost this probability we can, based on these assumptions, transform the success probability using $O(1/(\beta_{\min}2^{m/2}))$ queries to $U_{\psi}$ to
\begin{align}
    P'(0):=&\sin^2\left( \left(2\left\lfloor \frac{\pi}{4\sin^{-1}\left( \beta_{\min} 2^{m/2}\right)} -\frac{1}{2} \right\rfloor+1 \right) \sin^{-1}\left(\sqrt{\sum_j \beta_{\min}^2|\alpha_j|^2 / |\beta_j|^2}\right) \right)\nonumber\\
    &\in O\left(\sin^2\left( \frac{\pi}{2} \sqrt{\frac{1}{2^{m}}\sum_j\frac{|\alpha_j|^2}{|\beta_j|^2}} \right) \right).\label{eq:preamp}
\end{align}
Let $\Delta := 1 - \sqrt{\frac{1}{2^m}\sum_j\frac{|\alpha_j|^2}{|\beta_j|^2}}\ge 0$, then we have from Taylor's theorem that the success probability now scales as
\begin{equation}
    P'(0) \in O\left(\sin^2\left(\frac{\pi}{2} \left(1-\Delta \right) \right) \right) \in O\left(\frac{1}{2^m} \sum_{j} \frac{|\alpha_j|^2}{|\beta_j|^2} \right),
\end{equation}
and $O(1/(\beta_{\min} 2^{m/2}))$ queries to $U_\psi$ are needed to amplify the probability of $0$ to this level.  Thus by using amplitude estimation on the probability in~\eqref{eq:preamp} we can learn for any $\tilde{\delta}>0$ an estimate $E'$ with high probability such that~\cite{brassard2002quantum} 
\begin{equation}
    |E'- P'(0)| \le 2^{-m}\widetilde{\delta},
\end{equation}
using $O(2^{m} \widetilde{\delta}^{-1})$ applications of the underlying unitary circuit~\cite{brassard2002quantum}, which in this case consists of $O(1/(\beta_{\min} 2^{m/2}))$ oracle queries.
Therefore we can learn $2^{-m}\sum_j |\alpha_j|^2/|\beta_j|^2$ within relative error $2^{-m} \widetilde{\delta} = \widetilde\epsilon/  \left( {2^m}{ \max(\sum_j |\alpha_j|^2 j,1)}\right)$ using
\begin{equation}\label{eq:epsilon0}
    N_{\rm queries} \in O\left(\frac{1}{\beta_{\min} \widetilde{\epsilon}}  \left(\frac{\max(\sum_j|\alpha_j|^2 j,1)}{2^{m/2}}\right) \right).
    \end{equation}
Equivalently, we now know the value of $\sum_j |\alpha_j|^2/|\beta_j|^2$ within error $\widetilde{\epsilon} / \max(\sum_j (|\alpha_j|^2 j,1))$, with high probability, using the number of queries given in~\eqref{eq:epsilon0}.  In order to ensure that the error is non-zero here we will take $\widetilde{\epsilon} \le \frac{1}{2}\sum_j |\alpha_j|^2/|\beta_j|^2$ to ensure that the resulting estimate is greater than zero.

We can then employ our knowledge of $\sum_j |\alpha_j|^2/|\beta_j|^2$ to amplify the probability of success that we would have seen from the above quantum rejection sampling step to near unit probability.  Specifically, we use amplitude amplification~\cite{brassard2002quantum,yoder2014fixed} to achieve a unitary transformation that maps, for any $\widetilde{\delta'}>0$ 
\begin{equation}
\sum_j \alpha_j \ket{\psi_j}\left(\frac{\beta_{\min}}{\beta_j} \ket{0} + \sqrt{1 - \frac{\beta_{\min}^2}{\beta_j^2}} \ket{1} \right) \mapsto  \frac{1}{\sqrt{\sum_k \frac{|\alpha_k|^2}{|\beta_k|^2}}}\sum_j \frac{\alpha_j}{\beta_j} \ket{\psi_j} + \widetilde{\delta'} \ket{\phi}:=\ket{\chi}
\end{equation}
for some unit-vector $\ket{\phi}$ (which is not necessarily orthogonal to $\ket{\psi_j}$)
using 
\begin{equation}
    O\left(\log (1/\widetilde \delta')/\left(\sqrt{\sum_k \frac{|\alpha_k|^2}{|\beta_k|^2}}\beta_{\min} \right) \right)
\end{equation} 
queries to $U_\psi$.

Next using the above trick we can decompose each $P_j$ into a sum of unitary operations as follows:
\begin{equation}
    P_j = \frac{I - (I - 2 P_j)}{2}.
\end{equation}
Thus we can express $\sum_j\left(\frac{|\beta_j|^2 jP_j}{2^m}\right) = \sum_j a_j U_j$, for unitary matrices $U_j$ with $\|a\|_1 \in O\left(\frac{\sum_j|\beta_j|^2 j}{2^m}\right)$.  
It further follows that
\begin{align}\label{eq:notrescaled}
    \bra{\chi} \sum_j a_j U_j \ket{\chi} = \left(\sum_k \frac{|\alpha_k|^2}{|\beta_k|^2}\right)^{-1} \bra{\psi} \mathcal{M} \ket{\psi} + O\left(\frac{\widetilde{\delta'}}{\sqrt{\sum_k \frac{|\alpha_k|^2}{|\beta_k|^2}}} \right).
\end{align}
It then follows that
\begin{equation}\label{eq:rescaled_bd}
    \left|\sum_k\frac{|\alpha_k|^2}{|\beta_k|^2}\bra{\chi} \sum_j a_j U_j \ket{\chi} - \bra{\psi} \mathcal{M} \ket{\psi} \right| \in O\left(\widetilde{\delta}'\sqrt{\sum_k\frac{|\alpha_k|^2}{|\beta_k|^2}} \right).
\end{equation}

Thus using \lem{genHadamard2} we can learn an estimate $E$ such that 
\begin{equation}
    |E-\bra{\chi} \sum_j a_j U_j \ket{\chi}| \le \widetilde{\epsilon} \left(\sum_k \frac{|\alpha_k|^2}{|\beta_k|^2}\right)^{-1},
\end{equation}
with high probability, using
\begin{equation}
    N_{\rm queries } \in {O}\left(\sum_k \frac{|\alpha_k|^2}{|\beta_k|^2}\frac{\sum_j |\beta_j|^2 j}{2^m  \widetilde{\epsilon}}\left( \frac{\log(1/\widetilde{\delta}')}{\sqrt{\sum_k \frac{|\alpha_k|^2}{|\beta_k|^2}}\beta_{\min}}\right)  \right).
\end{equation}
We then see from~\eqref{eq:notrescaled} and~\eqref{eq:rescaled_bd} that with high probability
\begin{align}
    &\left|\sum_k\frac{|\alpha_k|^2}{|\beta_k|^2} E - \bra{\psi} \mathcal{M} \ket{\psi} \right| \nonumber\\
    &\qquad\le \left|\sum_k\frac{|\alpha_k|^2}{|\beta_k|^2} \left(E-\bra{\chi} \sum_j a_j U_j \ket{\chi}\right) \right| + \left|\sum_k\frac{|\alpha_k|^2}{|\beta_k|^2} \bra{\chi} \sum_j a_j U_j \ket{\chi} - \bra{\psi} \mathcal{M} \ket{\psi}\right|\nonumber\\
    &\qquad \in O\left( \widetilde{\epsilon} + \widetilde{\delta}'\sqrt{\sum_k\frac{|\alpha_k|^2}{|\beta_k|^2}}\right).
\end{align}

The result then follows by taking $\widetilde{\delta}' \in O(\widetilde{\epsilon} / \sqrt{\sum_k |\alpha_k|^2/|\beta_k|^2})$ and noting that the total number of queries made to estimate $E$ is in
\begin{equation}\label{eq:epsilon1}
    N_{\rm queries}\in \widetilde{O}\left( \sqrt{\sum_k \frac{|\alpha_k|^2}{|\beta_k|^2}}\frac{\sum_j |\beta_j|^2 j}{2^m \beta_{\min}  \widetilde{\epsilon}} \right).
\end{equation}
The total complexity is this added to the cost of learning $\sum_k |\alpha_k|^2/|\beta_k|^2$ within relative error $\widetilde{\epsilon} / \bra{\psi} \mathcal{M} \ket{\psi} $ which is given by~\eqref{eq:epsilon0} and~\eqref{eq:epsilon1} to be
\begin{align}
    N_{\rm queries}&\in \widetilde{O}\left( \frac{1}{\beta_{\min} \widetilde{\epsilon}} \left(\sqrt{\sum_k \frac{|\alpha_k|^2}{|\beta_k|^2}}\frac{\sum_j |\beta_j|^2 j}{2^m } + \left(\frac{\max(\sum_j|\alpha_j|^2j,1)}{2^{m/2}}\right)\right) \right)\nonumber\\
    &\in \widetilde{O}\left( \frac{\max(\sum_j |\beta_j|^2 j,1)}{2^{m/2}\beta_{\min} \widetilde{\epsilon}}\right).
\end{align}
\end{proof}

It follows that this importance sampling inspired approach will potentially have an advantage over using importance sampling on the outputs if $\sum_j |\beta_j|^2 j / \beta_{\min} \in o(2^{m/2})$.  This will occur when the function $\beta_j$ has support only over a small fraction of the allowable modes.

Additionally prior information can be used in a more sophisticated manner using Bayesian methods for phase estimation in place of the Fourier-based PE that we currently use in phase estimation~\cite{svore2013faster,wiebe2016efficient}.  This allows any prior information about the mean-particle number to be used in a well principled way (at the price of increased classical computation).  However, we leave detailed exploration of these methods for subsequent work because in order to show the {benefits from using the prior we need to assume a well motivated prior distribution, which is beyond this paper's scope.}

\section{Simulation including Estimation of Mean Pair Density}\label{sec:fullsim}

While many choices for an observable could have been made, we have focused on a particular observable: the mean positron density, which in turn gives us the number of pairs produced (on average) in the quantum dynamics.
For simulations optimized for fault-tolerant quantum devices, the number of $T$ gates needed to simulate this quantity within fixed rms error is described in the following corollary; this corollary is the immediate consequence of upper bounds we will give for fault-tolerant simulations involving measurement via sampling or amplitude estimation.

\begin{corollary}[Real-Time Evolution of Mean Pair Density in the Schwinger Model]

\label{cor:trotresultcorb}
Let $\tilde \epsilon > 0$ (rms error tolerance) and $T\in \mathbb{R}$ (total evolution time).
Let $\ket{\psi(T)}$ be the state $\ket{\psi(0)}$ evolved under the Schwinger model Hamiltonian $H$ of \eq{ham} for time $T$.
Denote $\hat N_p$ as the operator that counts all positrons in a state of the lattice.
Then $\mathbb{E}\left( \left(\frac{\mathcal{N} - \bra{\psi(T)} \hat N_p \ket{\psi(T)}}{{N}}\right)^2\right) \leq \tilde \epsilon^2$
can be achieved via direct sampling using a number of $T$-gates that are, for constant $\mu$ and $x$ that is lower bounded by a constant (i.e. $x\in \Omega(1)$), in
$$ O\biggl( \frac{(NT)^{3/2} x^{1/2} \Lambda \log \Lambda }{\tilde \epsilon^{5/2}} \log \biggl (\frac{NT x\Lambda}{\tilde \epsilon} \biggr) \biggr )$$  and a number of qubits that is in $O(N\log \Lambda)$ .

Similarly, the number of $T$-gates needed to simulate the evolution of the mean pair density using amplitude estimation is in
$$
O\left(\frac{ (NT)^{3/2} x^{1/2}\Lambda\log \Lambda }{\tilde \epsilon^{3/2}}\log \left(\frac{ NTx\Lambda}{\tilde \epsilon} \right)\log\left(\frac{1}{\tilde \epsilon}\right)\right)
$$
and requires $O(N\log(\Lambda) + \log(1/\tilde \epsilon))$ qubits.
\end{corollary}

The following section will end with the proof of this corollary.

\subsection{Cost Analysis in Fault-Tolerant Model}\label{sec:upperbounds}

We combine the results of prior sections to determine a sufficient number of $T$-gates to simulate the pair density evolution for arbitrary simulation parameters to $\tilde \epsilon$ precision in the rms error of the mean pair density.
First, we consider estimation via sampling, incorporating parameters $\tau$ and $\kappa$ to encode the distribution of error (in order to optimize this distribution with respect to $T$-gate cost).

\begin{theorem}[Real-Time Evolution of Mean Pair Density via Direct Sampling]
\label{thm:onedtrotresults}
Consider any $\tilde \epsilon > 0$ (rms error tolerance) and $T\in \mathbb{R}$ (total evolution time).
Let $\ket{\psi(T)}$ be the state $\ket{\psi(0)}$ evolved under the Schwinger model Hamiltonian $H$ of \eq{ham} for time $T$.
Denote $\hat N_p$ as the operator that counts all positrons in a state of the lattice.
There exists an operation $W(T)$ that may be implemented on a quantum computer such that, if $\mathcal{N}$ denotes an empirically observed sample mean of the number of positrons in state $W(T)\ket{\psi(0)}$, then 
\begin{equation}
    \mathbb{E}\left( \left(\frac{\mathcal{N} - \bra{\psi(T)} \hat N_p \ket{\psi(T)}}{{N}}\right)^2\right) \leq \tilde \epsilon^2.
\end{equation}
Furthermore, the entire process requires on average at most
\begin{align*}
\frac{  (N T)^{3/2}  \Lambda \log(2\Lambda) x^{1/2} }{2^{1/4} \tilde \epsilon^{5/2}} \ln\left(\frac{2^{9/4} N^{1/2}(6N-5) T^{3/2}  \Lambda x^{1/2} \rho(\tilde \epsilon/\sqrt{8})}{\tilde \epsilon^{3/2}} \right) \\
    \times \gamma(\tilde \epsilon) \rho(\tilde \epsilon/\sqrt{8})\lambda\left(\frac{\tilde \epsilon ^{3/2}}{2^{9/4}N^{1/2} T^{3/2}  \Lambda x^{1/2} \rho(\tilde \epsilon/\sqrt{8})}\right)\nu(\tilde \epsilon,1/2)
 \end{align*}
$T$-gates, where the factors are $\rho$ as in \lem{steps}, $\lambda$ as in \thm{ucost}, $\nu$ as in \lem{epsilonbound}, and
\begin{equation*}
\gamma(\tilde \epsilon) = \frac{\left( \eta + \ln\left(2^{9/4} N^{1/2}(6N-5) T^{3/2}  \Lambda x^{1/2} \rho(\tilde\epsilon/\sqrt{8})/\tilde \epsilon^{3/2} \right) \right)}{ \ln\left(2^{9/4} N^{1/2}(6N-5) T^{3/2}  \Lambda x^{1/2} \rho(\tilde \epsilon/\sqrt{8})/\tilde \epsilon^{3/2} \right)},
\end{equation*}
and $ N(\eta + 1) +4\eta - \lfloor \log \eta \rfloor -1$ qubits are required.
\end{theorem}

\begin{proof}
Assume $0<\kappa<1$, which we include in order to numerically optimize the error distribution.
If we have $ \| W(T) - e^{-iHT} \| \leq \tilde\epsilon\sqrt{ \kappa}$ and choose $N_{shots} = \ceil{1/4\tilde \epsilon^2(1-\kappa)}$, then \lem{epsilonbound} implies the bound on the expectation value.
We therefore take $W(T)=\widetilde V^s(t)$ where we choose $s$ according to \lem{steps}, so that $\| V^s(t) - e^{-iHT} \| \leq \tau \tilde\epsilon\sqrt{ \kappa} $.
Here, $\tau$ is a parameter with $0<\tau < 1$ that expresses the distribution of error between $\delta_{\rm circ}$ and $\delta_{\rm Trot}$.
Assume that $\delta_{\rm circ} =  (1-\tau)\tilde\epsilon \sqrt{ \kappa}/s$.

We get that 
\begin{align}
    \| W(T) - e^{-iHT} \| = \|\widetilde V^s(t) - e^{-iHT} \| &\leq \|\widetilde V^s(t) - V^s(t) \| + \| V^s(t) - e^{-iHT} \| \nonumber \\
    &\leq s \delta_{\rm circ} + \tau \tilde\epsilon\sqrt{ \kappa} = (1- \tau) \tilde \epsilon \sqrt{ \kappa} + \tau \tilde \epsilon\sqrt{ \kappa} = \tilde \epsilon\sqrt{\kappa}.
\end{align}
By \lem{epsilonbound}, this implies that
\begin{equation}
    \mathbb{E}\left( \left(\frac{\mathcal{N} - \bra{\psi(T)} \hat N_p \ket{\psi(T)}}{N}\right )^2  \right) \leq \tilde \epsilon^2.
\end{equation}

\thm{ucost} gives the cost of a single Trotter step as a function of $\delta_{\rm circ}$, $\Lambda = 2^{\eta-1}$, and $N$.
Denote this cost by $C_{\mathrm{Trot}}$.
The total cost of the entire process is thus $C:=s  C_{\mathrm{Trot}} N_{shots}$, which is, with fixed lattice parameters, only a function of the parameters $\kappa$ and $\tau$ that encode the error distribution.

To achieve the cost bounds in the theorem statement, we set $\kappa = \tau = 1/2$.
We are therefore choosing $s$ such that $\| V^s(t) - e^{-iHT} \| \leq  \tilde \epsilon/\sqrt{8} $, i.e.
$$s = \left\lfloor  \frac{N^{1/2} T^{3/2}  \Lambda x^{1/2} \rho(\tilde \epsilon/\sqrt{8})}{(\tilde \epsilon^2 /8)^{1/4}}\right\rfloor$$ 
and $N_{shots} = \floor{\nu(\tilde \epsilon,1/2)/2\tilde \epsilon^2}$.
Furthermore, since we are taking $\delta_{\rm circ} = \tilde \epsilon/(2^{3/2}s)$, we have
$$C_{\mathrm{Trot}} = \left( N\eta^2 + N \eta \ln\left(\frac{6N-5}{\tilde \epsilon/2^{3/2}s} \right) \right) \lambda(\tilde \epsilon/2^{3/2}s).$$
Bounding $C$, we get
\begin{align}
    C &= s C_{\mathrm{Trot}} N_{shots} \nonumber \\
    &\leq  \frac{N^{1/2} T^{3/2}  \Lambda x^{1/2} \rho(\tilde \epsilon/\sqrt{8})}{\tilde \epsilon^{1/2} /8^{1/4}} \cdot \left( N\eta^2 + N \eta \ln\left(\frac{6N-5}{\tilde \epsilon/2^{3/2}s} \right) \right) \lambda(\tilde \epsilon/2^{3/2}s) \cdot \frac{\nu(\tilde \epsilon,1/2)}{2\tilde \epsilon^2} \nonumber \\
    &= \frac{  N^{3/2} T^{3/2}  \Lambda x^{1/2} }{2^{1/4} \tilde \epsilon^{5/2}} \left( \eta^2 + \eta \ln\left(\frac{6N-5}{\tilde \epsilon/2^{3/2}s} \right) \right)  \rho(\tilde \epsilon/\sqrt{8})\lambda(\tilde \epsilon/2^{3/2}s)\nu(\tilde \epsilon,1/2)  \nonumber \\
    &= \frac{  N^{3/2} T^{3/2}  \Lambda \eta x^{1/2} }{2^{1/4} \tilde \epsilon^{5/2}} \ln\left(\frac{2^{9/4} N^{1/2}(6N-5) T^{3/2}  \Lambda x^{1/2} \rho(\tilde \epsilon/\sqrt{8})}{\tilde \epsilon^{3/2}} \right) \nonumber \\
    &\quad \times \gamma(\tilde \epsilon) \rho(\tilde \epsilon/\sqrt{8})\lambda\left(\frac{\tilde \epsilon ^{3/2}}{2^{9/4}N^{1/2} T^{3/2}  \Lambda x^{1/2} \rho(\tilde \epsilon/\sqrt{8})}\right)\nu(\tilde \epsilon,1/2),
 \end{align}
where
\begin{equation}
    \gamma(\tilde \epsilon) = \frac{\left( \eta + \ln\left(2^{9/4} N^{1/2}(6N-5) T^{3/2}  \Lambda x^{1/2} \rho(\tilde \epsilon/\sqrt{8})/\tilde \epsilon^{3/2} \right) \right)}{\ln\left(2^{9/4} N^{1/2}(6N-5) T^{3/2}  \Lambda x^{1/2} \rho(\tilde \epsilon/\sqrt{8})/\tilde \epsilon^{3/2} \right)}.
\end{equation}

The total number of required qubits is given by \thm{ucost}.
\end{proof}

We numerically minimize the total cost, $C(\kappa, \tau)$, over $\kappa$ and $\tau$ for given lattice parameters using Mathematica, reporting the results in \append{costnumeric}.
Unable to produce an analytical minimum that is both lucid and avoids approximation, we have stated the result with $\kappa = \tau = 1/2$ and note that, over the parameter range in \append{costnumeric}, the upper bound on $C$ is no more than twice the optimal T gate bound found numerically.

Next, we consider a simulation incorporating the alternative measurement scheme using amplitude estimation.

\begin{theorem}[Real-Time Evolution of Mean Pair Density via Amplitude Estimation]
\label{thm:onedtrotresults2}
Under the assumptions of~\thm{onedtrotresults}, the condition $
    \mathbb{E}\left( \left(\frac{\mathcal{N} - \bra{\psi(T)} \hat N_p \ket{\psi(T)}}{{N}}\right)^2\right) \leq \tilde \epsilon^2$
can be satisfied using amplitude amplification with a quantum computation that uses a number of $T$-gates bounded above by
$$
 \frac{168\pi^2  x^{1/2} (NT)^{3/2} \Lambda\eta\ln(5/\tilde \epsilon^2)}{\tilde \epsilon^{3/2}\rho^{-1}(\tilde \epsilon/16)\alpha^{-1}(\tilde \epsilon)\zeta^{-1}(\tilde \epsilon)}  \ln \left(\frac{384(Nx)^{1/2}T^{3/2} \Lambda }{\tilde \epsilon^{3/2}\rho^{-1}(\tilde \epsilon/16)}\right) \lambda\left(\frac{\tilde \epsilon^{5/2}\rho^{-1}(\tilde \epsilon/16)}{64 (Nx)^{1/2} T^{3/2} \Lambda }\right)+\mathscr{T}
$$
and at most $N(\eta+1)+3\eta - \lfloor \log(2\eta -1)\rfloor +1 + 2\lceil \log(N/2)\rceil +\left\lceil \log\left(\frac{2\sqrt{2}\pi}{\tilde \epsilon}+4 \right) \right\rceil$ qubits, where
\begin{align*}
\alpha(\tilde \epsilon) &= \left(1 + \frac{2\tilde \epsilon}{\pi} \right),\qquad
\zeta(\tilde \epsilon) = \left(1 +\frac{\eta}{\ln \left(\frac{384 (Nx)^{1/2}T^{3/2} \Lambda \rho(\tilde \epsilon/16)}{\tilde \epsilon^{3/2}}\right)} \right), \ \text{and}\\
\mathscr{T}  &= \left(\frac{32\pi^2(16N+8)\log(N/2)}{(16-\pi^2)\tilde \epsilon}\alpha(\tilde \epsilon)+\frac{7}{3} \log^2\left(\frac{2\sqrt{2}\pi}{\tilde \epsilon}+4 \right)\log\left( \frac{16\log^2\left(\frac{2\sqrt{2}\pi}{\tilde \epsilon}+4 \right)}{\tilde \epsilon^2} \right)\right)\ln\left( \frac{5}{\tilde \epsilon^2}\right).
\end{align*}

\end{theorem}
\begin{proof}
In order to prove the bound on T gates we need to apply the result of~\thm{ae} but provide a concrete quantum circuit implementation of the oracle.
In particular, we require that this simulation satisfy $\|\widetilde{U}_{\psi} -U_\psi\| \le \frac{\tilde \epsilon}{8}$ in total.
This error needs to be distributed over two sources of error: Trotter-Suzuki error and circuit synthesis error.
For simplicity, we choose both sources of error to be at most $\frac{\tilde \epsilon}{16}$.
We then have from~\lem{steps} that the number of Trotter steps $s$ needed to simulate the pair density within an error tolerance of $\tilde \epsilon/16$ is at most
\begin{equation}
    s\le   \frac{4 N^{1/2}T^{3/2} x^{1/2} \Lambda }{\tilde \epsilon^{1/2}}\rho(\tilde \epsilon/16).
\end{equation}
This process needs to be repeated $N_{\rm query}$ times to achieve the target accuracy, where from~\thm{ae} we have that
\begin{align}
    N_{\rm query}&\le \frac{128\pi}{16-\pi^2}\left(\frac{\pi }{\tilde \epsilon}+2 \right)\left(\ln\left( \frac{5}{\tilde \epsilon^2}\right)\right)\nonumber\\
    &\le \frac{21\pi^2 \ln(5/\tilde \epsilon^2)}{\tilde \epsilon}\alpha(\tilde \epsilon).\label{eq:NqueryAE}
\end{align}
Next note that a single query to a controlled Trotter step requires two repetitions of each rotation to implement the controlled version of the operator.
Note that this is in contrast with static simulation wherein doubling the number of rotations is not needed~\cite{wecker2014gate}.
Therefore the total number of Trotter steps in the amplitude amplification protocol obeys
\begin{equation}
    2N_{\rm query} s \le \frac{168\pi^2 (Nx)^{1/2} T^{3/2} \Lambda\ln(5/\tilde \epsilon^2)}{\tilde \epsilon^{3/2}} \rho(\tilde \epsilon/16) \alpha(\tilde \epsilon).\label{eq:trotterRepsAE}
\end{equation}
We then have from~\thm{ucost} that the cost of synthesis with $s$ steps is
\begin{align}
    &N \eta (\eta+\ln(96 s /\tilde \epsilon)) \lambda(\tilde \epsilon/16s)\nonumber\\
    &\qquad\le N\eta\left(\eta + \ln \left(\frac{384(Nx)^{1/2}T^{3/2} \Lambda \rho(\tilde \epsilon/16)}{\tilde \epsilon^{3/2}} \right)\right)\lambda\left(\frac{\tilde \epsilon^{5/2}}{64(Nx)^{1/2} T^{3/2} \Lambda \rho(\tilde \epsilon/16)} \right)\nonumber\\
    &\qquad\le N\eta\ln \left(\frac{384(Nx)^{1/2}T^{3/2} \Lambda \rho(\tilde \epsilon/16)}{\tilde \epsilon^{3/2}}\right) \zeta(\tilde \epsilon)\lambda\left(\frac{\tilde \epsilon^{5/2}}{64(Nx)^{1/2} T^{3/2} \Lambda \rho(\tilde \epsilon/16)}\right).\label{eq:synthcostAE}
\end{align}
Thus the contribution to the total $T$-count from the Trotter-Suzuki operations in the simulation is at most the product of~\eqref{eq:trotterRepsAE} and~\eqref{eq:synthcostAE} which is
\begin{equation}
 \frac{168\pi^2  x^{1/2} (NT)^{3/2} \Lambda\eta\ln(5/\tilde \epsilon^2)}{\tilde \epsilon^{3/2}\rho^{-1}(\tilde \epsilon/16)\alpha^{-1}(\tilde \epsilon)\zeta^{-1}(\tilde \epsilon)}  \ln \left(\frac{384(Nx)^{1/2}T^{3/2} \Lambda }{\tilde \epsilon^{3/2}\rho^{-1}(\tilde \epsilon/16)}\right) \lambda\left(\frac{\tilde \epsilon^{5/2}\rho^{-1}(\tilde \epsilon/16)}{64(Nx)^{1/2} T^{3/2} \Lambda }\right).
\end{equation}
The auxillary $T$-gates needed to perform the Toffoli operations used in the amplitude estimation routine is then simply given by~\eqref{eq:NTAE}.
The result then follows from summing these two results and using the definition of $\alpha(\tilde \epsilon)$ to simplify the result.

The total number of ancillas needed for this procedure will be greater than that required by the sampling-based approach because of the additional qubits needed for the generalized Hadamard test as well as those needed to perform the phase estimation within amplitude estimation.
The number of ancillary qubits required within amplitude estimation is at most
\begin{equation}
    \left\lceil \log(L)\right\rceil =\left\lceil \log\left(\frac{2\sqrt{2}\pi}{\tilde \epsilon}+4 \right) \right\rceil.
\end{equation}
However, the circuit for phase estimation requires adding an additional $\lceil \log(L) \rceil$ controls to the operations in the generalized Hadamard test.
Any qubits used in synthesis of Toffoli gates can be absorbed into those accounted for in other steps.
Thus from~\lem{genHadamard} and~\thm{onedtrotresults} the number of qubits is at most $N(\eta + 1) +4\eta - \lfloor \log \eta \rfloor -1 + 2m +\left\lceil \log\left(\frac{2\sqrt{2}\pi}{\tilde \epsilon}+4 \right) \right\rceil$ as claimed.
\end{proof}

The asyptotic scalings for simulating the mean pair density, stated in \cor{trotresultcorb}, now follow from the above theorems.

\begin{proof}[Proof of Corollary \ref{cor:trotresultcorb}]
By construction, all $\alpha$, $\rho$, $\lambda$, $\gamma$, $\zeta$, $\nu \in \Theta(1)$ under the assumptions, which results in the claimed asymptotic complexities.
\end{proof}

\subsection{Cost Analysis in NEG Model}\label{sec:negsim}
In this section, we would like to determine sufficient resources to simulate real-time dynamics of the mean-pair density in the NEG setting (defined in \sec{NEG}); assume that the user has access to a quantum computer that is equipped with CNOT and single qubit gates wherein the single qubit gates, preparation and measurement are error-free and CNOT can be implemented within diamond distance $\delta_g$ of the ideal CNOT channel. Additionally, as before, $\tilde \epsilon$ is the rms deviation from the true expectation value of the mean positron number per lattice site.

More explicitly, CNOT is the only gate with non-negligible error and, furthermore, if $C_X$ is the ideal channel that carries out the CNOT gate, then we assume that the user has access to $\widetilde{C}_X$ such that $\|C_X - \widetilde{C}_X\|_{\diamond}\le \delta_g$.

\begin{theorem}[Real-Time Evolution of Mean-Pair Density in NEG Setting]
Assume a quantum computer is provided with a universal set of single qubit gates and measurements that are error free.  Further, assume that for any pair of qubits $i,j$ the computer is equipped with a quantum channel $\widetilde C_x$ such that $\|\widetilde C_x - {\rm CNOT}_{i,j}\|_\diamond \le \delta_g$.  We then have that the smallest achievable root-mean-square error in positron number per lattice site evaluated at time $T$---for a Schwinger model simulation in $1+1$D using second-order Trotter formulas---is at most
$$
\frac{3T}{2}\left(\delta_g\Gamma N^{1/2}(N-1)\Lambda x^{1/2}(9\eta^2 -7\eta +34)\right)^{2/3}+\delta_g(N-1)\Gamma,
$$
where $\Gamma = \frac{\biggl( \frac{2}{3}x \Lambda^2 + \biggl(2x^2 + \frac{5}{6}x\mu + \frac{2}{3}x \biggr) \Lambda + \frac{39}{8}x^3 + \frac{25}{12}x^2 \mu + x^2 + \frac{1}{3}x \mu^2 + \frac{5}{12}x\mu + \frac{1}{6}x \biggr)^{1/2}}{\Lambda x^{1/2}}$.  The optimal error target for the error in the Trotter-Suzuki formula is similarly at least
$$\delta_{\rm Trot, opt} = T\left(\delta_g\Gamma N^{1/2}(N-1)\Lambda x^{1/2}(9\eta^2 -7\eta +34)\right)^{2/3}.$$
\end{theorem}
\begin{proof}
Let us take $C_x^p$ to be the $p$-fold composition of the quantum channel $C_x$, which is the ideal $\mathrm{CNOT}_{i,j}$ channel.
We then have from the definition of $\|\cdot\|_{\diamond}$ that for all $p\ge 1$
\begin{equation}
    \|C_x^p\|_{\diamond} = \|\widetilde{C}_x^p\|_{\diamond}=1.
\end{equation}
Let us assume that for some $p\ge 1$ we have that $\|C_x^p-\widetilde{C}_x^p\|_{\diamond}\le p\delta_g$.
We then have from the sub-multiplicativity of the diamond norm that
\begin{align}
    \|C_x^{p+1}-\widetilde{C}_x^{p+1}\|_{\diamond}&\le \|C_x^{p+1}-C_x\widetilde{C}_x^{p}\|_{\diamond}+\|C_x\widetilde{C}_x^{p}-\widetilde{C}_x^{p+1}\|_{\diamond}\nonumber\\
    &\le p\delta_g + \delta_g = (p+1)\delta_g.
\end{align}
Since the result holds by assumption for $p=1$ it follows by induction that it also holds for all $p\ge 1$.
Thus the error from composing the channel is at most additive.

It is shown in~\cite{aharonov1998quantum} that for any operator $M$ and channels $C$ and $\widetilde{C}$
\begin{equation}
    \max_{\rho}\left|{\rm Tr}\left(MC(\rho) \right)-{\rm Tr}\left(M\widetilde{C}(\rho) \right)\right|\le 2 \|M\| \|C-\widetilde{C}\|_{\diamond}.
\end{equation}
In our case, the observable $M$ is the mean positron number per lattice site which is at most $1/2$.
We therefore have that if our simulation circuit contains $N_g$ CNOT gates with errors per gate of $\delta_g$, then 
\begin{equation}\label{eq:Ngchannel}
    \max_{\rho}\left(\frac{1}{N}\left|{\rm{Tr}}(\hat{N}_p\widetilde{C}_{\rm tot}(\rho))-{\rm{Tr}}(\hat{N}_p{C}_{\rm tot}(\rho)) \right|\right)\le  N_g \delta_g,
\end{equation}
where $C_{\rm tot}$ and $\widetilde{C}_{\rm tot}$ denote the ideal total channel and the noisy total channel, respectively.

We can bound $N_g$ in terms of the target Trotterization error using prior results. From \lem{nisqtimestep}, a complete Trotter step for the lattice using the second-order formula costs at most $(N-1)(9\eta^2 -7\eta +34)$ CNOTs.
If the total number of Trotter steps used is at most $s$, then the number of CNOT gates is bounded by
\begin{equation}
    N_g\le s(N-1)(9\eta^2 -7\eta +34)
\end{equation}
We can use \lem{steps} to choose $s$ such that our Trotterization error is at most $\delta_{\rm{Trot}}$. This implies that
\begin{equation}\label{eq:Ng}
N_g\le \frac{N^{1/2}T^{3/2}\Lambda x^{1/2}(N-1)(9\eta^2 -7\eta +34)\rho(\delta_{\rm{Trot}})}{\delta_{\rm{Trot}}^{1/2}}.
\end{equation}
It then follows from~\lem{epsilonbound}, ~\eqref{eq:Ngchannel}, and~\eqref{eq:Ng} that if $\tilde \epsilon$ is the rms error in the true expectation value of the observable, and if we choose $s$ such that the Trotter-Suzuki error is at most $\delta_{\rm Trot}$, then we have
\begin{align}
    \tilde \epsilon^2 \le \left(\frac{\delta_gN^{1/2}(N-1)T^{3/2}\Lambda x^{1/2}(9\eta^2 -7\eta +34)\rho(\delta_{\rm{Trot}})}{\delta_{\rm{Trot}}^{1/2}}+\frac{1}{2} \delta_{\rm Trot}\right)^2+\left(\frac{1}{N_{\rm shots}} \right)^2.
\end{align}
Here, $N_{\rm shots}$ is the number of samples used in the estimate of the mean. The errors add in quadrature as demonstrated prior in the proof of \lem{epsilonbound}. Next let us take $\widetilde\epsilon_{\min}^2:= \lim_{{\rm N_{shots}} \rightarrow \infty} \widetilde\epsilon^2$.
Therefore we have that
\begin{equation}
    \widetilde{\epsilon}_{\min} \le \frac{\delta_gN^{1/2}(N-1)T^{3/2}\Lambda x^{1/2}(9\eta^2 -7\eta +34)\rho(\delta_{\rm{Trot}})}{\delta_{\rm{Trot}}^{1/2}}+\frac{1}{2} \delta_{\rm Trot}
\end{equation}

We then can find an upper bound on the minimal achievable value $\widetilde{\epsilon}_{\min}$ through calculus.  Specifically, we set
\begin{equation}\label{eq:epsmin}
    \frac{\partial}{\partial_{\delta_{\rm Trot}}}\left(\frac{\delta_gN^{1/2}(N-1)T^{3/2}\Lambda x^{1/2}(9\eta^2 -7\eta +34)\rho(\delta_{\rm{Trot}})}{\delta_{\rm{Trot}}^{1/2}}+\frac{1}{2} \delta_{\rm Trot}\right)=0
\end{equation}
Solving for this using computer algebra, we find that if we take the constant part of $\rho$ to be $\Gamma :=\rho(\delta)- \frac{\delta^{1/2}}{N^{1/2} T^{3/2} \Lambda x^{1/2}}$ then the optimal Trotter error to choose is
\begin{equation}\label{eq:deltaopt}
    \delta_{\rm Trot, opt} = T\left(\delta_g\Gamma N^{1/2}(N-1)\Lambda x^{1/2}(9\eta^2 -7\eta +34)\right)^{2/3}.
\end{equation}
Next, substituting~\eqref{eq:deltaopt} into~\eqref{eq:epsmin} yields the least upper bound on $\widetilde\epsilon_{\min}$ is
\begin{equation}
    \frac{3T}{2}\left(\delta_g\Gamma N^{1/2}(N-1)\Lambda x^{1/2}(9\eta^2 -7\eta +34)\right)^{2/3}+\delta_g(N-1)\Gamma
\end{equation}
\end{proof}

\tab{errbd} shows that highly accurate CNOT gates (diamond distance at most $10^{-5}$) will be sufficient to simulate short evolutions in the weak coupling limit.
Even smaller infidelities given by diamond distances of $10^{-7}$ are sufficient to demonstrate a small simulation at strong coupling ($x\le 0.1$).
Owing to the looseness of bounds such as this that have been seen in simulations of fermionic systems~\cite{reiher2017elucidating}, the data in~\tab{errbd} suggests that numerical studies may be needed to assess whether existing quantum computers will be able simulate the Schwinger model in the strong coupling regime or whether further algorithmic optimizations will be needed to do so.

\begin{table}[t!]
\centering
\includegraphics[]{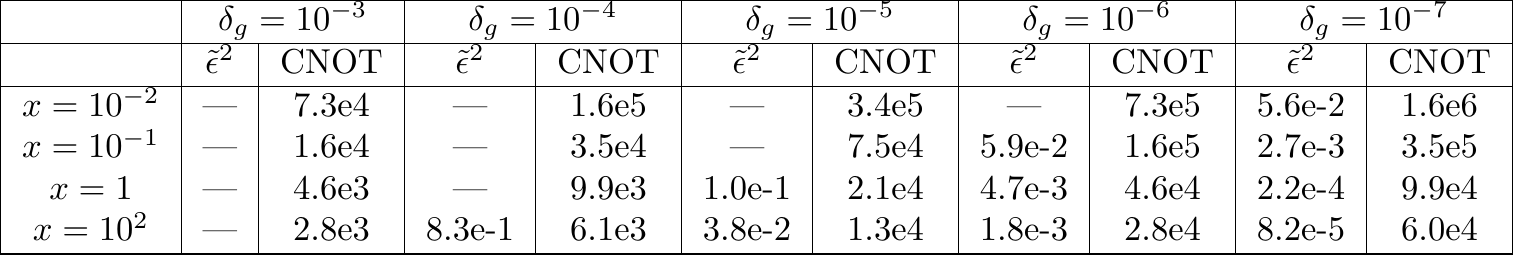}
\caption{Upper bound on the worst-case mean-square error in sampling the mean positron number density ($\tilde \epsilon^2$) in a six qubit Schwinger model simulation, along with a sufficient number of CNOTs per shot taken to ensure that the bias in the estimate is appropriately small. Note that the range of the square of the mean positron density is $[0,1/4]$, and ``---'' represents when the error bound falls outside this range. This is tabulated over $x=(ag)^{-2}$ (where $a$ is the lattice spacing and $g$ is the coupling constant) and the diamond-distance for the CNOT channel ($\delta_g$), with fixed $\eta=N=2$, $T=10/x$ and $\mu=1$.
\label{tab:errbd}}
\end{table}

\section{Incorporating Locality Constraints via the Lieb-Robinson Bound}\label{sec:locality}
It is apparent, by examining (\ref{eq:terms}-\ref{eq:Sumterms}) that the interaction between electron and positron sites is geometrically local with nearest neighbor interactions on the line.
The locality of interactions here actually can potentially yield benefits for simulation as noted in~\cite{haah2018quantum,kliesch2014lieb}.
This advantage stems from the use of Lieb-Robinson bounds which show the existence of an effective light cone for the evolution of local observables.
Observables that have lightcones that do not intersect can be shown to have commutators that are exponentially small and thus Lieb-Robinson bounds provide a mechanism for dividing the time evolution into quasi-independent pieces that can be evolved separately with far less error than would be suggested from bounds that do not exploit these light cones that emerge from the locality of the Hamiltonian.

This locality can be made manifest by rewriting the Hamiltonian as $H=\sum\limits_{\mathcal{X}}h_{\mathcal{X}}+\sum\limits_{\mathcal{Y}}h_{\mathcal{Y}}$ where $\mathcal{X}$ are all two-index sets $\mathcal{X}=\{r,r+1\}; 1\le r\le N-1$ and $\mathcal{Y}$ are all single-index sets $\mathcal{Y}=\{r\}; 1\le r\le N-1$.
Then $h_{\mathcal{X}}=T_r$ and $h_{\mathcal{Y}}=D_r^{(M)}+D_r^{(E)}$.
Lemma 6 in \cite{haah2018quantum} can be applied to $H$, i.e. for any disjoint regions $A, B$, and $C$ on the lattice there are constants $\bar{\mu}, v\ge 0$ such that,
\begin{align}\label{eq:HaahLemma6}
\| U^{H_{A\cup B\cup C}}_{T} - U^{H_{A\cup B}}_{T}(U^{H_{B}}_{T})^{\dagger}U^{H_{B\cup C}}_{T}\| \in O\left(e^{vT-\bar{\mu} {\rm dist}(A,C)}\sum\limits_{\mathcal{Z}:bd(A\cup B,C)}\|h_{\mathcal{Z}}\|\right)\ \ \ ,
\end{align}
where $bd(A\cup B,C)$ denotes the index sets that span the boundary between $A\cup B$ and $C$.
The constant $v$ is often called the Lieb-Robinson velocity and, for the systems with strictly local interactions, it can be estimated as (see Apendix C.6 in \cite{haah2018quantum}) $v\in O(d\sqrt{K})$.
For the Schwinger model the graph degree $d=2$ and $K$ is an upper bound on the commutator $\|[h_{\mathcal{X}},h_{\mathcal{Y}}]\|$.
We can directly compute $K=x(\mu+4\Lambda)$ and thus $v\in O(x^{1/2}(\mu+\Lambda)^{1/2})$.
Also, using the definition of $H_{bd} = H_{A\cup B\cup C} - H_{A\cup B} - H_{C}$ we can compute $\sum\limits_{\mathcal{Z}:bd(A\cup B,C)}\|h_{\mathcal{Z}}\|\le 2x$.
If we denote $\delta$ as the unitary operator approximation error then by using (\ref{eq:HaahLemma6}) we can estimate the partitioning block length $l={\rm dist}(A,C)\in O(\log(\frac{x}{\delta})+vT)$ where we have set $\bar{\mu}=1$.
For the lattice of a given size $N$ we will need to recursively apply $N/l$ partitions with the total unitary decomposition error $O(\delta l/N)$.
After each partitioning step we will be left with a Schwinger sublattice of the size $l$ for which we apply the second order symmetric Trotter decomposition discussed in~\sec{faulttolerant}.
From \lem{steps} we can estimate the number of $T$ gates required to implement a single block of length $l$ on the lattice with $N$ sites $N/l$ times as 
\begin{align}\label{eq:LRcount}
\widetilde{O}\left(\left(\left(\frac{N}{l} \right)\frac{ (lT)^{3/2} x^{1/2}\Lambda }{(\delta l/N)^{1/2}}\right)\right)= \widetilde{O}\left( \frac{(NT)^{3/2} x^{1/2} \Lambda}{\sqrt{\delta}}\right).
\end{align}
On the other hand, the $T$-gate gate count in our Trotter-based algorithm, that does not explicitly use locality of the interaction, can be readily computed by multiplying the number of Trotter steps $s$ in \lem{steps} with the $T$-gate count per step in \thm{ucost}.
Comparing the resulting gate count with that of (\ref{eq:LRcount}) we conclude that the asymptotic scaling of our Trotter-based algorithm and the algorithm of Haah et al. \cite{haah2018quantum} is identical.

So far we have established that there is no explicit asymptotic cost improvement from Lieb-Robinson bounds to the Schwinger model time evolution algorithm.
Recall, however, that ultimately we are interested in estimating the expectation value of local observables such as $\hat{N}_p$.
Can we use the locality of the observable and the Hamiltonian to reduce the computational cost? Intuition suggests that because correlations propagate at a finite velocity $v$ the effect of distant lattice sites on the dynamics of local observables can be sufficiently small beyond certain characteristic radius.
Thus the effective dynamics can be approximated by the Schwinger Hamiltonian defined on a sublattice around a site of interest.
Let us first bound the size of such a sublattice by using the Lieb-Robinson bound for exponentially decaying interactions (see Appendix C.4 Eq.(40) in~\cite{haah2018quantum}).
Consider a local observable $Z_{k}$ with support on a single lattice site $k$.
If $H$ is the Schwinger model Hamiltonian for the entire $N$-site lattice $\Omega$ and $H_{\Omega_{k}}$ is the Schwinger model Hamiltonian defined for a sublattice $\Omega_{k}$ ($|\Omega_{k}|\le N$) around the site $k$ then the following upper bound holds:
\begin{align}\label{eq:HaahEq40}
\| e^{itH}Z_{k}e^{-itH} - e^{itH_{\Omega_{k}}}Z_{k}e^{-itH_{\Omega_{k}}}\| & \le \frac{2\xi|t|}{\sqrt{\eta}}(\exp(\xi |t|\sqrt{8\eta})-1)e^{-{\rm dist}(k,\Omega^{c}_{k})}, 
\end{align}
where $\xi = (8x+\frac{\mu}{2}+\Lambda^2)e$, $\eta=K/(\|h_{\mathcal{X}}\|\|h_{\mathcal{Y}}\|)=(\mu+4\Lambda)/(\mu+2\Lambda^2)$, and $\Omega^{c}_{k}$ is the compliment to the set $\Omega_{k}$ (here we are using equations 15 and 40 in \cite{haah2018quantum}).
Given time $t$ and arbitrary $\delta \ge 0$, choosing the cardinality of $\Omega_{k}$ to be $l =|\Omega_{k}| \ge |\ln \sqrt{\delta} - \ln(\frac{2\xi|t|}{\sqrt{\eta}})- \xi |t|\sqrt{8\eta}|$ guarantees that $\| e^{itH}Z_{k}e^{-itH} - e^{itH_{\Omega_{k}}}Z_{k}e^{-itH_{\Omega_{k}}}\| \le \sqrt{\delta}$.
The knowledge of $l$ enables us to estimate the cost of quantum simulation of the mean number of positrons $\hat{N}_p$ using the techniques of \thm{onedtrotresults}.

Let us denote $W_{k}(t)$ the second order Trotter-Suzuki approximation of $e^{-itH_{\Omega_{k}}}$ and introduce the following expectation values
\begin{eqnarray}
\mathbb{E}_{\psi}(\hat{N}_{p})  & = & \langle \psi_{0}|e^{itH}\hat{N}_{p}e^{-itH}|\psi_{0} \rangle, \label{eq:ExpctVal1} \\
\mathbb{E}_{\phi}(\hat{N}_{p})  & = & \langle \psi_{0}|\sum\limits_{k=1}^{N/2}e^{itH_{\Omega_{k}}}Z_{2(k-1)}e^{-itH_{\Omega_{k}}}|\psi_{0} \rangle, \label{eq:ExpctVal2} \\
\mathbb{E}_{\chi}(\hat{N}_{p}) & = & \langle \psi_{0}|\sum\limits_{k=1}^{N/2}W_{k}(t)^{\dagger}Z_{2(k-1)}W_{k}(t)|\psi_{0} \rangle, \label{eq:ExpctVal3}
\end{eqnarray}
where $|\psi_0\rangle$ is an arbitrary initial state.
We can estimate the number of shots $N_{shots}$ needed to bound the rms error in computing the expectation value of $\hat{N}_{p}$ to $\delta$ using \lem{epsilonbound}.
The key insight here is that
\begin{eqnarray}\label{eq:MeanError}
| \mathbb{E}_{\psi}(\hat{N}_{p}) - \mathbb{E}_{\chi}(\hat{N}_{p}) |  & = & | \mathbb{E}_{\psi}(\hat{N}_{p}) - \mathbb{E}_{\phi}(\hat{N}_{p}) + \mathbb{E}_{\phi}(\hat{N}_{p}) - \mathbb{E}_{\chi}(\hat{N}_{p}) | \nonumber \\
 & \le & \|\mathbb{E}_{\psi}(\hat{N}_{p}) - \mathbb{E}_{\phi}(\hat{N}_{p})\| + \|\mathbb{E}_{\phi}(\hat{N}_{p}) - \mathbb{E}_{\chi}(\hat{N}_{p})\| \nonumber \\
 & \le & \frac{N\sqrt{\delta}}{2} + \frac{N\sqrt{\delta}}{2} = N\sqrt{\delta}, 
\end{eqnarray}
where we used (\ref{eq:HaahEq40}) (setting $|\Omega_{k}| = |\ln \sqrt{\delta} - \ln(\frac{2\xi|t|}{\sqrt{\eta}})- \xi |t|\sqrt{8\eta}|$ for all $k$ ) to upperbound  $\|\mathbb{E}_{\psi}(\hat{N}_{p}) - \mathbb{E}_{\phi}(\hat{N}_{p})\|$ and the second order Trotter-Suzuki bound for $\|\mathbb{E}_{\phi}(\hat{N}_{p}) - \mathbb{E}_{\chi}(\hat{N}_{p})\|$.
Since the variance $\mathbb{V}(\mathcal{N})$ of the empirically observed sample mean $\mathcal{N}$ can still be bounded by $\frac{N^2}{4N_{shots}}$ we conclude that the number of samples $N_{shots}$ we need to take to satisfy $ \mathbb{E}\left(\left(\frac{\mathcal{N}-\bra{\psi}e^{itH}\hat{N}_{p} e^{-itH}\ket{\psi_{0}}}{N}\right)^2\right) \in O(\tilde \epsilon^2), $ is given by \lem{epsilonbound} with $\tilde \epsilon^2\kappa\in O(\delta)$.
Therefore, we can use \thm{onedtrotresults} to estimate the simulation cost of $\hat{N}_{p}$ by replacing the lattice size $N$ with $N_{\rm eff}=\left\lceil|\ln \tilde \epsilon - \ln(\frac{2\xi T}{\sqrt{\eta}})- \xi T\sqrt{8\eta}| \right\rceil$.
Thus we have from~\cor{trotresultcorb} that the number of $T$-gates needed to perform the simulation using amplitude amplification is in
\begin{equation}
    \widetilde{O}\left(\frac{ (N_{\rm eff}T)^{3/2} x^{1/2}\Lambda}{\tilde \epsilon^{3/2}}\right)=\widetilde{O}\left(\frac{ (\xi \sqrt{\eta})^{3/2}T^3 x^{1/2}\Lambda}{\tilde \epsilon^{3/2}}\right)
\end{equation}

We note that a particularly interesting simulations regime is the one for very large lattice sizes $N$ and fixed cutoff $\Lambda$.
In this regime the Lieb-Robinson argument would allow to significantly improve the cost of simulation by removing the dependence on $N$ at the expense of increasing the dependence on $T$ quadratically.
Recent work has further shown that if high-order product formulas are used in the place then the temporal scaling of high-order Trotter-Suzuki approximations to $e^{-iHt}$ can be improved from $\widetilde{O}(T^{3}/\delta_{\rm Trot}^{1/2+o(1)})$ to $T^{2}(T/\delta_{\rm Trot})^{o(1)}$~\cite{Yuan}, where $(T/\delta_{\rm Trot})^{o(1)}$ is the sub-polynomial contribution to the scaling that comes from adaptively choosing an increasingly high-order Trotter formula as $T$ increases.
While the nested commutators in such an expansion are harder to evaluate to determine the scaling of the cost with $\zeta, \eta, x$ and $\Lambda$, this other work nonetheless shows that in some regimes our approach can be improved by going to higher order Trotter formulas.

\section{Conclusion}
{The foremost contribution of this article is having detailed explicit quantum algorithms to simulate the lattice Schwinger model on a quantum computer and proven them to be scalable in the computational models considered.}
As the Schwinger model is a testbed for the study of gauge field theories underpinning the Standard Model, the algorithms of this paper are an important stepping stone towards simulating quantum electrodynamics in higher dimensions as well as more general quantum field theories both in the NISQ era and beyond.

Further significance of this work is seen in its comparison to state-of-the-art quantum simulation algorithms.
As discussed in \sec{compare}, our Trotter-Suzuki-based algorithms scale only as $\tilde O (T^{3/2} \Lambda x^{1/2})$.
This is quadratically better scaling with the electric field cutoff $\Lambda$ than would be expected from other simulation algorithms such as qubitization or QDRIFT.

We also analyze simulation of the Schwinger model in a simplified cost model that is appropriate for near-term simulations wherein qubits and entangling gates are regarded as the costly resources.
A similar $O(T^{3/2} \Lambda x^{1/2})$ scaling for the number of two-qubit gates in this setting has been shown, while also calling for only one ancillary qubit.
This scaling by itself, while technically true, neglects the effects of gate error on the simulation;
we address this by providing an analysis of the minimal achievable root-mean-square error as a function of the diamond distance between the two-qubit gates and the target, and providing sufficient conditions on the diamond norm to allow a given simulation.
These contributions are significant since it allows subsequent NISQ-era algorithms to be explicitly compared to our approach.
Subsequent optimizations can therefore rigorously quantify their improvements over the initial simulation strategies set forth herein.

There are several important avenues of investigation to pursue beyond what we have considered.
While we chose Trotter-Suzuki formulas in our simulation algorithms because of their potential advantages over other state-of-the-art methods, it is possible that explicit implementation of these other algorithms may yet yield unforeseen benefit, so we leave it to subsequent work to examine their performance and compare the effectiveness of each algorithm in simulating the lattice Schwinger model.
Analysis of qubitization, the interaction picture, and multi-product formulas, as well as numerical studies of the Trotter error, will be of particular interest for subsequent studies.
In actual quantum simulations of the Schwinger model, one might also wish to add in a background field or periodic conditions, the algorithms for both of these being straightforward extensions of what has been introduced in this article.

Importantly, while this work has focused on establishing rigorous upper bounds on the resources needed in both NISQ-era quantum computing and beyond, the bounds may prove to be loose compared to the empirically-observed resources needed.
A further numerical study would be useful to understand the gaps between our bounds and the actual performance in small-scale simulations.

Perhaps the ultimate goal of subsequent work is to generalize the analysis in this paper to the simulation of gauge field theories in higher dimensions and for non-Abelian gauge groups.
These innovations will allow us to not only simulate quantum electrodynamics in higher dimensions, but will also provide valuable insights into how to simulate quantum field theories that better represent what we know about Nature on quantum computers.

\section{Acknowledgements}
The authors would like to thank Natalie Klco for conceptual discussions and contributions to Sections 2 and the circuits of Section 3 of this work and M. Savage for insightful discussions about the Schwinger model.
NW would like to thank Y. Su, A. Childs, M. Tran and S. Zhu for useful discussions surrounding product formulas and Lieb-Robinson bounds.

This work was performed in part at Oak Ridge National Laboratory, operated by UT-Battelle for the U.S. Department of Energy under Contract No. DEAC05-00OR22725.
AS and PL are supported by the U.S. Department of Energy, Office of Science, Office of Advanced Scientific Computing Research (ASCR) Quantum Algorithm Teams, Quantum Computing Application Teams, and Accelerated Research in Quantum Computing programs, under field work proposal numbers ERKJ333, ERKJ347, and ERKJ354.
JRS was supported by DOE Grant No. DE-FG02-00ER41132, and by the National Science Foundation Graduate Research Fellowship under Grant No. 1256082.
NW is supported by the Google Quantum Research Award, the Embedding Quantum Computing into Many-body Frameworks for Strongly Correlated Molecular and Materials Project which is funded by the US department of Energy, Office of Science, Office of Basic Energy Sciences, the Division of Chemical Sciences, Geosciences and Biosciences as well as the QUASAR initiative at Pacific Northwest National Laboratory. AS was further supported by the US Department of Energy Office of Science Award DE-SC0020271.

\section{Notations}\label{sec:notation}
\begin{table}[H] \label{tab:vartable}
\centering
\includegraphics[]{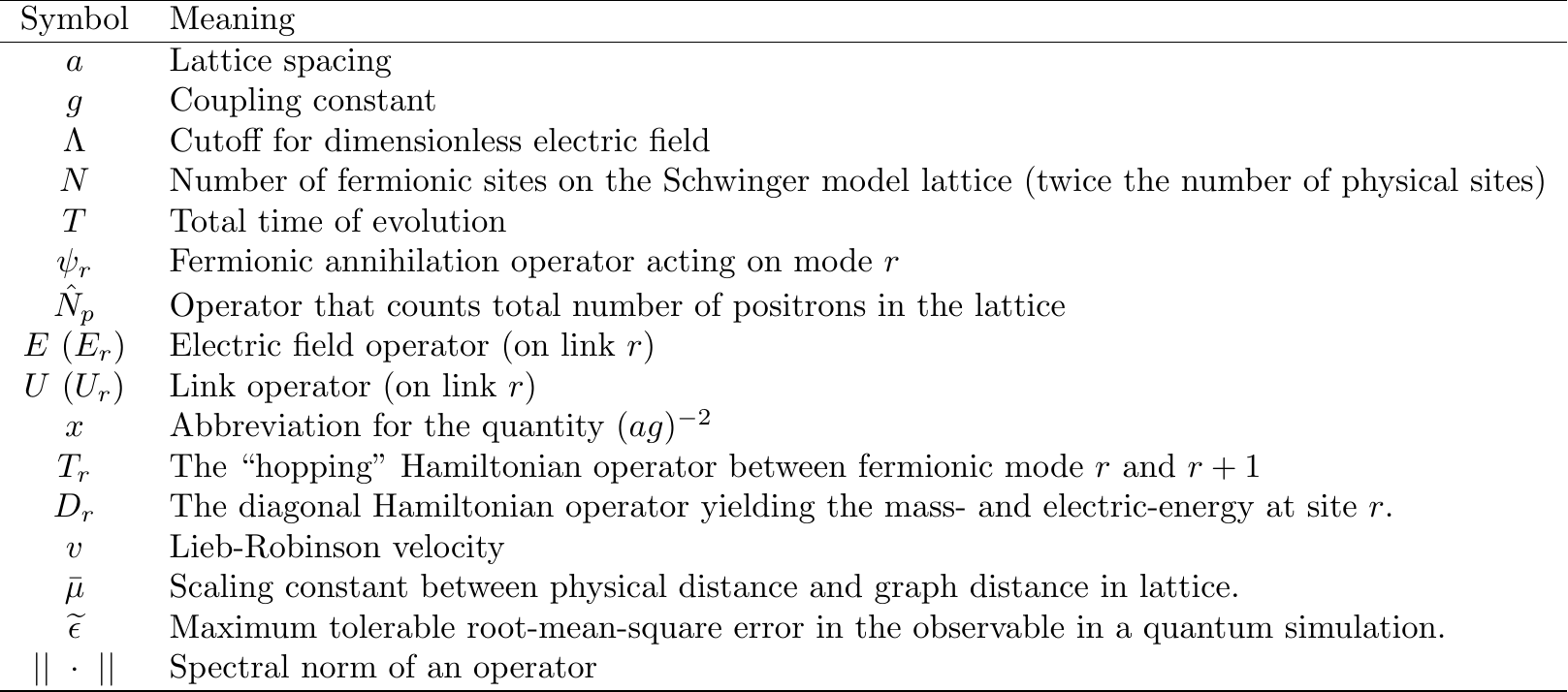}
\caption{Summary of important variables used in this paper.}
\end{table}

\bibliographystyle{plainnat}
\bibliography{biblioDOI}

\begin{thebibliography}{87}
\providecommand{\natexlab}[1]{#1}
\providecommand{\url}[1]{\texttt{#1}}
\expandafter\ifx\csname urlstyle\endcsname\relax
  \providecommand{\doi}[1]{doi: #1}\else
  \providecommand{\doi}{doi: \begingroup \urlstyle{rm}\Url}\fi

\bibitem[Aharonov(1998)]{aharonov1998quantum}
D~Aharonov.
\newblock Quantum circuits with mixed states.
\newblock In \emph{Proc. 30th Annual ACM Symposium on Theory of Computing,
  1998}. ACM Press, 1998.
\newblock \doi{10.1145/276698.276708}.

\bibitem[Aharonov et~al.(2003)Aharonov, Ta-Shma, and
  Ta-Shma]{aharonov2003adiabatic}
Dorit Aharonov, Amnon Ta-Shma, and Amnon Ta-Shma.
\newblock Adiabatic quantum state generation and statistical zero knowledge.
\newblock In \emph{Proceedings of the thirty-fifth annual ACM symposium on
  Theory of computing}, pages 20--29. ACM, 2003.
\newblock \doi{10.1145/780542.780546}.

\bibitem[Alexandru et~al.(2019)Alexandru, Bedaque, Harmalkar, Lamm, Lawrence,
  and Warrington]{alexandru.bedaque.eaGluonField19}
Andrei Alexandru, Paulo~F. Bedaque, Siddhartha Harmalkar, Henry Lamm, Scott
  Lawrence, and Neill~C. Warrington.
\newblock Gluon field digitization for quantum computers.
\newblock \emph{Phys. Rev. D}, 100:\penalty0 114501, Dec 2019.
\newblock \doi{10.1103/PhysRevD.100.114501}.

\bibitem[Avkhadiev et~al.(2020)Avkhadiev, Shanahan, and
  Young]{avkhadiev.shanahan.eaAcceleratingLattice19}
A.~Avkhadiev, P.~E. Shanahan, and R.~D. Young.
\newblock Accelerating lattice quantum field theory calculations via
  interpolator optimization using noisy intermediate-scale quantum computing.
\newblock \emph{Phys. Rev. Lett.}, 124:\penalty0 080501, Feb 2020.
\newblock \doi{10.1103/PhysRevLett.124.080501}.

\bibitem[Babbush et~al.(2015)Babbush, McClean, Wecker, Aspuru-Guzik, and
  Wiebe]{babbush2015chemical}
Ryan Babbush, Jarrod McClean, Dave Wecker, Al{\'a}n Aspuru-Guzik, and Nathan
  Wiebe.
\newblock Chemical basis of {{Trotter-Suzuki}} errors in quantum chemistry
  simulation.
\newblock \emph{Physical Review A}, 91\penalty0 (2):\penalty0 022311, 2015.
\newblock \doi{10.1103/PhysRevA.91.022311}.

\bibitem[Banerjee et~al.(2013)Banerjee, B{\"o}gli, Dalmonte, Rico, Stebler,
  Wiese, and Zoller]{banerjee.bogli.eaAtomicQuantum13}
D.~Banerjee, M.~B{\"o}gli, M.~Dalmonte, E.~Rico, P.~Stebler, Uwe-Jens Wiese,
  and P.~Zoller.
\newblock Atomic {{Quantum Simulation}} of ${{\mathrm{U}(N)}}$ and
  ${{\mathrm{SU}(N)}}$ {{Non}}-{{Abelian Lattice Gauge Theories}}.
\newblock \emph{Phys. Rev. Lett.}, 110\penalty0 (12):\penalty0 125303, Mar
  2013.
\newblock \doi{10.1103/PhysRevLett.110.125303}.

\bibitem[Banks et~al.(1976)Banks, Susskind, and Kogut]{banksKS1976strong}
T.~Banks, Leonard Susskind, and John Kogut.
\newblock Strong-coupling calculations of lattice gauge theories: (1 +
  1)-dimensional exercises.
\newblock \emph{Physical Review D}, 13\penalty0 (4):\penalty0 1043--1053, Feb
  1976.
\newblock \doi{10.1103/PhysRevD.13.1043}.

\bibitem[Ba{\~{n}}uls et~al.(2020)Ba{\~{n}}uls, Blatt, Catani, Celi, Cirac,
  Dalmonte, Fallani, Jansen, Lewenstein, Montangero, Muschik, Reznik, Rico,
  Tagliacozzo, Van~Acoleyen, Verstraete, Wiese, Wingate, Zakrzewski, and
  Zoller]{banuls.blatt.eaSimulatingLattice19}
Mari~Carmen Ba{\~{n}}uls, Rainer Blatt, Jacopo Catani, Alessio Celi,
  Juan~Ignacio Cirac, Marcello Dalmonte, Leonardo Fallani, Karl Jansen, Maciej
  Lewenstein, Simone Montangero, Christine~A. Muschik, Benni Reznik, Enrique
  Rico, Luca Tagliacozzo, Karel Van~Acoleyen, Frank Verstraete, Uwe-Jens Wiese,
  Matthew Wingate, Jakub Zakrzewski, and Peter Zoller.
\newblock Simulating lattice gauge theories within quantum technologies.
\newblock \emph{The European Physical Journal D}, 74\penalty0 (8):\penalty0
  165, Aug 2020.
\newblock ISSN 1434-6079.
\newblock \doi{10.1140/epjd/e2020-100571-8}.

\bibitem[Barenco et~al.(1995)Barenco, Bennett, Cleve, DiVincenzo, Margolus,
  Shor, Sleator, Smolin, and Weinfurter]{barenco1995elementary}
Adriano Barenco, Charles~H Bennett, Richard Cleve, David~P DiVincenzo, Norman
  Margolus, Peter Shor, Tycho Sleator, John~A Smolin, and Harald Weinfurter.
\newblock Elementary gates for quantum computation.
\newblock \emph{Physical review A}, 52\penalty0 (5):\penalty0 3457, 1995.
\newblock \doi{10.1103/PhysRevA.52.3457}.

\bibitem[Beane et~al.(2012)Beane, Chang, Cohen, Detmold, Lin, Luu, Orginos,
  Parreno, Savage, and Walker-Loud]{Beane:2012ey}
S.~R. Beane, E.~Chang, S.~D. Cohen, W.~Detmold, H.~W. Lin, T.~C. Luu,
  K.~Orginos, A.~Parreno, M.~J. Savage, and A.~Walker-Loud.
\newblock {Hyperon-Nucleon Interactions and the Composition of Dense Nuclear
  Matter from Quantum Chromodynamics}.
\newblock \emph{Phys. Rev. Lett.}, 109:\penalty0 172001, 2012.
\newblock \doi{10.1103/PhysRevLett.109.172001}.

\bibitem[Beane et~al.(2013)Beane, Chang, Cohen, Detmold, Lin, Luu, Orginos,
  Parreno, Savage, and Walker-Loud]{Beane:2012vq}
S.~R. Beane, E.~Chang, S.~D. Cohen, William Detmold, H.~W. Lin, T.~C. Luu,
  K.~Orginos, A.~Parreno, M.~J. Savage, and A.~Walker-Loud.
\newblock {Light Nuclei and Hypernuclei from Quantum Chromodynamics in the
  Limit of SU(3) Flavor Symmetry}.
\newblock \emph{Phys. Rev. D}, 87\penalty0 (3):\penalty0 034506, 2013.
\newblock \doi{10.1103/PhysRevD.87.034506}.

\bibitem[Berry et~al.(2007)Berry, Ahokas, Cleve, and
  Sanders]{berry2007efficient}
Dominic~W Berry, Graeme Ahokas, Richard Cleve, and Barry~C Sanders.
\newblock {{Efficient quantum algorithms for simulating sparse Hamiltonians}}.
\newblock \emph{Communications in Mathematical Physics}, 270\penalty0
  (2):\penalty0 359--371, 2007.
\newblock \doi{10.1007/s00220-006-0150-x}.

\bibitem[Berry et~al.(2015)Berry, Childs, Cleve, Kothari, and
  Somma]{berry2015simulating}
Dominic~W Berry, Andrew~M Childs, Richard Cleve, Robin Kothari, and Rolando~D
  Somma.
\newblock Simulating hamiltonian dynamics with a truncated taylor series.
\newblock \emph{Physical Review Letters}, 114\penalty0 (9):\penalty0 090502,
  2015.
\newblock \doi{10.1103/PhysRevLett.114.090502}.

\bibitem[Berry et~al.(2020)Berry, Childs, Su, Wang, and Wiebe]{berry2019time}
Dominic~W. Berry, Andrew~M. Childs, Yuan Su, Xin Wang, and Nathan Wiebe.
\newblock Time-dependent {H}amiltonian simulation with {$L^1$}-norm scaling.
\newblock \emph{{Quantum}}, 4:\penalty0 254, Apr 2020.
\newblock ISSN 2521-327X.
\newblock \doi{10.22331/q-2020-04-20-254}.

\bibitem[Bocharov et~al.(2015)Bocharov, Roetteler, and
  Svore]{bocharov2015efficient}
Alex Bocharov, Martin Roetteler, and Krysta~M Svore.
\newblock Efficient synthesis of universal repeat-until-success quantum
  circuits.
\newblock \emph{Physical Review Letters}, 114\penalty0 (8):\penalty0 080502,
  2015.
\newblock \doi{10.1103/PhysRevLett.114.080502}.

\bibitem[Boghosian and Taylor~IV(1998)]{boghosian1998simulating}
Bruce~M Boghosian and Washington Taylor~IV.
\newblock Simulating quantum mechanics on a quantum computer.
\newblock \emph{Physica D: Nonlinear Phenomena}, 120\penalty0 (1-2):\penalty0
  30--42, 1998.
\newblock \doi{10.1016/S0167-2789(98)00042-6}.

\bibitem[Brassard et~al.(2002)Brassard, Hoyer, Mosca, and
  Tapp]{brassard2002quantum}
Gilles Brassard, Peter Hoyer, Michele Mosca, and Alain Tapp.
\newblock Quantum amplitude amplification and estimation.
\newblock \emph{Contemporary Mathematics}, 305:\penalty0 53--74, 2002.
\newblock \doi{10.1090/conm/305/05215}.

\bibitem[Bravyi and Haah(2012)]{bravyi2012magic}
Sergey Bravyi and Jeongwan Haah.
\newblock Magic-state distillation with low overhead.
\newblock \emph{Physical Review A}, 86\penalty0 (5):\penalty0 052329, 2012.
\newblock \doi{10.1103/PhysRevA.86.052329}.

\bibitem[Byrnes and Yamamoto(2006)]{byrnes.yamamotoSimulatingLattice06}
Tim Byrnes and Yoshihisa Yamamoto.
\newblock {{Simulating Lattice Gauge Theories on a Quantum Computer}}.
\newblock \emph{Phys. Rev. A}, 73\penalty0 (2):\penalty0 022328, Feb 2006.
\newblock \doi{10.1103/PhysRevA.73.022328}.

\bibitem[Campbell(2019)]{campbell2019random}
Earl Campbell.
\newblock Random compiler for fast {{Hamiltonian}} simulation.
\newblock \emph{Physical Review Letters}, 123\penalty0 (7):\penalty0 070503,
  2019.
\newblock \doi{10.1103/PhysRevLett.123.070503}.

\bibitem[Childs and Kothari(2010)]{childs2010simulating}
Andrew~M Childs and Robin Kothari.
\newblock Simulating sparse {{Hamiltonians}} with star decompositions.
\newblock In \emph{Conference on Quantum Computation, Communication, and
  Cryptography}, pages 94--103. Springer, 2010.
\newblock \doi{10.1007/978-3-642-18073-6_8}.

\bibitem[Childs and Wiebe(2012)]{childs2012hamiltonian}
Andrew~M Childs and Nathan Wiebe.
\newblock Hamiltonian simulation using linear combinations of unitary
  operations.
\newblock \emph{Quantum Information \& Computation}, 12\penalty0
  (11-12):\penalty0 901--924, 2012.

\bibitem[Childs et~al.(2018)Childs, Maslov, Nam, Ross, and
  Su]{childs2018toward}
Andrew~M Childs, Dmitri Maslov, Yunseong Nam, Neil~J Ross, and Yuan Su.
\newblock Toward the first quantum simulation with quantum speedup.
\newblock \emph{Proceedings of the National Academy of Sciences}, 115\penalty0
  (38):\penalty0 9456--9461, 2018.
\newblock \doi{10.1073/pnas.1801723115}.

\bibitem[Childs et~al.(2019)Childs, Su, Tran, Wiebe, and Zhu]{Yuan}
Andrew~M. Childs, Yuan Su, Minh~C. Tran, Nathan Wiebe, and Shuchen Zhu.
\newblock A theory of trotter error.
\newblock \emph{arXiv:1912.08854}, 2019.

\bibitem[Contessi et~al.(2017)Contessi, Lovato, Pederiva, Roggero, Kirscher,
  and van Kolck]{Contessi:2017rww}
L.~Contessi, A.~Lovato, F.~Pederiva, A.~Roggero, J.~Kirscher, and U.~van Kolck.
\newblock {Ground-state properties of $^{4}$He and $^{16}$O extrapolated from
  lattice QCD with pionless EFT}.
\newblock \emph{Phys. Lett. B}, 772:\penalty0 839--848, 2017.
\newblock \doi{10.1016/j.physletb.2017.07.048}.

\bibitem[Creutz(1980)]{creutz1980monte}
Michael Creutz.
\newblock {{Monte Carlo study of quantized SU (2) gauge theory}}.
\newblock \emph{Phys. Rev. D}, 21\penalty0 (8):\penalty0 2308, 1980.
\newblock \doi{10.1103/PhysRevD.21.2308}.

\bibitem[Davoudi et~al.(2020)Davoudi, Hafezi, Monroe, Pagano, Seif, and
  Shaw]{davoudi.hafezi.eaAnalogQuantum19}
Zohreh Davoudi, Mohammad Hafezi, Christopher Monroe, Guido Pagano, Alireza
  Seif, and Andrew Shaw.
\newblock Towards analog quantum simulations of lattice gauge theories with
  trapped ions.
\newblock \emph{Phys. Rev. Research}, 2:\penalty0 023015, Apr 2020.
\newblock \doi{10.1103/PhysRevResearch.2.023015}.

\bibitem[Draper et~al.(2006)Draper, Kutin, Rains, and
  Svore]{draper2004logarithmic}
Thomas~G. Draper, Samuel~A. Kutin, Eric~M. Rains, and Krysta~M. Svore.
\newblock A logarithmic-depth quantum carry-lookahead adder.
\newblock \emph{Quantum Info. Comput.}, 6\penalty0 (4):\penalty0 351--369, Jul
  2006.
\newblock ISSN 1533-7146.

\bibitem[Eastin and Knill(2009)]{eastin2009restrictions}
Bryan Eastin and Emanuel Knill.
\newblock Restrictions on transversal encoded quantum gate sets.
\newblock \emph{Physical Review Letters}, 102\penalty0 (11):\penalty0 110502,
  2009.
\newblock \doi{10.1103/PhysRevLett.102.110502}.

\bibitem[Feynman(1982)]{feynman1982}
Richard~P. Feynman.
\newblock Simulating physics with computers.
\newblock \emph{International Journal of Theoretical Physics}, 21\penalty0
  (6):\penalty0 467--488, June 1982.
\newblock ISSN 1572-9575.
\newblock \doi{10.1007/BF02650179}.

\bibitem[Gidney(2018)]{gidney2018halving}
Craig Gidney.
\newblock Halving the cost of quantum addition.
\newblock \emph{Quantum}, 2\penalty0 (74):\penalty0 10--22331, 2018.
\newblock \doi{10.22331/q-2018-06-18-74}.

\bibitem[Gily{\'e}n et~al.(2019)Gily{\'e}n, Su, Low, and
  Wiebe]{gilyen2019quantum}
Andr{\'a}s Gily{\'e}n, Yuan Su, Guang~Hao Low, and Nathan Wiebe.
\newblock Quantum singular value transformation and beyond: exponential
  improvements for quantum matrix arithmetics.
\newblock In \emph{Proceedings of the 51st Annual ACM SIGACT Symposium on
  Theory of Computing}, pages 193--204. ACM, 2019.
\newblock \doi{10.1145/3313276.3316366}.

\bibitem[Haah et~al.(2018)Haah, Hastings, Kothari, and Low]{haah2018quantum}
Jeongwan Haah, Matthew Hastings, Robin Kothari, and Guang~Hao Low.
\newblock Quantum algorithm for simulating real time evolution of lattice
  hamiltonians.
\newblock In \emph{2018 IEEE 59th Annual Symposium on Foundations of Computer
  Science (FOCS)}, pages 350--360. IEEE, 2018.
\newblock \doi{10.1109/FOCS.2018.00041}.

\bibitem[Harmalkar et~al.(2020)Harmalkar, Lamm, and
  Lawrence]{harmalkar.lamm.eaQuantumSimulation20}
Siddhartha Harmalkar, Henry Lamm, and Scott Lawrence.
\newblock Quantum {{Simulation}} of {{Field Theories Without State
  Preparation}}.
\newblock \emph{arXiv:2001.11490 [hep-lat, physics:quant-ph]}, Jan 2020.

\bibitem[Huyghebaert and De~Raedt(1990)]{huyghebaert1990product}
Jacky Huyghebaert and Hans De~Raedt.
\newblock Product formula methods for time-dependent {{Schrodinger}} problems.
\newblock \emph{Journal of Physics A: Mathematical and General}, 23\penalty0
  (24):\penalty0 5777, 1990.
\newblock \doi{10.1088/0305-4470/23/24/019}.

\bibitem[Inoue et~al.(2015)Inoue, Aoki, Charron, Doi, Hatsuda, Ikeda, Ishii,
  Murano, Nemura, and Sasaki]{Inoue:2014ipa}
Takashi Inoue, Sinya Aoki, Bruno Charron, Takumi Doi, Tetsuo Hatsuda, Yoichi
  Ikeda, Noriyoshi Ishii, Keiko Murano, Hidekatsu Nemura, and Kenji Sasaki.
\newblock {Medium-heavy nuclei from nucleon-nucleon interactions in lattice
  QCD}.
\newblock \emph{Phys. Rev. C}, 91\penalty0 (1):\penalty0 011001, 2015.
\newblock \doi{10.1103/PhysRevC.91.011001}.

\bibitem[Iritani et~al.(2019)Iritani, Aoki, Doi, Gongyo, Hatsuda, Ikeda, Inoue,
  Ishii, Nemura, and Sasaki]{Iritani:2018zbt}
Takumi Iritani, Sinya Aoki, Takumi Doi, Shinya Gongyo, Tetsuo Hatsuda, Yoichi
  Ikeda, Takashi Inoue, Noriyoshi Ishii, Hidekatsu Nemura, and Kenji Sasaki.
\newblock {{Systematics of the HAL QCD potential at low energies in lattice
  QCD}}.
\newblock \emph{Phys. Rev. D}, 99:\penalty0 014514, Jan 2019.
\newblock \doi{10.1103/PhysRevD.99.014514}.

\bibitem[Jones(2013)]{jones2013low}
Cody Jones.
\newblock Low-overhead constructions for the fault-tolerant {{Toffoli}} gate.
\newblock \emph{Physical Review A}, 87\penalty0 (2):\penalty0 022328, 2013.
\newblock \doi{10.1103/PhysRevA.87.022328}.

\bibitem[Jordan et~al.(2012)Jordan, Lee, and
  Preskill]{Jordan_Lee_Preskill_2012}
Stephen~P. Jordan, Keith S.~M. Lee, and John Preskill.
\newblock Quantum algorithms for quantum field theories.
\newblock \emph{Science}, 336\penalty0 (6085):\penalty0 1130--1133, Jun 2012.
\newblock ISSN 0036-8075, 1095-9203.
\newblock \doi{10.1126/science.1217069}.

\bibitem[Jordan et~al.(2014)Jordan, Lee, and Preskill]{jordan2011quantum}
Stephen~P. Jordan, Keith S.~M. Lee, and John Preskill.
\newblock Quantum computation of scattering in scalar quantum field theories.
\newblock \emph{Quantum Info. Comput.}, 14\penalty0 (11-12):\penalty0
  1014--1080, Sep 2014.
\newblock ISSN 1533-7146.

\bibitem[Kirscher et~al.(2015)Kirscher, Barnea, Gazit, Pederiva, and van
  Kolck]{Kirscher:2015yda}
Johannes Kirscher, Nir Barnea, Doron Gazit, Francesco Pederiva, and Ubirajara
  van Kolck.
\newblock {Spectra and Scattering of Light Lattice Nuclei from Effective Field
  Theory}.
\newblock \emph{Phys. Rev. C}, 92\penalty0 (5):\penalty0 054002, 2015.
\newblock \doi{10.1103/PhysRevC.92.054002}.

\bibitem[Kivlichan et~al.(2017)Kivlichan, Wiebe, Babbush, and
  Aspuru-Guzik]{kivlichan2017bounding}
Ian~D Kivlichan, Nathan Wiebe, Ryan Babbush, and Al{\'a}n Aspuru-Guzik.
\newblock Bounding the costs of quantum simulation of many-body physics in real
  space.
\newblock \emph{Journal of Physics A: Mathematical and Theoretical},
  50\penalty0 (30):\penalty0 305301, 2017.
\newblock \doi{10.1088/1751-8121/aa77b8}.

\bibitem[Kivlichan et~al.(2018)Kivlichan, McClean, Wiebe, Gidney, Aspuru-Guzik,
  Chan, and Babbush]{kivlichan2018quantum}
Ian~D Kivlichan, Jarrod McClean, Nathan Wiebe, Craig Gidney, Al{\'a}n
  Aspuru-Guzik, Garnet Kin-Lic Chan, and Ryan Babbush.
\newblock Quantum simulation of electronic structure with linear depth and
  connectivity.
\newblock \emph{Physical Review Letters}, 120\penalty0 (11):\penalty0 110501,
  2018.
\newblock \doi{10.1103/PhysRevLett.120.110501}.

\bibitem[Klco et~al.(2018)Klco, Dumitrescu, McCaskey, Morris, Pooser, Sanz,
  Solano, Lougovski, and Savage]{Klco:2018kyo}
N.~Klco, E.~F. Dumitrescu, A.~J. McCaskey, T.~D. Morris, R.~C. Pooser, M.~Sanz,
  E.~Solano, P.~Lougovski, and M.~J. Savage.
\newblock {Quantum-classical computation of Schwinger model dynamics using
  quantum computers}.
\newblock \emph{Phys. Rev.}, A98\penalty0 (3):\penalty0 032331, 2018.
\newblock \doi{10.1103/PhysRevA.98.032331}.

\bibitem[Klco and Savage(2019)]{Klco:2018zqz}
Natalie Klco and Martin~J. Savage.
\newblock {Digitization of scalar fields for quantum computing}.
\newblock \emph{Phys. Rev.}, A99\penalty0 (5):\penalty0 052335, 2019.
\newblock \doi{10.1103/PhysRevA.99.052335}.

\bibitem[Klco et~al.(2020)Klco, Savage, and
  Stryker]{klco.stryker.eaSUNonAbelian19}
Natalie Klco, Martin~J. Savage, and Jesse~R. Stryker.
\newblock {{SU(2) non-Abelian gauge field theory in one dimension on digital
  quantum computers}}.
\newblock \emph{Phys. Rev. D}, 101:\penalty0 074512, Apr 2020.
\newblock \doi{10.1103/PhysRevD.101.074512}.

\bibitem[Kliesch et~al.(2014)Kliesch, Gogolin, and Eisert]{kliesch2014lieb}
Martin Kliesch, Christian Gogolin, and Jens Eisert.
\newblock {{Lieb-Robinson bounds and the simulation of time-evolution of local
  observables in lattice systems}}.
\newblock In \emph{Many-Electron Approaches in Physics, Chemistry and
  Mathematics}, pages 301--318. Springer, 2014.
\newblock \doi{10.1007/978-3-319-06379-9_17}.

\bibitem[Kogut and Susskind(1975)]{KS1975}
John Kogut and Leonard Susskind.
\newblock {{Hamiltonian formulation of Wilson's lattice gauge theories}}.
\newblock \emph{Phys. Rev. D}, 11:\penalty0 395--408, Jan 1975.
\newblock \doi{10.1103/PhysRevD.11.395}.

\bibitem[Lanyon et~al.(2010)Lanyon, Whitfield, Gillett, Goggin, Almeida,
  Kassal, Biamonte, Mohseni, Powell, Barbieri, et~al.]{lanyon2010towards}
Benjamin~P Lanyon, James~D Whitfield, Geoff~G Gillett, Michael~E Goggin,
  Marcelo~P Almeida, Ivan Kassal, Jacob~D Biamonte, Masoud Mohseni, Ben~J
  Powell, Marco Barbieri, et~al.
\newblock Towards quantum chemistry on a quantum computer.
\newblock \emph{Nature chemistry}, 2\penalty0 (2):\penalty0 106, 2010.
\newblock \doi{10.1038/nchem.483}.

\bibitem[Lloyd(1996)]{lloyd1996universal}
Seth Lloyd.
\newblock Universal quantum simulators.
\newblock \emph{Science}, pages 1073--1078, 1996.
\newblock \doi{10.1126/science.273.5278.1073}.

\bibitem[Low and Chuang(2019)]{low2019hamiltonian}
Guang~Hao Low and Isaac~L Chuang.
\newblock Hamiltonian simulation by qubitization.
\newblock \emph{Quantum}, 3:\penalty0 163, 2019.
\newblock \doi{10.22331/q-2019-07-12-163}.

\bibitem[Macridin et~al.(2018)Macridin, Spentzouris, Amundson, and
  Harnik]{Macridin:2018gdw}
Alexandru Macridin, Panagiotis Spentzouris, James Amundson, and Roni Harnik.
\newblock {Electron-Phonon Systems on a Universal Quantum Computer}.
\newblock \emph{Phys. Rev. Lett.}, 121\penalty0 (11):\penalty0 110504, 2018.
\newblock \doi{10.1103/PhysRevLett.121.110504}.

\bibitem[Magnifico et~al.(2019)Magnifico, Vodola, Ercolessi, Kumar, M\"uller,
  and Bermudez]{magnifico2019z}
G.~Magnifico, D.~Vodola, E.~Ercolessi, S.~P. Kumar, M.~M\"uller, and
  A.~Bermudez.
\newblock ${{{\mathbb{Z}}_{N}}}$ gauge theories coupled to topological
  fermions: ${{{\mathrm{QED}}}}_{2}$ with a quantum mechanical
  $\ensuremath{\theta}$ angle.
\newblock \emph{Phys. Rev. B}, 100:\penalty0 115152, Sep 2019.
\newblock \doi{10.1103/PhysRevB.100.115152}.

\bibitem[Martinez et~al.(2016)Martinez, Muschik, Schindler, Nigg, Erhard, Heyl,
  Hauke, Dalmonte, Monz, Zoller, and
  Blatt]{martinez.muschik.eaRealtimeDynamics16}
Esteban~A. Martinez, Christine~A. Muschik, Philipp Schindler, Daniel Nigg,
  Alexander Erhard, Markus Heyl, Philipp Hauke, Marcello Dalmonte, Thomas Monz,
  Peter Zoller, and Rainer Blatt.
\newblock {{Real-Time Dynamics of Lattice Gauge Theories with a Few-Qubit
  Quantum Computer}}.
\newblock \emph{Nature}, 534\penalty0 (7608):\penalty0 516--519, Jun 2016.
\newblock ISSN 1476-4687.
\newblock \doi{10.1038/nature18318}.

\bibitem[Mezzacapo et~al.(2015)Mezzacapo, Rico, Sab{\'i}n, Egusquiza, Lamata,
  and Solano]{mezzacapo.rico.eaNonAbelianSU15}
A.~Mezzacapo, E.~Rico, C.~Sab{\'i}n, I.~L. Egusquiza, L.~Lamata, and E.~Solano.
\newblock Non-{{Abelian SU}}(2) {{Lattice Gauge Theories}} in {{Superconducting
  Circuits}}.
\newblock \emph{Phys. Rev. Lett.}, 115\penalty0 (24):\penalty0 240502, Dec
  2015.
\newblock \doi{10.1103/PhysRevLett.115.240502}.

\bibitem[Muschik et~al.(2017)Muschik, Heyl, Martinez, Monz, Schindler, Vogell,
  Dalmonte, Hauke, Blatt, and Zoller]{muschik.heyl.eaWilsonLattice17}
Christine Muschik, Markus Heyl, Esteban Martinez, Thomas Monz, Philipp
  Schindler, Berit Vogell, Marcello Dalmonte, Philipp Hauke, Rainer Blatt, and
  Peter Zoller.
\newblock U(1) {{Wilson}} lattice gauge theories in digital quantum simulators.
\newblock \emph{New J. Phys.}, 19\penalty0 (10):\penalty0 103020, 2017.
\newblock ISSN 1367-2630.
\newblock \doi{10.1088/1367-2630/aa89ab}.

\bibitem[Nielsen and Chuang(2002)]{nielsen2002quantum}
Michael~A Nielsen and Isaac Chuang.
\newblock \emph{Quantum computation and quantum information}.
\newblock AAPT, 2002.
\newblock \doi{10.1119/1.1463744}.

\bibitem[{NPLQCD Collaboration} et~al.(2015){NPLQCD Collaboration}, Beane,
  Chang, Detmold, Orginos, Parre{\~n}o, Savage, and
  Tiburzi]{nplqcdcollaboration.beane.eaInitioCalculation15}
{NPLQCD Collaboration}, Silas~R. Beane, Emmanuel Chang, William Detmold, Kostas
  Orginos, Assumpta Parre{\~n}o, Martin~J. Savage, and Brian~C. Tiburzi.
\newblock {{Ab Initio Calculation}} of the
  $\mathrm{np}\rightarrow\mathrm{d}\gamma$ {{Radiative Capture Process}}.
\newblock \emph{Phys. Rev. Lett.}, 115\penalty0 (13):\penalty0 132001, Sep
  2015.
\newblock \doi{10.1103/PhysRevLett.115.132001}.

\bibitem[{NPLQCD Collaboration} et~al.(2017){NPLQCD Collaboration}, Savage,
  Shanahan, Tiburzi, Wagman, Winter, Beane, Chang, Davoudi, Detmold, and
  Orginos]{nplqcdcollaboration.savage.eaProtonProtonFusion17}
{NPLQCD Collaboration}, Martin~J. Savage, Phiala~E. Shanahan, Brian~C. Tiburzi,
  Michael~L. Wagman, Frank Winter, Silas~R. Beane, Emmanuel Chang, Zohreh
  Davoudi, William Detmold, and Kostas Orginos.
\newblock Proton-{{Proton Fusion}} and {{Tritium}} $\beta$ {{Decay}} from
  {{Lattice Quantum Chromodynamics}}.
\newblock \emph{Phys. Rev. Lett.}, 119\penalty0 (6):\penalty0 062002, Aug 2017.
\newblock \doi{10.1103/PhysRevLett.119.062002}.

\bibitem[{NuQS Collaboration} et~al.(2019){NuQS Collaboration}, Lamm, Lawrence,
  and Yamauchi]{nuqscollaboration.lamm.eaGeneralMethods19}
{NuQS Collaboration}, Henry Lamm, Scott Lawrence, and Yukari Yamauchi.
\newblock General methods for digital quantum simulation of gauge theories.
\newblock \emph{Phys. Rev. D}, 100\penalty0 (3):\penalty0 034518, Aug 2019.
\newblock \doi{10.1103/PhysRevD.100.034518}.

\bibitem[Ozols et~al.(2013)Ozols, Roetteler, and Roland]{ozols2013quantum}
Maris Ozols, Martin Roetteler, and J{\'e}r{\'e}mie Roland.
\newblock Quantum rejection sampling.
\newblock \emph{ACM Transactions on Computation Theory (TOCT)}, 5\penalty0
  (3):\penalty0 1--33, 2013.
\newblock \doi{10.1145/2493252.2493256}.

\bibitem[Raychowdhury and Stryker(2020)]{raychowdhury.strykerLoopString19}
Indrakshi Raychowdhury and Jesse~R Stryker.
\newblock Loop, string, and hadron dynamics in \text{SU}(2) \text{Hamiltonian}
  lattice gauge theories.
\newblock \emph{Physical Review D}, 101\penalty0 (11):\penalty0 114502, 2020.
\newblock \doi{10.1103/PhysRevD.101.114502}.

\bibitem[Reiher et~al.(2017)Reiher, Wiebe, Svore, Wecker, and
  Troyer]{reiher2017elucidating}
Markus Reiher, Nathan Wiebe, Krysta~M Svore, Dave Wecker, and Matthias Troyer.
\newblock Elucidating reaction mechanisms on quantum computers.
\newblock \emph{Proceedings of the National Academy of Sciences}, 114\penalty0
  (29):\penalty0 7555--7560, 2017.
\newblock \doi{10.1073/pnas.1619152114}.

\bibitem[Rico et~al.(2014)Rico, Pichler, Dalmonte, Zoller, and
  Montangero]{rico.pichler.eaTensorNetworks14}
E.~Rico, T.~Pichler, M.~Dalmonte, P.~Zoller, and S.~Montangero.
\newblock Tensor {{Networks}} for {{Lattice Gauge Theories}} and {{Atomic
  Quantum Simulation}}.
\newblock \emph{Phys. Rev. Lett.}, 112\penalty0 (20):\penalty0 201601, May
  2014.
\newblock \doi{10.1103/PhysRevLett.112.201601}.

\bibitem[Schweizer et~al.(2019)Schweizer, Grusdt, Berngruber, Barbiero, Demler,
  Goldman, Bloch, and Aidelsburger]{schweizer.grusdt.eaFloquetApproach19}
Christian Schweizer, Fabian Grusdt, Moritz Berngruber, Luca Barbiero, Eugene
  Demler, Nathan Goldman, Immanuel Bloch, and Monika Aidelsburger.
\newblock {{Floquet Approach to {{$\mathbb{Z}_2$}} Lattice Gauge Theories with
  Ultracold Atoms in Optical Lattices}}.
\newblock \emph{Nat. Phys.}, 15\penalty0 (11):\penalty0 1168--1173, Nov 2019.
\newblock ISSN 1745-2481.
\newblock \doi{10.1038/s41567-019-0649-7}.

\bibitem[Schwinger(1962)]{schwinger1962model}
Julian Schwinger.
\newblock Gauge invariance and mass. ii.
\newblock \emph{Phys. Rev.}, 128:\penalty0 2425--2429, Dec 1962.
\newblock \doi{10.1103/PhysRev.128.2425}.

\bibitem[Somma(2015)]{rol2015quantum}
Rolando~D. Somma.
\newblock Quantum simulations of one dimensional quantum systems.
\newblock \emph{arXiv:1503.06319}, 2015.

\bibitem[Somma(2016)]{somma2016trotter}
Rolando~D Somma.
\newblock {{A Trotter-Suzuki approximation for Lie groups with applications to
  Hamiltonian simulation}}.
\newblock \emph{Journal of Mathematical Physics}, 57\penalty0 (6):\penalty0
  062202, 2016.
\newblock \doi{10.1063/1.4952761}.

\bibitem[Suzuki(1991)]{doi:10.1063/1.529425}
Masuo Suzuki.
\newblock General theory of fractal path integrals with applications to
  many-body theories and statistical physics.
\newblock \emph{Journal of Mathematical Physics}, 32\penalty0 (2):\penalty0
  400--407, 1991.
\newblock \doi{10.1063/1.529425}.

\bibitem[Svore et~al.(2014)Svore, Hastings, and Freedman]{svore2013faster}
Krysta~M. Svore, Matthew~B. Hastings, and Michael Freedman.
\newblock Faster phase estimation.
\newblock \emph{Quantum Info. Comput.}, 14\penalty0 (3-4):\penalty0 306--328,
  March 2014.
\newblock ISSN 1533-7146.

\bibitem[Tagliacozzo et~al.(2013)Tagliacozzo, Celi, Orland, Mitchell, and
  Lewenstein]{tagliacozzo.celi.eaSimulationNonAbelian13}
L.~Tagliacozzo, A.~Celi, P.~Orland, M.~W. Mitchell, and M.~Lewenstein.
\newblock Simulation of non-{{Abelian}} gauge theories with optical lattices.
\newblock \emph{Nat. Commun.}, 4:\penalty0 2615, Oct 2013.
\newblock ISSN 2041-1723.
\newblock \doi{10.1038/ncomms3615}.

\bibitem[Watrous(2018)]{watrous}
John Watrous.
\newblock \emph{The theory of quantum information}.
\newblock Cambridge University Press, 2018.
\newblock \doi{10.1017/9781316848142}.

\bibitem[Wecker et~al.(2014)Wecker, Bauer, Clark, Hastings, and
  Troyer]{wecker2014gate}
Dave Wecker, Bela Bauer, Bryan~K Clark, Matthew~B Hastings, and Matthias
  Troyer.
\newblock Gate-count estimates for performing quantum chemistry on small
  quantum computers.
\newblock \emph{Physical Review A}, 90\penalty0 (2):\penalty0 022305, 2014.
\newblock \doi{10.1103/PhysRevA.90.022305}.

\bibitem[Wecker et~al.(2015)Wecker, Hastings, Wiebe, Clark, Nayak, and
  Troyer]{wecker2015solving}
Dave Wecker, Matthew~B Hastings, Nathan Wiebe, Bryan~K Clark, Chetan Nayak, and
  Matthias Troyer.
\newblock Solving strongly correlated electron models on a quantum computer.
\newblock \emph{Physical Review A}, 92\penalty0 (6):\penalty0 062318, 2015.
\newblock \doi{10.1103/PhysRevA.92.062318}.

\bibitem[Wiebe and Granade(2016)]{wiebe2016efficient}
Nathan Wiebe and Chris Granade.
\newblock Efficient bayesian phase estimation.
\newblock \emph{Physical review letters}, 117\penalty0 (1):\penalty0 010503,
  2016.
\newblock \doi{10.1103/PhysRevLett.117.010503}.

\bibitem[Wiebe and Roetteler(2016)]{wiebe2016quantum}
Nathan Wiebe and Martin Roetteler.
\newblock Quantum arithmetic and numerical analysis using repeat-until-success
  circuits.
\newblock \emph{Quantum Information \& Computation}, 16\penalty0
  (1-2):\penalty0 134--178, 2016.

\bibitem[Wiebe et~al.(2010)Wiebe, Berry, H{\o}yer, and
  Sanders]{wiebe2010higher}
Nathan Wiebe, Dominic Berry, Peter H{\o}yer, and Barry~C Sanders.
\newblock Higher order decompositions of ordered operator exponentials.
\newblock \emph{Journal of Physics A: Mathematical and Theoretical},
  43\penalty0 (6):\penalty0 065203, 2010.
\newblock \doi{10.1088/1751-8113/43/6/065203}.

\bibitem[Wiese(2013)]{wieseUltracoldQuantum13}
U.-J. Wiese.
\newblock Ultracold quantum gases and lattice systems: Quantum simulation of
  lattice gauge theories.
\newblock \emph{Ann. Phys.}, 525\penalty0 (10-11):\penalty0 777--796, 2013.
\newblock ISSN 1521-3889.
\newblock \doi{10.1002/andp.201300104}.

\bibitem[Wiese(2014)]{wieseQuantumSimulating14}
Uwe-Jens Wiese.
\newblock Towards quantum simulating {{QCD}}.
\newblock \emph{Nucl. Phys. A}, 931:\penalty0 246--256, Nov 2014.
\newblock ISSN 0375-9474.
\newblock \doi{10.1016/j.nuclphysa.2014.09.102}.

\bibitem[Wilson(1974)]{PhysRevD.10.2445}
Kenneth~G. Wilson.
\newblock Confinement of quarks.
\newblock \emph{Phys. Rev. D}, 10:\penalty0 2445--2459, Oct 1974.
\newblock \doi{10.1103/PhysRevD.10.2445}.

\bibitem[Yamazaki et~al.(2010)Yamazaki, Kuramashi, and Ukawa]{Yamazaki:2009ua}
T.~Yamazaki, Y.~Kuramashi, and A.~Ukawa.
\newblock {Helium Nuclei in Quenched Lattice QCD}.
\newblock \emph{Phys. Rev. D}, 81:\penalty0 111504, 2010.
\newblock \doi{10.1103/PhysRevD.81.111504}.

\bibitem[Yamazaki et~al.(2012)Yamazaki, Ishikawa, Kuramashi, and
  Ukawa]{Yamazaki:2012hi}
Takeshi Yamazaki, Ken-ichi Ishikawa, Yoshinobu Kuramashi, and Akira Ukawa.
\newblock {Helium nuclei, deuteron and dineutron in 2+1 flavor lattice QCD}.
\newblock \emph{Phys. Rev. D}, 86:\penalty0 074514, 2012.
\newblock \doi{10.1103/PhysRevD.86.074514}.

\bibitem[Yamazaki et~al.(2015)Yamazaki, Ishikawa, Kuramashi, and
  Ukawa]{Yamazaki:2015asa}
Takeshi Yamazaki, Ken-ichi Ishikawa, Yoshinobu Kuramashi, and Akira Ukawa.
\newblock {Study of quark mass dependence of binding energy for light nuclei in
  2+1 flavor lattice QCD}.
\newblock \emph{Phys. Rev. D}, 92\penalty0 (1):\penalty0 014501, 2015.
\newblock \doi{10.1103/PhysRevD.92.014501}.

\bibitem[Yoder et~al.(2014)Yoder, Low, and Chuang]{yoder2014fixed}
Theodore~J Yoder, Guang~Hao Low, and Isaac~L Chuang.
\newblock Fixed-point quantum search with an optimal number of queries.
\newblock \emph{Physical review letters}, 113\penalty0 (21):\penalty0 210501,
  2014.
\newblock \doi{10.1103/PhysRevLett.113.210501}.

\bibitem[Zalka(1998)]{zalka1998simulating}
Christof Zalka.
\newblock Simulating quantum systems on a quantum computer.
\newblock \emph{Proceedings of the Royal Society of London. Series A:
  Mathematical, Physical and Engineering Sciences}, 454\penalty0
  (1969):\penalty0 313--322, 1998.
\newblock \doi{10.1098/rspa.1998.0162}.

\bibitem[Zohar et~al.(2013)Zohar, Cirac, and
  Reznik]{zohar.cirac.eaColdAtomQuantum13}
Erez Zohar, J.~Ignacio Cirac, and Benni Reznik.
\newblock Cold-{{Atom Quantum Simulator}} for {{SU}}(2) {{Yang}}-{{Mills
  Lattice Gauge Theory}}.
\newblock \emph{Phys. Rev. Lett.}, 110\penalty0 (12):\penalty0 125304, Mar
  2013.
\newblock \doi{10.1103/PhysRevLett.110.125304}.

\bibitem[Zohar et~al.(2017)Zohar, Farace, Reznik, and
  Cirac]{zohar.farace.eaDigitalQuantum17}
Erez Zohar, Alessandro Farace, Benni Reznik, and J.~Ignacio Cirac.
\newblock Digital {{Quantum Simulation}} of $\mathbb{{Z}}_2$ {{Lattice Gauge
  Theories}} with {{Dynamical Fermionic Matter}}.
\newblock \emph{Phys. Rev. Lett.}, 118\penalty0 (7):\penalty0 070501, Feb 2017.
\newblock \doi{10.1103/PhysRevLett.118.070501}.

\end{thebibliography}

\appendix
\section{Computing Commutators for Second-Order Error Bound}
\label{append:commutators}

With the ordering of \eq{sumorder}, through examining cases and applying the mentioned simplifications it is straightforward to show the following implication of \eq{seconderror}.
To aide in understanding, the large parentheses indicate which of the two sums we are plugging our ordering into.
Recall that $T_r^{(i)}$ are given in terms of the block-diagonalized matrices $A_r$, $\tilde{A}_r$, $B$, and $\tilde{B}$ in \sec{kineticimplementRC}.
And for $r \geq N$, $T_r^{(j)} = 0$ while $D_N = D_N^{(M)}$ and $D_r = 0$ for $r \geq N+1$.
To avoid superfluous case analysis at the lattice boundary, we will ignore these boundary conditions and overcount slightly when $r = N-2, N-1$ or $N$.
\begin{align}
    \delta_{\rm Trot} &\leq  t^3 \sum_{r=1}^{N} \biggl ( \frac{1}{12}\biggl (  \sum_{i=1}^4 \| [[D_r,T_r^{(i)}], D_r] \|   +  \sum_{i=1}^4 \| [[T_r^{(i)}, D_{r+1}], T_r^{(i)}] \|  \nonumber \\
    &\quad + \sum_{i=1}^4 \sum_{j>i}^4 \| [[T_r^{(i)}, T_{r}^{(j)}], T_r^{(i)}] \|  + \sum_{i=1}^4 \sum_{j=1}^4 \| [[T_r^{(i)}, T_{r+1}^{(j)}], T_r^{(i)}] \|  \biggr ) \nonumber \\  
    &\quad + \frac{1}{24}\biggl ( \sum_{i=1}^4 \sum_{j=1}^4\| [[D_r,T_r^{(i)}],T_{r}^{(j)}] \|  + \sum_{i=1}^4  \| [[D_r,T_r^{(i)}],D_{r+1}] \|  \nonumber \\
    &\quad + \sum_{i=1}^4 \sum_{j=1}^{4} \| [[D_r,T_r^{(i)}], T_{r+1}^{(j)}] \|  + \sum_{i=1}^4 \sum_{j>i}^4 \sum_{k>i}^4 \| [[T_r^{(i)},T_r^{(j)}],T_r^{(k)}] \| \nonumber \\
    &\quad + \sum_{i=1}^4 \sum_{j>i}^4 \| [[T_r^{(i)},T_r^{(j)}],D_{r+1}] \|  + \sum_{i=1}^4 \sum_{j>i}^4 \sum_{k=1}^4 \| [[T_r^{(i)},T_r^{(j)}],T_{r+1}^{(k)}] \| \nonumber \\
    &\quad +  \sum_{i=1}^4 \sum_{j>i}^4 \| [[T_r^{(i)},D_{r+1}],T_{r}^{(j)}] \|  +   \sum_{i=1}^4 \| [[T_r^{(i)},D_{r+1}],D_{r+1}]\|   \nonumber \\
    &\quad +  \sum_{i=1}^4 \sum_{j=1}^4 \| [[T_r^{(i)},D_{r+1}],T_{r+1}^{(j)}] \| + \sum_{i=1}^4 \sum_{j=1}^4 \sum_{k>i}^4 \| [[T_r^{(i)},T_{r+1}^{(j)}],T_r^{(k)}] \|  \nonumber \\
    &\quad + \sum_{i=1}^4 \sum_{j=1}^4 \| [[T_r^{(i)},T_{r+1}^{(j)}],D_{r+1}] \| + \sum_{i=1}^4 \sum_{j=1}^4 \sum_{k=1}^4 \| [[T_r^{(i)},T_{r+1}^{(j)}],T_{r+1}^{(k)}] \| \nonumber \\
    &\quad + \sum_{i=1}^4 \sum_{j=1}^4 \| [[T_r^{(i)},T_{r+1}^{(j)}],D_{r+2}] \|+ \sum_{i=1}^4 \sum_{j=1}^4 \sum_{k=1}^4 \| [[T_r^{(i)},T_{r+1}^{(j)}],T_{r+2}^{(k)}] \| \biggr ) \biggr ).
    \label{eq:bigcombound}
\end{align}

We bound each commutator using the lemma below.
Regarding influence in scaling, the major culprit is the first term in the sum above, for within it alone appear terms like $\| [[D_r^{(E)}, T_r^{(1)}], D_r^{(E)}] \| = \| [[E^2, x (A\otimes G) /4],E^2] \|$, which may be crudely bounded by $x\|E^2\|^2 \|A\otimes G\| = 2 x\Lambda^4$ through matrix norm inequalities.
The lemma below includes a reduction of this bound to one quadratic in $\Lambda$.
Furthermore, it gives a bound on each commutator norm above.

\begin{lemma}
Let $T_r^{(i)}$ and $D_r$ be defined as in Section \ref{sec:faulttolerant}, with $\| \cdot \|$ the spectral norm.
The following statements are true for all lattice sites $r,s,t$ and indices $i,j, k$,
\begin{enumerate}
    \item $\| [[D_r, T_r^{(i)}], D_r] \| \leq  x(2\Lambda^2+ (2+2\mu) \Lambda +\frac{\mu^2}{2}+\mu+ \frac{1}{2}) $
    \item If $s\neq r$, then  $\| [[T_r^{(i)},D_s], T_t^{(j)}] \| \leq \frac{x^2\mu}{2} $
    \item If $s\neq r, t$, then $\| [[T_r^{(i)}, T_t^{(j)}], D_s] \| \leq \frac{x^2 \mu}{2}$ 
    \item $\| [[T_r^{(i)}, T_{s}^{(j)}], T_t^{(k)}] \| \leq \frac{x^3}{2}$ 
    \item $\| [[D_r,T_r^{(i)}],T_{s}^{(j)}] \| \leq x^2(\frac{\mu}{2} + \Lambda + \frac{1}{2}) $
    \item $\| [[D_r,T_r^{(i)}],D_{r+1}] \| \leq x\mu(\frac{\mu}{2} + \Lambda+\frac{1}{2})$
    \item $\| [[T_r^{(i)},D_{r+1}],D_{r+1}]\| \leq \frac{x\mu^2}{2} $
    \item $\| [[T_r^{(i)},T_{r+1}^{(j)}],D_{r+1}] \|  \leq x^2 (\frac{\mu}{2} + \Lambda +\frac{1}{2})$
\end{enumerate}
\label{lem:combound}
\end{lemma}
\begin{proof}
First note that from the sub-multiplicative property of the spectral norm and the triangle inequality we have that
\begin{equation}
\label{eq:roughbound}
    \| [[A,B],C] \| \leq 4\|A\| \|B\| \|C\|,
\end{equation}
which will be key to demonstrating many of the above claims.
\begin{enumerate}

    \item Initially, we have
    \begin{equation}\label{eq:lemsum1}
         \|[[D_r,T_r^{(j)}],D_r] \| \leq \sum_{i,k \in \{ M, E \} } \| [[D_r^{(i)}, T_r^{(j)}],D_r^{(k)} ]\|.
    \end{equation}
    We will first focus on the case where $i=k=2$, i.e. bounding $\| [[E_r^2, T_r^{(j)}], E_r^2] \|$. We determine this commutator bound by understanding how the commutator acts on a basis state $\ket{n}\otimes \ket{m}$ where $\ket{n} \in \mathcal{H}_g$, where $\mathcal{H}_g$ is the gauge Hilbert space, and $\ket{m}\in \mathcal{H}_f$, where $\mathcal{H}_f$ is the fermion Hilbert space, corresponding to the $r$th lattice site.
    
    \indent
    First, see that the operation of $T_r^{(j)}$ on a computational basis state in the fermionic space will produce a multiple of another basis state for all $j$. Explicitly, $T_r^{(j)}$ takes $\Ket{n} \to c\ket{n \pm 1}$, where the arithmetic runs $-\Lambda \to \Lambda - 1$. The sign and $c$ depend on $j$, and are ultimately inconsequential in bounding the spectral norm. This can be quickly verified using the form of $A,\ \widetilde{A},\ B$ and $\widetilde{B}$.
    
    Therefore, we have that
    \begin{align}
        [E_r^2 , T_r^{(j)}] \Ket{n}\otimes \Ket{m} &= (E_r^2 - n^2) T_r^{(j)} \Ket{n}\otimes \Ket{m} \nonumber \\
        &\to ((n\pm 1)^2 - n^2) T_r^{(j)} \Ket{n}\otimes \Ket{m} \nonumber \\
        &= (\pm 2n +1) T_r^{(j)} \Ket{n}\otimes \Ket{m},
    \end{align}
    implying that $\|[E_r^2 , T_r^{(j)}]\| \leq (2\Lambda + 1)\|T_r^{(j)}\| \leq x(\Lambda + 1/2)$, since $\|T_r^{(j)}\| = x/2$. (Note that this holds for all $n$, even at the cutoffs $-\Lambda$ and $\Lambda - 1$.)

We similarly have that
\begin{align}
    [[E_r^2, T_r^{(j)}],E_r^2] \Ket{n} \otimes \Ket{m} &= ( n^2 - E_r^2) [E_r^2, T_r^{(j)}] \Ket{n} \otimes \Ket{m} \nonumber \\
    &\to (n^2 - (n\pm 1)^2) [E_r^2, T_r^{(j)}] \Ket{n} \otimes \Ket{m} \nonumber \\
    &= (\mp 2n -1)[E_r^2, T_r^{(j)}] \Ket{n} \otimes \Ket{m}.
\end{align}
This implies
\begin{equation}
    \| [[E_r^2, T_r^{(j)}], E_r^2] \| \leq (2\Lambda +1) \|[E_r^2, T_r^{(j)}]\| \leq 2x(\Lambda + \frac{1}{2})^2 = 2x(\Lambda^2 + \Lambda + \frac{1}{4}).
\end{equation}
    
    Next, bounding the term in the sum \eq{lemsum1} with $i=1$ and $k=2$,
    \begin{align}
    [[\frac{\mu}{2} Z_r, T_r^{(j)}],E^2_r]  \Ket{n} \otimes \ket{m} &= (n^2 - E^2) [\frac{\mu}{2} Z_r, T_r^{(j)}] \Ket{n}\otimes\Ket{m} \nonumber \\
    &\to (n^2 - (n\pm 1)^2) [\frac{\mu}{2} Z_r, T_r^{(j)}] \Ket{n}\otimes\Ket{m} \nonumber \\
    &= (\mp 2n -1) [\frac{\mu}{2} Z_r, T_r^{(j)}] \Ket{n}\otimes\Ket{m},
    \end{align}
    implying that $\|[[\mu Z_r/2, T_r^{(j)}],E_r^2]\| \leq x\mu (\Lambda+1/2)$. Therefore, we get
    \begin{align}
         \|[[D_r,T_r^{(j)}],D_r] \| &\leq \| [[E_r^2, T_r^{(j)}], E_r^2] \| + \| [[ \frac{\mu}{2} Z_r, T_r^{(j)}],\frac{\mu}{2} Z_r] \| \nonumber \\
         &\quad + \| [[E^2,T_r^{(j)}], \frac{\mu}{2} Z_r] \| + \| [[\frac{\mu}{2} Z_r ,T_r^{(j)}], E^2] \| \nonumber \\
         &\leq 2x(\Lambda^2 + \Lambda + \frac{1}{4}) + \frac{1}{2}x\mu^2 + x\mu(\Lambda + \frac{1}{2}) + x\mu(\Lambda+\frac{1}{2}) \nonumber \\
         &\leq x(2\Lambda^2+ (2+2\mu) \Lambda +\frac{\mu^2}{2}+\mu+ \frac{1}{2}).
    \end{align}

    \item The assumption ensures that the $D_{s}^{(E)}$ summand of $D_{s}$ will disappear in the internal commutator. Thus applying \eq{roughbound} gives
    \begin{equation}
       \| [[T_r^{(i)},D_s] T_t^{(j)}] \| \leq 4\| T_r^{(i)}\| \|\mu Z_{s}/2\| \| T_t^{(j)} \| = x^2\mu/2.
    \end{equation}
    \item The assumption ensures that the $D_{s}^{(E)}$ summand of $D_{s}$ will disappear in the external commutator, giving the same bound as No. 2.
    
    \item Using \eq{roughbound} and the fact that $\| T_r^{(j)} \| = x/2$ for all $i, j$, the result is immediate.

    \item The proof of No. 1 showed that $\| [D_r^{(E)}, T_r^{(j)}] \| \leq  x(\Lambda + 1/2 )$, and so
    \begin{equation}
        \| [[D_r,T_r^{(j)}],T_{s}^{(k)}] \| \leq 2 (\|[D_r^{(M)}, T_r^{(j)}] \| + \|[D_r^{(E)}, T_r^{(j)}] \|)\|T_s^{(k)}\| \leq x^2(\frac{\mu}{2} + \Lambda + \frac{1}{2}).
    \end{equation}
    \item The $D_{r+1}^{(E)}$ term commutes, so we may ignore them and apply similar reasoning to that of No. 5:
    \begin{align}
        \| [[D_r,T_r^{(j)}],D_{r+1}] \| &\leq  \| [[\frac{\mu}{2} Z_r ,T_r^{(j)}], \frac{\mu}{2} Z_{r+1}] \| +\| [[ D_r^{(E)} ,T_r^{(j)}], \frac{\mu}{2} Z_{r+1}] \| \nonumber \\
        &\leq x(\frac{\mu^2}{2} + \mu(\Lambda+\frac{1}{2})).
    \end{align}

    \item The first $D_{r+1}^{(E)}$ term commutes, giving
    \begin{align}
        \| [[T_r^{(j)},D_{r+1}],D_{r+1}]\| &= \| [[T_r^{(j)},D_{r+1}^{(M)}],D_{r+1}]\| \\ \nonumber
        &\leq \| [[T_r^{(j)},\frac{\mu}{2} Z_{r+1}],\frac{\mu}{2} Z_{r+1}]\| + \| [[T_r^{(j)},\frac{\mu}{2} Z_{r+1}], E^2_{r+1}]\|.
    \end{align}
Since $T_r^{(j)}$ does nothing to the gauge register on the $(r+1)$th link, we have that
    \begin{align}
        [[T_r^{(j)},\frac{\mu}{2} Z_{r+1}], E^2_{r+1}] \Ket{n}_{r+1} \Ket{m}_{r+1} &= (E^2_{r+1} - n^2)[T_r^{(j)},\frac{\mu}{2} Z_{r+1}] \Ket{n}_{r+1} \Ket{m}_{r+1} \nonumber \\
        &= (n^2 - n^2)[T_r^{(j)},\frac{\mu}{2} Z_{r+1}] \Ket{n}_{r+1} \Ket{m}_{r+1} = 0,
    \end{align}
    implying that 
    \begin{equation}
         \| [[T_r^{(j)},D_{r+1}],D_{r+1}]\| \leq \| [[T_r^{(j)},\frac{\mu}{2} Z_{r+1}],\frac{\mu}{2} Z_{r+1}]\| \leq x\mu^2/2.
    \end{equation}

    \item Using \eq{roughbound} and prior reasoning,
    \begin{align}
        [[T_r^{(i)},T_{r+1}^{(j)}],E^2_{r+1}] \Ket{n}_{r+1} \Ket{m}_{r+1} &= (n^2 - E^2_{r+1}) [T_r^{(i)},T_{r+1}^{(j)}] \Ket{n}_{r+1} \Ket{m}_{r+1} \\ \nonumber 
        &\to (n^2 - (n\pm 1)^2) [T_r^{(i)},T_{r+1}^{(j)}] \Ket{n}_{r+1} \Ket{m}_{r+1} \\ \nonumber 
        &= (\mp 2n -  1) [T_r^{(i)},T_{r+1}^{(j)}] \Ket{n}_{r+1} \Ket{m}_{r+1} 
    \end{align}
    implying that
    \begin{align}
        \| [[T_r^{(i)},T_{r+1}^{(j)}],D_{r+1}] \| &\leq \| [[T_r^{(i)},T_{r+1}^{(j)}],E^2_{r+1}] \| + \| [[T_r^{(i)},T_{r+1}^{(j)}],\frac{\mu}{2} Z_{r+1}] \| \\ \nonumber 
        &\leq x^2 (\Lambda +\frac{1}{2}) + \frac{x^2 \mu}{2}.
    \end{align}

    \end{enumerate}
\end{proof}

We compile all these bounds to evaluate \eq{bigcombound} in the following Corollary,

\begin{corollary}
\label{cor:errboundcalced}
Let $t \in \mathbb{R}$, $V(t)$ be defined as in \eq{timestep}, and $H$ be the Schwinger model Hamiltonian as defined in \sec{hamrep}.
If  $\delta_{\rm Trot} = \| V(t)-e^{-iHt}\|$ (where $\|\cdot \|$ is the spectral norm), then
\begin{equation}
    \delta_{\rm Trot} \leq  Nt^3 \biggl( \frac{2}{3}x \Lambda^2 + \biggl(2x^2 + \frac{5}{6}x\mu + \frac{2}{3}x \biggr) \Lambda + \frac{39}{8}x^3 + \frac{25}{12}x^2 \mu + x^2 + \frac{1}{3}x \mu^2 + \frac{5}{12}x\mu + \frac{1}{6}x \biggr) 
\end{equation}
\end{corollary}
\begin{proof}
We plug in the results of Lemma \ref{lem:combound} to Equation \ref{eq:bigcombound}, and the case of the lemma that each summand satisfies is quick to verify from the summands' definitions.
We get that
\begin{alignat}
    \epsilon \delta_{\rm Trot} \leq  N t^3\biggl (& \frac{1}{12}\biggl( 4 x(2\Lambda^2+ (2+2\mu) \Lambda +\frac{\mu^2}{2}+\mu+ \frac{1}{2}) &&+ 4\frac{x^2\mu}{2} \nonumber\\
    &+ 6x^3/2 &&+ 16x^3/2 \biggr ) \nonumber \\ 
    &+ \frac{1}{24}\biggl(16x^2(\frac{\mu}{2} + \Lambda + \frac{1}{2}) &&+ 4x\mu(\frac{\mu}{2} + \Lambda+\frac{1}{2}) \nonumber\\
    &+ 16x^2(\frac{\mu}{2} + \Lambda + \frac{1}{2}) &&+ 14x^3/2 \nonumber \\ 
    &+ 6 \frac{x^2\mu}{2} &&+ 24x^3/2 \nonumber\\
     &+ 6\frac{x^2\mu}{2} &&+ 4\frac{x\mu^2}{2} \nonumber\\
    &+ 16\frac{x^2\mu}{2} &&+ 24x^3/2 \nonumber \\
    &+ 16x^2 (\frac{\mu}{2} + \Lambda +\frac{1}{2}) &&+ 64x^3/2 \nonumber\\
    &+ 16\frac{x^2\mu}{2} &&+64x^3/2 \biggr )  \biggr )\\
    &= Nt^3 \biggl( \frac{2}{3}x \Lambda^2 + \biggl(2x^2 + \frac{5}{6}x\mu + \frac{2}{3}x \biggr) \Lambda + \frac{39}{8}&&x^3 + \frac{25}{12}x^2 \mu + x^2 + \frac{1}{3}x \mu^2 + \frac{5}{12}x\mu + \frac{1}{6}x \biggr)
\end{alignat}

\end{proof}

\section{Numerical Evaluation and Analysis of $T$-Count Upper Bounds}\label{append:costnumeric}
This appendix reports the upper bound of ``Sampling," \thm{onedtrotresults}, and the (unoptimized) upper bound from ``Estimating," \thm{onedtrotresults2}, evaluated over the indicated lattice parameters.
The cost bound of \thm{onedtrotresults} is numerically minimized with respect to $\tau$ and $\kappa$ using \textit{Mathematica}.

We express the time of the following simulation scenarios in terms of $1/x$ since the dimensionless quantity $xT$ describes the coupling that emerges between subsystems in time-dependent perturbation theory.
This allows comparison to the capabilities of classical algorithms, which are anticipated to not have capabilities that extend far into the ``weak coupling'' and ``long time'' regimes.

\begin{table}
\begin{center}
\includegraphics[]{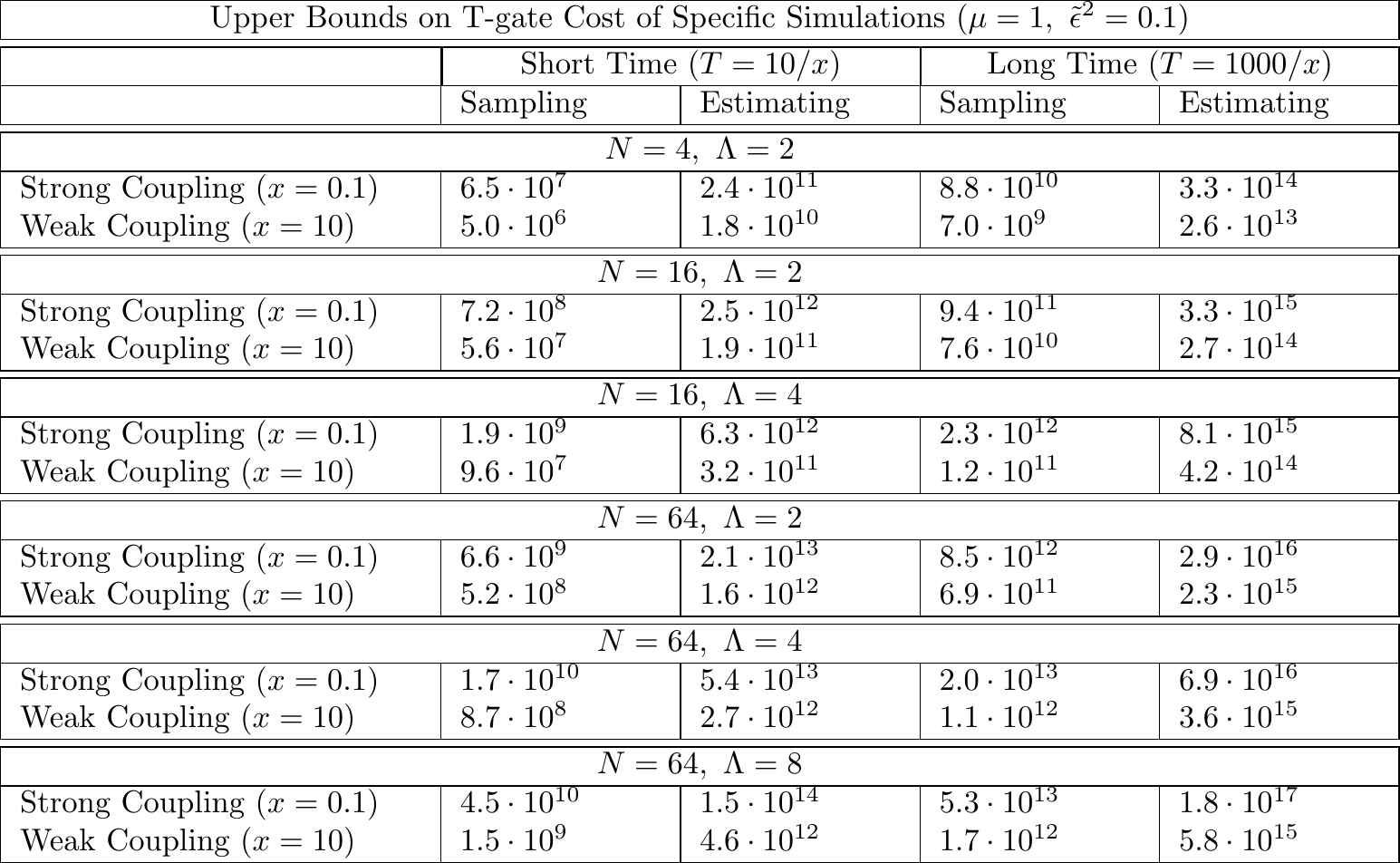}
\end{center}
\caption{The $T$-gate upper bound on the cost of simulating the lattice Schwinger Model using different measurement schemes: ``Sampling," from \thm{onedtrotresults}, and the (unoptimized) upper bound of ``Estimating," from \thm{onedtrotresults2}, evaluated over the indicated lattice parameters.
The cost bound of \thm{onedtrotresults} is numerically minimized with respect to $\tau$ and $\kappa$ using \textit{Mathematica}.}

\end{table}

In speculating where these two upper bounds may cross over, we plot the ratio ``Estimating/Sampling'' of the upper bounds over varied parameters in \fig{ratio}, using the unoptimized bounds for each; Specifically, we are setting $\kappa=\tau=1/2$ in the sampling bound and assuming $N=64$, $\Lambda = 8$, $T=1$, $x=1$, and $\mu=1$, except for the varying parameter.

Note in \fig{ratio} we see that, under our assumptions, the upper bound ratio is consistently greater than one in reasonable parameter ranges. This hints that a simulation using amplitude estimation may be beneficial only at more extreme lattice parameters.
However, since we are considering upper bounds on the cost of simulation, this conclusion is uncertain.

\begin{figure}[t!]
\centering
\includegraphics[]{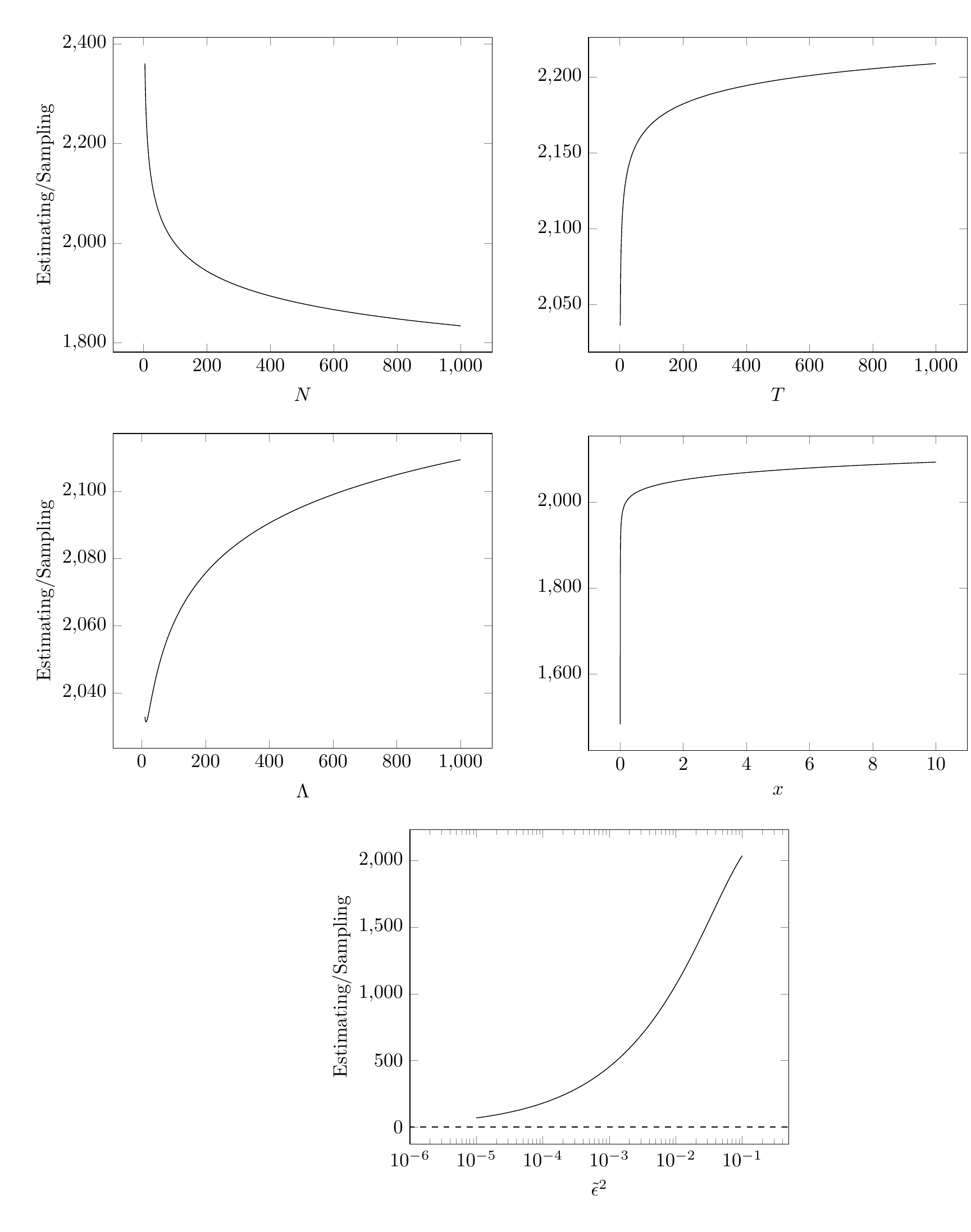}
\caption{Plots of ``Estimating/Sampling" ratio, the ratio of the unoptimized T-gate upper bound of \thm{onedtrotresults} (measurement via amplitude estimation) to the corresponding bound of \thm{onedtrotresults2} (measurement via sampling), when varying given parameters.
We are assuming $N=64$, $\Lambda = 8$, $T=1$, $x=1$, and $\mu=1$, except for the varying parameter.}
\label{fig:ratio}
\end{figure}

\end{document}